\documentclass[12pt,a4paper,twoside]{article}
\usepackage{graphicx,xcolor,textpos}

\graphicspath{{./img/}{./Images/}}

\usepackage{}
\usepackage{helvet}
\usepackage[english]{babel}
\usepackage[normalem]{ulem}
\usepackage{amsmath}
\usepackage{amsthm}
\usepackage{thmtools,thm-restate}
\usepackage{bbm}
\usepackage{amssymb}
\usepackage{hyperref}
\usepackage{relsize}
\usepackage[margin=0.7in]{geometry}
\usepackage{physics}
\usepackage[shortlabels]{enumitem}
\usepackage{mathtools}
\usepackage{changepage}
\usepackage{caption}
\usepackage{subcaption}
\usepackage{verbatim}
\usepackage{url}
\usepackage{standalone}
\usepackage{tikz}
\usetikzlibrary{calc,patterns,angles,quotes}
\usepackage{dsfont}
\usepackage{mathrsfs}
\usepackage[affil-it]{authblk}
\usepackage{titlesec}

\makeatletter
\newcommand\Label[1]{&\refstepcounter{equation}(\theequation)\ltx@label{#1}&}
\makeatother

\newcommand{\stkout}[1]{\ifmmode\text{\sout{\ensuremath{#1}}}\else\sout{#1}\fi}

\let\originalleft\left
\let\originalright\right
\renewcommand{\left}{\mathopen{}\mathclose\bgroup\originalleft}
\renewcommand{\right}{\aftergroup\egroup\originalright}

\newcommand{\UU}{\mathcal U}

\newcommand{\HH}{\mathcal H}
\newcommand{\ZZ}{\mathbb Z}
\newcommand{\CC}{\mathbb C}

\renewcommand{\AA}{\mathcal A}

\newcommand{\NN}{\mathbb{N}}

\newcommand{\Ad}[1]{\textrm{Ad}\left(#1\right)}

\newtheoremstyle{exampstyle}
{15pt} 
{15pt} 
{\itshape} 
{} 
{\bfseries} 
{.} 
{.5em} 
{} 

\theoremstyle{exampstyle}

\newtheorem{theorem}{Theorem}[section]
\newtheorem{definition}[theorem]{Definition}
\newtheorem{lemma}[theorem]{Lemma}
\newtheorem{remark}[theorem]{Remark}

\newtheorem{corollary}[theorem]{Corollary}

\newtheorem{proposition}[theorem]{Proposition}

\title{Classification of symmetry protected states of quantum spin chains for continuous symmetry groups}

 \author[]{Bruno de Oliveira Carvalho\footnote{bruno.oliveira@kuleuven.be}}
 \author[]{Wojciech De Roeck\footnote{wojciech.deroeck@kuleuven.be}}
 \author[]{Tijl Jappens\footnote{tijl.jappens@hotmail.com}}
 \affil[]{Instituut voor Theoretische Fysica, KU Leuven}
\date{\today}

\newcommand{\caA}{\mathcal{A}}

\newcommand{\caP}{\mathcal{P}}
\newcommand{\bbZ}{\mathbb{Z}}
\newcommand{\bbC}{\mathbb{C}}
\newcommand{\bbN}{\mathbb{N}}
\newcommand{\bbR}{\mathbb{R}}
\newcommand{\caF}{\mathcal{F}}
\newcommand{\caI}{\mathcal{I}}

\newcommand{\diam}{\mathrm{diam}}

\renewcommand{\AA}{\mathcal A}

\newcommand{\stack}{\tilde \otimes}
\newcommand{\charge}{^{(c)}}
\newcommand{\chargep}{^{(c')}}
\newcommand{\spacing}{\vspace{.5cm}}

\interfootnotelinepenalty=10000

\numberwithin{equation}{section}

\begin{document}
	\maketitle
	\abstract{ Symmetry protected states (SPT's) of quantum spin systems were studied by several authors. For one-dimensional systems (spin chains), there is an essentially complete and rigorous understanding: SPT's corresponding to finite on-site symmetry groups $G$ are classified by the second cohomology group $H^2(G,U(1))$, as established in \cite{kapustin2021classification}.  We extend this result to the case of compact topological symmetry groups $G$. We also strengthen the existing results in the sense that our classification results holds within the class of spin chains with locally bounded on-site dimensions. }
 
 \section{Introduction}

This paper fits into the study of topological properties and characterization of ground state of quantum many body Hamiltonians. The systems are built on an extensive lattice like $\bbZ^d$, and every site of the lattice hosts a finite number of degrees of freedom. We restrict to spin systems; to every site is associated a quantum spin, corresponding to  a finite dimensional Hilbert space. 
In this field, one calls ground states \emph{trivial} if they can be continuously deformed into product states, i.e.\ states that are completely factorized between different spatial regions.  The appropriate notion of ``continuous deformation'' is intuitive but somehow subtle from a mathematical point of view and it will be introduced later. 
We imagine the spin systems to be equipped with some symmetry $G$ that acts separately for each lattice site, with the most relevant example being rotation in spin space. 
We study symmetry protected order, a concept introduced by Gu and Wen \cite{guwen2009} and further elucidated by \cite{Chen_2013,chenguwen2010, chen_gu_wen_2011}, \cite{schuch2011MatrixProduct}, \cite{pollman2012symmetry}. It means that we are interested in states that are trivial in the sense defined above, but the continuous deformation that trivializes them is possibly forbidden because it does not respect the $G$-symmetry.   
This leads to a more restrictive equivalence relation between $G$-symmetric states: We say such states are $G$-equivalent if they can be continuously deformed into each other without breaking the symmetry. As it stands, two states can be inequivalent because the number of degrees of freedom per site is different, even if some of these degrees of freedom are trival. 
To remedy this, one needs the notion of stacking. Two systems are stacked into one by taking on each site the tensor product of the Hilbert spaces associated to that site. 
The correct notion of equivalence is now 
stable $G$-equivalence. Two states are stably $G$-equivalent if they are $G$-equivalent after having been stacked with two trival states. In this way, the set of equivalence classes is equipped with a binary relation that can be shown to be a group. The equivalence classes are called SPT-states and they have physically interesting features that can be characterized by the behaviour at spatial boundaries. The best known example is the appearance of fractionalized spins at $0$-boundaries, which corresponds to the well-known \emph{Haldane phase} \cite{haldane_continuum_1983,haldane_nonlinear_1983}.  

As in many previous works on the subject, we will omit the Hamiltonian in our setup, and we focus on the states themselves. As such, the only input is the space dimension $d$ and the symmetry group $G$. There are undisputed conjectures for the case $d=1,2$: One believes that the set of stable $G$-equivalence classes, equipped with the stacking relation, is isomorphic to the Borel cohomology group $H_{\text{Borel}}^{d+1}(G,U(1))$. For dimensions larger than 2, there are SPT phases that lie beyond this cohomology picture \cite{burnell2014,vishwanath}, and there are proposals for generalized classifications \cite{Freed2014,Freed2016,kapustin2021classification,kitaevtalk,wen.anomalies} (see also \cite{Xiong_2018}).

The classification of symmetry protected states of spin chains based on projective representations of groups was firstly discussed in \cite{pollmann_entanglement_2010}. Since then, rigorous topological indexes classifying finite symmetry group SPT states have been defined in $d=1$ \cite{ogatatimereversal,ogata2019classification} for states satisfying the split property \cite{matsui.splitproperty}, and in $d=2$ for invertible states \cite{sopenkoindex}, and short-range entangled states \cite{ogata2021h3gmathbb}. If the symmetry group is finite and on-site, there is even a complete classification of SPT's in $d=1$ \cite{kapustin2021classification}. In \cite{quella}, the authors classify symmetric matrix product states with respect to simple compact Lie groups. Recently, $H^2_{\text{Borel}}(G,U(1))$-valued indices for 1d SPT states were obtained when the group $G$ is a compact Lie group, and computed explicitly for the $SO(n)$ groups \cite{Ragone:2024wie} and for $G=SO(3)$ \cite{tasaki_topological_2018}.

The present paper extends these results by giving a full classification of 1d SPT states, where the symmetry group $G$ is a topological group that is compact and metrizable.  Our treatment follows \cite{kapustin2021classification} but we restrict the on-site Hilbert spaces to have a uniformly bounded dimension (as a function of site). This forces us to revisit several of the auxiliary results.  
The construction of the map from SPT states to $H^2_{\text{Borel}}(G,U(1))$ proceeds essentially in the same way as in previous works. To show that this map is injective, we however have to develop a new idea, since the previous proofs relied in an essential way on the finiteness of the group $G$.

\subsection{Acknowledgements}
The authors were supported  by the FWO (Flemish Research Fund) grant G098919N, 
the FWO-FNRS EOS research project G0H1122N EOS 40007526 CHEQS, the KULeuven  Runners-up grant iBOF DOA/20/011, and the internal KULeuven grant C14/21/086.



\section{Setup and Main Result}   \label{sec: setup}

In this section, we introduce the technical framework. The first sections \ref{subsection.algebras}, \ref{subsection.lga} and \ref{sec: states}, are completely standard and covered in many review articles \cite{bratteliI,bratteliII,naaijkens2017quantum,brunoamandaI,nachtergaele.sims.ogata.2006,brunoamandaII}. The later sections deal with more specific aspects of short-range entangled systems. Finally, in section \ref{subsec: main result}, we state our main result. 
\subsection{Algebras} \label{subsection.algebras}
A spin chain $C^*$-algebra $\caA$ is defined in the standard way.
To any site $j \in \bbZ$, we associate an $n_j$-dimensional on-site Hilbert space $\mathcal H_j$, with associated matrix algebra $\caA_j$ isomorphic to $M_{n_j}(\bbC)$, the algebra of $n_j\times n_j$ matrices with complex entries. 
We assume that there is a $n_{\max}$ such that $n_j \leq n_{\max}$. 
The algebra $M_{n_j}(\bbC)$ is equipped with its natural operator norm and $*$-operation (Hermititian adjoint of a matrix) making it into a $C^*$-algebra.  The spin chain algebra $\caA$ is the inductive limit of 
algebra's $\caA_S=\otimes_{j\in S} \caA_j$, with  $S \in \caP_{\text{fin}}$, the finite subsets of $\bbZ$. 
It comes naturally equipped with local subalgebra's $\caA_X, X\subset \bbZ$. We refer to standard references \cite{bratteliII,simon2014statistical,naaijkens2017quantum} for more background and details.

We refer to such algebras as \emph{chain} algebras, and we will not repeat anymore that we require them to have uniformly bounded dimensions $n_j$. Given a pair of chain algebra's $\caA,
\caA'$, we consider the \emph{stacked} chain algebra 
$$\caA \stack \caA',$$ 
defined as above, i.e.\ as the inductive limit, but now starting from the stacked on-site algebras $\AA_j \stack \AA_j'$, which are simply defined as $\AA_j \otimes \AA'_j$ \footnote{The stacked chain algebra $\AA\stack \AA'$ defined this way coincides, as a $C^*$-algebra, with the $C^*$-tensor product $\AA \otimes \AA'$, with respect to the cross-norm (cf.\ \cite{turumaru}), which, by a result of \cite{takesaki.norm}, is the unique compatible norm on the algebraic direct product $\AA \odot \AA'$. I.e., the unique norm by which the completion of $\AA \odot \AA'$ is a $C^*$-algebra, and such that $\norm{A \otimes A' } \le \norm{A} \norm{A'}$.}, corresponding to the on-site Hilbert spaces  $\HH_j \stack \HH'_j := \HH_j \otimes \HH_j'$. The operation of stacking two spin chain algebra's in this way is crucial for us, and for this reason we use the distinghuished notation $\stack$.    \par
Finally, we say that spin chain algebras $\caA,
\caA'$ are isomorphic whenever their local dimensions agree, i.e.\ $n_j=n'_j$. In that case, we can pick unitaries $\mathscr F_j \in \mathcal U(\mathcal H_j, \mathcal H_{j}')$ and we can construct an isomorphism of $C^*$-algebra's $\mathscr F $:  For any  $S \in \caP_{\text{fin}}$, we define the isomorphisms $\mathscr F_S:  \AA_S \to \AA'_S$ by setting 
$\mathscr F_S: \otimes_{j\in S}\Ad{\mathscr F_j}$. These definitions are consistent and they extend to an isomorphism $\AA\to \AA'$. 


\subsection{Locally generated automorphisms and their generators}\label{subsection.lga}
We introduce the framework to discuss time-evolutions. 


\subsubsection{Interactions}\label{sec: TDIs}
Let $\caF$ be the class of non-increasing, strictly positive functions $f:\bbN^+\to\bbR^+$, with $\bbN^+=\{1,2,\ldots\}$, satisfying the fast decay condition $\lim_{r\to\infty}r^pf(r)=0$ for any $p>0$. 
An interaction is a collection of operators $H=(H(S))$ labelled by $S \in \mathcal P_{\text{fin}}(\ZZ)$ such that $H(S)=H(S)^* \in\caA_S$ and
$$
|| H ||_f  = \sup_{j\in\bbZ}  \sum_{S \in \caP_{\text{fin}}: S \ni j} \frac{||H(S)||}{f(1+\diam(S))}
$$
is finite for some $f \in\caF$.  Furthermore, we say an interaction $H$ is \textit{anchored} at a region $X \subset \ZZ$ (or $X$-\textit{anchored}) if
$$X \cap S = \emptyset \implies H(S) = 0.$$
There is a related norm on $X$-anchored interactions 
\begin{align*}
    \norm{H}_{X,f} := \begin{cases}
        \norm{H}_f,\ &H \text{ is }X\text{-anchored }, \\
        \infty,\ &\ \text{otherwise.}
    \end{cases}
\end{align*}



 
\subsubsection{Locally generated automorphisms}\label{ALEs}

We will consider families of interactions $H(s)$ parametrized by $s\in [0,1]$ and call them `time-dependent interactions' (TDIs) provided that they satisfy some regularity conditions to be formulated below. Since the risk of confusion is small, we will often denote them by $H$ as well. A time-dependent interaction $H$ is a TDI if there is $f\in\caF$ such that $s\mapsto H(s)$ is $||\cdot||_f$ bounded and strongly measurable\footnote{namely, it is the limit of a sequence of simple functions, pointwise almost everywhere, where the limit at any $s$ is taken in $||\cdot||_f$-norm, see~\cite{diestel1978vector} }.  For such functions, we use the supremum norms 
$$
||H||_{f}=\sup_{s\in[0,1]} ||H(s)||_{f}.
$$
The role of a TDI $H$ is to generate a family of \emph{locally generated automorphisms (LGA's)} $(\alpha_{H,s})_{s\in[0,1]}$, namely the one-parameter family of strongly continuous $^*$-automorphisms $\alpha_{H,s}$ on $\caA$ that should be viewed as a particular solutions of the Heisenberg evolution equations on $\caA$:
\begin{equation}\label{eq: heisenberg}
\alpha_{H,s}(A)= A +i\int_0^s du\, \alpha_{H,u} ( [H(u),A] ).
\end{equation}
The expression $[H(u),A]$ on the right-hand side is defined as  $\sum_{S \in \caP_{\text{fin}}}  [H(u,S),A]$ and it is finite for $A$ in a dense subset of $\AA$. 
 The integral on the right hand side is to be understood in the sense of Bochner, and the strong measurability of the integrand follows from strong measurability of $s\mapsto H(s)$ and the strong continuity of $s\mapsto \alpha_{H,s}$. The existence (let alone: uniqueness) of solutions to \eqref{eq: heisenberg} is highly non-trivial. 
Concretely, $\alpha_{H,s}(A)$ is defined as the limit (in the topology of $\caA$) of solutions of~(\ref{eq: heisenberg}) where $H$ is replaced with a finite volume restriction $H^{(\Lambda_n)}$ given by  $H^{(\Lambda_n)}(s,S)= \chi(S \subset \Lambda_n)H(s,S)$ for an increasing and absorbing sequence $(\Lambda_n)_{n\in\bbN}$ of subsets of $\mathbb{Z}$. By a standard argument using the Lieb-Robinson bound \cite{liebrobinson,nachtergaele.sims.ogata.2006}, this procedure is well-defined and the limit solves~(\ref{eq: heisenberg}) for $A$ in a dense set.  We conveniently define the notation
\begin{equation}
    \alpha_H := \alpha_{H,1}.
\end{equation}

\noindent We list a series of properties of LGA's. Their proofs are standard, see e.g.\ \cite{wojciech.thoulesspump,sopenkoindex}.
\begin{lemma} \label{lem.lgaproperties}
    Let $H,\ H_1,\ H_2$ be TDI's with finite $f$-norm. Then
    \begin{enumerate} [label=(\roman*)]
        \item $((\alpha_{H,s})^{-1})_{s \in [0,1]}$ is a family of LGA's generated by a TDI with finite $\tilde f$-norm, for $\tilde f \in \mathcal F$ depending only on $f$,

        \item ($\alpha_{H_1,s} \circ \alpha_{H_2,s})_{s \in [0,1]}$ a family of LGA's generated by a TDI with finite $\tilde f$-norm, for $\tilde f \in \mathcal F$ depending only on $f$. 
    \end{enumerate}

\end{lemma}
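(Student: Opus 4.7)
The plan for both items is to first identify the candidate generator \emph{formally} by a differentiation argument, and then convert that formal generator — which a priori is only quasi-local — into a bona fide interaction (a sum of strictly local terms indexed by $S \in \caP_{\text{fin}}$) with controlled $\tilde f$-decay, using a standard conditional-expectation decomposition combined with the Lieb-Robinson bound for the flow generated by $H$ (or $H_2$).

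For (i), I would take $\beta_s \defeq \alpha_{H,1-s}\circ \alpha_{H,1}^{-1}$, so $\beta_0 = \id$, $\beta_1 = \alpha_{H,1}^{-1} = \alpha_H^{-1}$. Differentiating in $s$ on a dense set of strictly local $A$, using the Heisenberg equation \eqref{eq: heisenberg} for $\alpha_{H,u}$ at $u=1-s$, one gets
\[
\partial_s \beta_s(A) = -i\bigl[\alpha_{H,1-s}\bigl(H(1-s)\bigr),\,\beta_s(A)\bigr],
\]
so the formal generator is $K^{\text{qloc}}(s,S)\defeq -\alpha_{H,1-s}(H(1-s,S))$. For (ii) an analogous computation on $\alpha_{H_1,s}\circ \alpha_{H_2,s}$ gives the formal generator
\[
K^{\text{qloc}}(s,S)\defeq \alpha_{H_2,s}^{-1}\bigl(H_1(s,S)\bigr)\ \text{plus the term}\ H_2(s,S).
\]
In both cases I now need to replace $\alpha(H(s,S))$ (where $\alpha$ is one of the LGA's we already have in hand) by a strictly local collection of operators.

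The conversion is the standard "local decomposition" trick. For each $S \in \caP_{\text{fin}}$ and each $r \in \bbN$, let $S_r$ be the $r$-fattening of $S$ in $\bbZ$, and let $\mathbb E_r$ be the normalized partial trace onto $\caA_{S_r}$ (a conditional expectation). Set
\[
K(s,S_0) \defeq \mathbb E_0\bigl(K^{\text{qloc}}(s,S)\bigr),\qquad
K(s,S_r)\mathrel{{+}{=}} \mathbb E_r\bigl(K^{\text{qloc}}(s,S)\bigr)-\mathbb E_{r-1}\bigl(K^{\text{qloc}}(s,S)\bigr)\ \text{for } r\geq 1,
\]
and symmetrize to keep self-adjointness. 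These summands are strictly supported in $S_r$, and their norms are bounded by $2\,\lVert K^{\text{qloc}}(s,S) - \mathbb E_{r-1}(K^{\text{qloc}}(s,S))\rVert$, which by the Lieb-Robinson bound applied to $\alpha_{H,1-s}$ (resp.\ $\alpha_{H_2,s}^{-1}$, which is itself an LGA with Lieb-Robinson control inherited from $H_2$) is at most $C\,g(r)\,\lVert H(1-s,S)\rVert$ for some rapidly decaying $g$ depending only on $f$. Summing in $r$ and then in $S\ni j$ against $f(1+\diam(\cdot))$ yields a bound of the form $\lVert K\rVert_{\tilde f}\leq C'\lVert H\rVert_f$ (resp.\ $C'(\lVert H_1\rVert_f + \lVert H_2\rVert_f)$) with $\tilde f$ depending only on $f$; measurability in $s$ is inherited from $H$ via strong continuity of $s\mapsto \alpha_{H,s}$.

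Finally, one checks that the TDI $K$ so constructed really generates the target LGA family. This is done by showing that on any strictly local $A$, $K(s)$ acts on $A$ in the same way as $K^{\text{qloc}}(s)$ (the telescoping sum collapses since $\mathbb E_r(A)=A$ once $r$ is large enough), so $\beta_s$ (resp.\ $\alpha_{H_1,s}\circ\alpha_{H_2,s}$) satisfies the Heisenberg equation for $K$; uniqueness of solutions in the sense of the finite-volume approximation scheme then identifies them with $\alpha_{K,s}$. The main obstacle is the Lieb-Robinson bookkeeping: ensuring that the decay function $\tilde f$ produced by the combination of $f$-decay of $H$ and LR-decay of $\alpha_H$ still lies in $\caF$ and depends only on $f$. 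This is standard (see e.g.\ \cite{wojciech.thoulesspump,sopenkoindex}); the uniform on-site dimension bound $n_j\leq n_{\max}$ is harmless here because all LR constants in our setting can be taken independent of the specific on-site dimensions.
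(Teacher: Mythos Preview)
The paper does not actually prove this lemma; it simply declares the properties standard and cites \cite{wojciech.thoulesspump,sopenkoindex}. Your outline is precisely the standard argument found in those references: identify the formal generator by differentiation, then re-localize it via a telescoping conditional-expectation decomposition controlled by the Lieb-Robinson bound. So in spirit and in all the essential technical points (the localization trick, the $\tilde f$-bookkeeping, the uniqueness argument) you are doing exactly what the cited literature does.

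There is one slip in part (i). The lemma asks for a TDI generating the \emph{specific} family $s\mapsto(\alpha_{H,s})^{-1}$, whereas your $\beta_s=\alpha_{H,1-s}\circ\alpha_{H,1}^{-1}$ is a different path: it interpolates from $\id$ to $\alpha_{H}^{-1}$ but does not coincide with $(\alpha_{H,s})^{-1}$ at intermediate $s$. The correct computation is to differentiate $\alpha_{H,s}\circ(\alpha_{H,s})^{-1}=\id$ directly, which yields
\[
\partial_s(\alpha_{H,s})^{-1}(A)=-i\bigl[H(s),(\alpha_{H,s})^{-1}(A)\bigr]=i\,(\alpha_{H,s})^{-1}\bigl(\bigl[-\alpha_{H,s}(H(s)),A\bigr]\bigr),
\]
so the right formal generator is $K^{\mathrm{qloc}}(s,S)=-\alpha_{H,s}\bigl(H(s,S)\bigr)$ rather than $-\alpha_{H,1-s}\bigl(H(1-s,S)\bigr)$. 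Everything downstream in your argument (conditional-expectation localization, Lieb-Robinson control, $\tilde f$ depending only on $f$) goes through unchanged with this corrected generator. Your treatment of (ii) is correct as written.
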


In the case where a TDI $H$ is anchored in a finite set $X$, i.e.\
$||H||_{X,f}<\infty$, 
the corresponding family of LGA's $\alpha_{H,s}$ has some nice and intuitive properties, making it similar to quantum dynamics in finite volume.  Most notably, 
$$
\alpha_{H,s}=\Ad{W(s)}
$$
where  the unitary $W(s) \in \AA$ solves the equation
$$
W(s) = i\int_0^s du   W(u)Z(u), \qquad W(0)=\mathds 1
$$
with 
$$
Z(u)=\sum_{S} H(s,S),
$$
being a Hermitian element of $\AA$ because of the anchoring.
Moreover, both $Z(s),W(s)$ satisfy good locality bounds, as we spell out in Appendix \ref{appendix.lga}.


\subsection{States}\label{sec: states}

States are normalized positive linear functionals on the spin chain algebra $\caA$. A natural metric on states is derived from the Banach space norm
 \begin{equation}\label{eq: metric on states}
 ||\psi-\psi'|| = \sup_{A \in\caA, ||A||=1} |\psi(A)-\psi'(A)|.
 \end{equation}
The set of states on $\AA$ is convex and its extremal points are called the pure states. A state $\psi$ on a chain algebra $\AA$ will be, from now on, denoted by $(\psi,\AA)$. A distinguished class of pure states is that of spatial \emph{product states} $(\phi,\AA)$, i.e.\ states that satisfy
$$
\phi(A_jA_i)= \phi(A_j)\phi( A_i),\qquad  A_j \in\caA_j,A_i\in \caA_i, i\neq j.
$$
A pure product state $\phi$ on $\caA$ is fully characterized by its on-site restrictions $\phi|_{\{j\}}$ to $\caA_{\{j\}}$. Each restriction $\phi|_{\{j\}}$ is represented by a density matrix in $\AA_{\{j\}}$, and purity of $\phi$ implies the range of this density matrix is contained in a one-dimensional subspace $V_j \subset \HH_j$. For each $j$, one can then choose $\xi_j \in V_j$ such that $\phi|_{\{j\}}=\langle \xi_j, (\cdot) \xi_j\rangle $. We will allow ourselves to use the formal notation
\begin{align} \label{eq.formallyproductstate}
    \phi  = \bigotimes\limits_{j\in \ZZ} \langle \xi_j, (\cdot) \xi_j\rangle.
\end{align}
States $(\psi, \AA)$ and $(\psi',\AA')$ can be stacked into a state $(\psi \stack \psi',\AA \stack \AA')$ defined by 
$$\psi \stack \psi' (A \stack A') := \psi(A) \psi'(A'), \qquad A \in \AA,\ A'\in \AA'.$$

\subsubsection{Equivalence of states}\label{sec: equivalence of states}

We say that a pair of pure states $(\psi,\AA),\  (\psi',\AA)$ are equivalent if there exists a TDI $H$, as defined in Section \ref{sec: TDIs}, such that 
$$
\psi'=\psi\circ \alpha_{H}. 
$$

\noindent This relation is clearly reflexive. It is also symmetric and transitive, by Lemma \ref{lem.lgaproperties}. 

\begin{definition} \hfill
\begin{enumerate}
    \item  A  pure state $(\psi,\AA)$ is called short-range entangled (SRE) if it is equivalent to a pure product state on the chain algebra $\caA$.
    \item  A pure state  $(\psi,\AA)$ is stably short-range entangled (stably SRE) if there exists a pure product state $(\phi,\AA')$ such that the pure state $(\psi \stack \phi, \AA \stack \AA')$ is SRE. 
\end{enumerate}
 
\end{definition}

\noindent In some cases, we need to be more specific about the locality properties of a given state: An SRE (respectively, stably-SRE) state is called \textit{$f$-SRE ($f$-stably-SRE)} for some $f \in \mathcal F$ whenever it can be produced from a product state via a TDI $H$ satisfying
$\norm{H}_f <\infty.$

\subsection{Symmetries} \label{section.gactions}

 We consider topological groups $G$ that are compact and metrizable (e.g.\ finite groups, compact Lie groups,\ldots).
Given a chain algebra $\caA$ as above, we consider an on-site unitary $G$-action $U$ which is defined as a collection $$U=\{U_j(g), j\in \bbZ, g\in G\},$$
 where $U_j(g) \in \UU(\HH_j)$ and, for each $j$,  $g\mapsto U_j(g)$ is a continuous representation of $G$. 
The unitary action $U$ gives rise to a strongly continuous family of *-automorphisms $\beta_g^X$ on $\caA_X$, $X \subseteq \ZZ$, in the following way: if $A \in \AA_S$ for $S \in \mathcal P_{\text{fin}}(X)$, then
$$\beta_g^X(A) := \bigotimes\limits_{j \in S} \Ad {U_j(g)} (A),$$
and this is extended to $\AA_X$ by linearity and density.  Here,  we used the notation $\Ad {B}(A)=BAB^*$. When $X = \ZZ$, we denote $\beta_g^\ZZ = \beta_g$.

We say that a state $(\psi,\caA)$ is $G-$invariant if $\psi=\psi\circ \beta_g$ for all $g\in G$. Obviously, this notion depends on the $G$-action $U$, which motivates the next definition and vocabulary.

 \subsubsection{G-states, G-product states and special G-product states}\label{sec:productstates}

\begin{definition} [G-states]
A $G$-state is a triple $(\psi,\caA,U)$, where 
\begin{enumerate}
    \item $\caA$ is a chain algebra,
        \item $U$ is an on-site unitary $G$-action, as defined above,
    \item $\psi$ is a pure $G$-invariant state on $\caA$ that is stably-SRE.

\end{enumerate}
\end{definition}
 A $G$-state is a $G$-product state whenever it is a product state, as defined in section \ref{sec: states}. $G$-invariance with respect to the unitary on-site group action $U$ implies each one-dimensional subspace $V_j$ characterizing the $G$-product state (cf.\ Equation \ref{eq.formallyproductstate}) is actually a one-dimensional representation of $G$, transforming according to a continuous group homomorphism $q_j \in \hom(G,U(1))$ (group homomorphisms between $G$ and $U(1)$). Such a $q_j \in \hom(G,U(1))$ will be henceforth referred to as the on-site $G$-charge. 

\begin{definition} [special G-product states]
A $G$-state $(\phi,\caA,U)$ is a special $G$-product state if it is a $G$-product state such that all on-site $G$-charges are trivial. I.e.\ 
$$q_j(g) = 1,$$
for all $j \in \ZZ$, $g\in G$.

\end{definition}

 When stacking $G$-states, we need to stack as well the $G$-actions. This is done in an obvious way: If $U,U'$ are $G$-actions on $\caA,\caA'$, we let $U\stack U'$ be the collection of on-site unitary representations  $g \mapsto U_j(g)\tilde \otimes U'_j(g)$.

\subsubsection{G-equivariant TDIs} \label{sec.trivialTDI}

A TDI $H$ is said to be $G$-equivariant if $\beta_g (H(s,S)) = H(s,S),$ for any $g\in G,\ s\in [0,1],\ S\in \mathcal P_{\text{fin}}(\ZZ)$.  If a TDI is $G$-equivariant, it generates a G-equivariant family of LGA's. Namely, for each $s \in [0,1]$, the LGA $\alpha_{H,s}$ satisfies: 
$$\alpha_{H,s} \circ \beta_g = \beta_g \circ \alpha_{H,s}, \qquad \forall g \in G.$$

Often, we will encounter rather trivial TDI's. Consider for example a family of $G$-equivariant unitaries $ (V_{I})_{I\in \caI}$ with $V_I \in \caA_I$ and the collection
 $\caI$ is a partition of $\mathbb{Z}$ into discrete intervals $I$, such that their lenghts are bounded: $\sup_{I \in \mathcal I} |I| <\infty$. 
 In this case, the formal product $\otimes_{I\in \caI}\Ad{V_I}$ is a well-defined $G$-equivariant LGA, since we can construct a $G$-equivariant Hermitian operator $H(I)$ such that $V_I=e^{-i H(I)}$ and $||H(I)|| \leq \pi$. The corresponding TDI $H(s,I)=H(I)$ (and $H(s,S)=0$ whenever $S\not\in \caI$), has finite $f$-norm, for any $f\in \caF$.



\subsection{G-Equivalence of G-states}
We refine the equivalence relation introduced in Section \ref{sec: equivalence of states} by taking symmetry into account.
\begin{definition}
Two $G$-states $(\psi,\caA,U)$ and $(\psi',\caA',U')$ are G-equivalent whenever the following conditions are satisfied:
\begin{enumerate}
    \item 
    For any site $j \in \ZZ$, the on-site  Hilbert spaces dimensions are equal, $n_j=n'_j$.  
 \item For any site $j$, there is a unitary $ \mathscr F_j: \mathcal H_j\to \mathcal H'_j$ that intertwines
 the $G$-actions: $\mathscr F_j U_j= U_{j}' \mathscr F_j$.
 \item There is a TDI $H$ on $\caA$ that is $G$-equivariant and such that the corresponding LGA connects the states
    $$
    \psi' \circ \mathscr F = \psi \circ \alpha_H
    $$
    with 
    $\mathscr F = \otimes_j\Ad{\mathscr F_j}$ as defined in subsection \ref{subsection.algebras}. 
\end{enumerate}
\end{definition}


To compare states that are not necessarily defined on isomorphic chain algebras, we introduce the notion of \emph{G-stable equivalence}.

\begin{definition} \label{def.gequivalence}
A pair of $G$-states $(\psi,\caA,U)$ and $(\psi',\caA',U')$ are $G$-stably equivalent whenever there are special $G$-product states $(\phi,\caA_\phi,U_\phi)$ and   $(\phi',\caA_{\phi'},U_{\phi'})$ such that 
$$
(\psi \tilde \otimes\phi,\caA \tilde \otimes\caA_\phi,U \tilde \otimes U_\phi)  
\qquad \text{and} \qquad
(\psi' \tilde \otimes\phi',\caA' \tilde \otimes\caA_{\phi'},U' \tilde \otimes U_{\phi'})  
$$ are $G$-equivalent.
\end{definition}

\subsection{Main result} \label{subsec: main result}


The set of stably $G$-equivalence classes of $G$-states is denoted by $\text{SPT}_G$.  It is equipped with a commutative binary relation given by stacking $\stack$. It is clear that the class containing special $G$-product states acts as an identity element for this relation.  Our result states that $\text{SPT}_G$ is a group and it is isomorphic to the abelian group $H^2_{\text{Borel}}(G,U(1))$ (Mackey-Moore cohomology group \cite{moore64,mooreII,Mackey1957BorelSI}; see also \cite{Cattaneo.mackeymoorecohomology}).

    \begin{theorem} \label{thm.grouphomomorphism}
        Given a compact and metrizable topological group G, there is a group isomorphism between $\text{SPT}_G$ and $H^2_{\text{Borel}}(G,U(1))$.
    \end{theorem}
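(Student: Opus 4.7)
The plan is to construct a map $\Phi : \text{SPT}_G \to H^2_{\text{Borel}}(G,U(1))$ via the projective representation arising from the split property, verify it is a well-defined group homomorphism, prove surjectivity by an AKLT-type construction, and establish injectivity by a blockwise intertwining argument that replaces the combinatorial use of the regular representation employed for finite $G$ in \cite{kapustin2021classification}.

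For the construction of $\Phi$, let $(\psi,\AA,U)$ be a $G$-state and set $\AA_L := \AA_{(-\infty,0]}$. Because $\psi$ is stably-SRE and the split property is preserved under LGAs and under stacking, the restriction $\psi|_{\AA_L}$ has a type I factor GNS representation $(\pi_L,\HH_L)$ with $\pi_L(\AA_L)'' \simeq B(\HH_L)$. The restricted on-site $G$-action $\beta_g^{(-\infty,0]}$ preserves $\psi|_{\AA_L}$, so it is implemented on $\HH_L$ by unitaries $V(g)$ unique up to a phase. Since $G$ and the projective unitary group of $\HH_L$ are Polish, a Borel section $g \mapsto V(g)$ exists, producing a Borel $U(1)$-valued $2$-cocycle $c(g,h) := V(g)V(h)V(gh)^{-1}$, and we set $\Phi[\psi,\AA,U] := [c] \in H^2_{\text{Borel}}(G,U(1))$.

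Well-definedness rests on four invariances: a different Borel phase choice $V(g)\mapsto \lambda(g)V(g)$ shifts $c$ by the coboundary $\delta\lambda$; moving the split point $0 \mapsto m$ inserts the genuine on-site representation on $[1,m]$, leaving $[c]$ unchanged; replacing $\psi$ by $\psi\circ \alpha_H$ for $G$-equivariant $H$ is handled by the Lieb--Robinson localisation of $\alpha_H$ together with split-point independence, which together intertwine the GNS implementers up to tails; and stacking with a special $G$-product state tensors $V(g)$ with the identity on a trivial $G$-module. The homomorphism property follows because the half-chain GNS space of $\psi\stack \psi'$ is $\HH_L \otimes \HH'_L$ with $G$-action implemented by $V(g)\otimes V'(g)$, so the cocycles multiply. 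For surjectivity, given $[c]\in H^2_{\text{Borel}}(G,U(1))$, pick a finite-dimensional projective unitary representation $W:G\to \UU(V)$ with cocycle $c$ (obtained by taking an irreducible ordinary representation of the compact metrizable $c$-twisted central extension of $G$, which is finite-dimensional by Peter--Weyl), and build a translation-invariant AKLT-type matrix product state with virtual space $V$ and on-site $G$-action $W\otimes W^*$; a direct computation of the half-chain implementer recovers $[c]$.

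Injectivity is the principal difficulty, and the bounded on-site dimension hypothesis $n_j \le n_{\max}$ forces a new argument. In \cite{kapustin2021classification} one absorbs all differences between on-site $G$-representations by stacking with sufficiently many copies of the regular representation of $G$, which works because that representation is finite-dimensional. For compact $G$ the regular representation is infinite-dimensional and unavailable within our class of chains. The plan is to proceed blockwise: coarse-grain the chain into blocks of length $\ell$ large enough that the Lieb--Robinson tails of the TDIs produced in the earlier steps are negligible across block boundaries, and small enough that the block algebras $\bigotimes_{j\in \text{block}} \AA_j$ remain finite-dimensional and are controlled by $n_{\max}^{\ell}$. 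On each block, equality of the cocycles $[c] = [c']$ provides a $G$-equivariant unitary intertwiner between the two projective half-chain representations restricted to the block; selecting these intertwiners in a Borel manner jointly in $g\in G$ (using that $G$ is Polish) and invoking the mechanism of Section~\ref{sec.trivialTDI} assembles them into a $G$-equivariant TDI of finite $f$-norm realising the $G$-stable equivalence after stacking with appropriate special $G$-product states. The main technical challenges will be controlling the decay function $f\in \caF$ uniformly as the block length $\ell$ grows, and handling the joint Borel measurability of the blockwise intertwiners in $g$ and in the block index; both rely critically on the uniform bound $n_{\max}$ and on compactness of $G$.
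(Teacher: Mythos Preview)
Your construction of the index map $\Phi$, the verification that it is a well-defined homomorphism, and the surjectivity argument via an AKLT-type state built from a finite-dimensional projective representation are all in line with the paper's treatment (Section~\ref{section.index}, Proposition~\ref{thm.indexhomomorphism}, Lemma~\ref{lem.surjective}).

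The injectivity argument, however, has a genuine gap. Your plan is to pass from the global statement ``the half-chain projective cocycles agree'' to block-local $G$-equivariant unitaries, but you supply no mechanism for this passage. The projective representation $V(g)$ lives on the \emph{full} half-chain GNS space $\HH_L$, which is not a tensor product over blocks (the state is entangled across block boundaries), so there is no meaningful ``restriction to a block'' of $V(g)$. Equality of cocycle classes gives at best a single global unitary $W\in\UU(\HH_L,\HH'_L)$ with $WV(g)W^{-1}=\lambda(g)V'(g)$; nothing forces $W$ to be block-local, and such a $W$ has no a priori relation to the \emph{states} $\psi$ and $\psi'$ --- it intertwines the group actions, not the GNS vectors. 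The phrase ``selecting these intertwiners in a Borel manner jointly in $g\in G$'' is also confused: an intertwiner between two representations is a fixed unitary, not a $g$-dependent family. In short, you have not explained how triviality of the index produces \emph{any} local $G$-equivariant operation on the state.

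The paper's route (Section~\ref{sec.proofmain}) is genuinely different and is the main new idea of the work. One shows that for a trivial-index $G$-state the spectral projector $P_1$ onto the top Schmidt eigenvalue of $\rho_R$ has rank bounded solely in terms of $f$, and its range $\mathcal J$ carries a \emph{linear} $G$-representation (Proposition~\ref{thm.singlets}). The singlet in $\mathcal J\otimes\overline{\mathcal J}$ then yields, for every cut $j$, a $G$-invariant approximant $\gamma_j$ that is a product across the cut and $\tilde f$-close to $\psi$ on the right (Proposition~\ref{cor.invariantstates}). Successive approximants are connected by anchored $G$-equivariant TDIs after cancelling $0$-dimensional charges (Proposition~\ref{thm.connectinghalfstates}), and the infinite composition is controlled by Lemma~\ref{lem.infiniteproduct}. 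This Schmidt-eigenvector construction is precisely what replaces the regular-representation trick of \cite{kapustin2021classification}; your proposal does not contain a substitute for it.
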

This theorem differs from previous results mainly through its scope.
As already mentioned, the works \cite{kapustin2021classification} and \cite{ogata2019classification} established a similar result for finite groups $G$, where the cohomology group $H^2_{\text{Borel}}(G,U(1))$ resumes to the standard abelian group cohomology group (also referred to as Eilenberg-MacLane cohomology group \cite{Cattaneo.mackeymoorecohomology,eilenberg.maclane.cohomology}). The inclusion of more general groups $G$ allows to include some very natural cases, most notably $G=U(1)$ or $G=SO(3)$.   The overall strategy of our proof is very similar to the one in \cite{kapustin2021classification} and many of our intermediate results have direct analogues in  \cite{kapustin2021classification}.  
Our general setup differs from \cite{kapustin2021classification} (apart from the class of allowed groups $G$)  in that we require the on-site dimension to be bounded along the chain, which feels more natural to us.  This forces us in particular to take a different approach to the proof of Theorem \ref{thm.equivalenceofproductstates}.
However, the most substantial difference between our treatment and that of  \cite{kapustin2021classification} is in the proof of completeness of the classification in Section \ref{sec.proofmain}.  There we use a new idea that is based on properties of Schmidt vectors of the state in GNS representation. These properties are established in Section \ref{sec.technical}.

	\section{The SPT index for G-states} \label{section.index}

    In this section we define SPT indices for $G$-states. The construction is by now well-understood and we review it mainly for the sake of completeness. The fact that we allow for non-finite groups forces us anyhow to partially revisit the construction. 
 
 Since the theory of the SPT index for chains is intimately connected to projective representation, we have to recall some of that theory as well.  We refer to \cite{cattaneo} and \cite{Ragone:2024wie} for more extensive treatments. 
 
 
 \par

    \subsection{Projective representations} \label{sec.cups}
    Let $\HH$ be a separable Hilbert space, and denote by $\mathbb P(\mathcal H)$ the projective space topologized by the quotient topology. We denote by $p:\ \HH\setminus \{0\} \to \mathbb P(\HH)$ the projection map. 
    The projective space $\mathbb P(\HH)$ is a separable complete space, metrizable by the following distance function:\begin{equation}
    d(\mathbf{a}, \mathbf{b}) = \inf\limits_{ \substack{ a \in p^{-1}(\mathbf {a}) \\ b \in p^{-1}(\mathbf{b})}} \norm{\dfrac{a}{||a||} - \dfrac {b}{||b||}}, \qquad \mathbf {a},\mathbf{b} \in \mathbb P(\HH).
\end{equation}
    We also denote by $\mathscr P:\ \UU(\HH) \to \mathbb P(\UU(\HH))$ the continuous group homomorphism between the unitary group of $\HH$, equipped with strong (or, equivalently: weak) operator topology, and the projective unitary group $\mathbb P(\UU(\HH))$, topologized by the so-called final topology for $\mathscr P$: the finest topology making $\mathscr P$ continuous.
    In this setting, the group homomorphism $\mathscr P$ from $\UU(\HH)$ onto $\mathbb P(\UU(\HH))$ is continuous. Furthermore, the following essential result holds:
 \begin{proposition} [Lemma 1 of \cite{cattaneolocallycocycles}] \label{thm.borelsection}
        There exists a normalized Borel section $\Sigma$ associated to $\mathscr P$. That is, there exists a measurable right-inverse $\Sigma:\ \mathbb P(\UU(\HH)) \to \UU(\HH)$ for $\mathscr P$: 
            \begin{equation}
                \mathscr P \circ \Sigma = \text{id}_{\mathbb P (\UU(\HH))}. 
            \end{equation}
    \end{proposition}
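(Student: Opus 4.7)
The plan is to construct $\Sigma$ explicitly by choosing, within each fiber of $\mathscr P$, the representative that normalises a specific matrix element to a positive real number. This avoids appealing to abstract measurable-selection machinery.

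First, I fix an orthonormal basis $\{\xi_n\}_{n \geq 1}$ of $\HH$ and consider, for each pair $(i,j) \in \bbN^+ \times \bbN^+$, the strongly continuous matrix-element map $\phi_{ij} : \UU(\HH) \to \bbC$ given by $\phi_{ij}(U) := \langle \xi_i, U \xi_j\rangle$. Its modulus $|\phi_{ij}|$ is invariant under the central $U(1)$-action $U \mapsto e^{i\theta} U$, so by the universal property of the final topology on $\mathbb P(\UU(\HH))$ it descends to a continuous function $\widetilde{|\phi_{ij}|}$ on the quotient. Ordering $\bbN^+ \times \bbN^+$ lexicographically, I then set
$$B_{(i,j)} := \{\mathbf U \in \mathbb P(\UU(\HH)) : \widetilde{|\phi_{ij}|}(\mathbf U) > 0 \text{ and } \widetilde{|\phi_{kl}|}(\mathbf U) = 0 \text{ for all } (k,l) < (i,j)\}.$$
Each $B_{(i,j)}$ is Borel as a countable intersection of preimages of Borel sets under continuous functions, and the family $\{B_{(i,j)}\}$ is a countable partition of $\mathbb P(\UU(\HH))$ since no unitary can have all matrix elements equal to zero.

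Next, on $B_{(i,j)}$ I set $\Sigma(\mathbf U)$ to be the unique representative for which $\phi_{ij}$ takes a positive real value: for any $U \in \mathscr P^{-1}(\mathbf U)$,
$$\Sigma(\mathbf U) := \frac{\overline{\phi_{ij}(U)}}{|\phi_{ij}(U)|}\, U.$$
This expression is $U(1)$-invariant in $U$, hence depends only on $\mathbf U$, and plainly satisfies $\mathscr P \circ \Sigma = \text{id}_{\mathbb P(\UU(\HH))}$. To verify Borel measurability, I would use that the Borel $\sigma$-algebra on $\UU(\HH)$ (strong operator topology) is countably generated by the matrix-element functionals $\{\phi_{kl}\}_{k,l \in \bbN^+}$. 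It therefore suffices to show that $\mathbf U \mapsto \phi_{kl}(\Sigma(\mathbf U))$ is Borel for every $(k,l)$. Restricted to $B_{(i,j)}$, this is the $U(1)$-invariant continuous expression $U \mapsto \overline{\phi_{ij}(U)}\,\phi_{kl}(U)/|\phi_{ij}(U)|$, which descends to a continuous function on $B_{(i,j)}$. Gluing over the countable Borel partition then gives the desired Borel section $\Sigma$.

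The main subtlety I anticipate is a clean handling of the final topology on $\mathbb P(\UU(\HH))$: one must verify that every $U(1)$-invariant continuous scalar function on $\UU(\HH)$ indeed factors through a continuous function on the quotient (which follows from the universal property, after noting that the $U(1)$-orbits are precisely the fibers of $\mathscr P$), and that Borel measurability into $\UU(\HH)$ can be tested against matrix elements (which uses separability of $\HH$). Once these points are in place, the explicit construction goes through and the normalisation is automatic since $\phi_{ij}(\Sigma(\mathbf U)) > 0$ on $B_{(i,j)}$.
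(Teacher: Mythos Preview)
The paper does not supply its own proof of this proposition; it simply quotes the result from Cattaneo's paper \cite{cattaneolocallycocycles}. Your explicit construction is therefore not a reproduction of anything in the present paper but an independent, self-contained argument, and it is essentially correct: partitioning $\mathbb P(\UU(\HH))$ according to the first nonvanishing matrix element (in some fixed enumeration) and then normalizing that matrix element to be positive real is a standard and clean way to produce a Borel section of $\mathscr P$. The measurability argument via matrix coefficients is valid because, for separable $\HH$, $\UU(\HH)$ with the strong operator topology is a Polish group whose Borel structure is generated by the countable family $\{\phi_{kl}\}$.

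One small terminological point: in this context ``normalized'' means $\Sigma(\mathscr P(\mathds 1)) = \mathds 1$, i.e.\ the section sends the identity class to the identity unitary. Your construction does satisfy this (since $\phi_{11}(\mathds 1)=1>0$ puts $\mathscr P(\mathds 1)$ in $B_{(1,1)}$ and the normalization leaves $\mathds 1$ fixed), but your final sentence suggests you are reading ``normalized'' as ``the distinguished matrix element is positive'', which is not the intended meaning. This does not affect the validity of the argument.
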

    We are now ready to define the actual projective representations.
    \begin{definition}
    Let $G$ be a topological group.
    \hfill
    \begin{enumerate}
        \item A continuous unitary projective representation (CUP-rep) of  $G$ on $\mathbb P(\mathcal H)$ is a continuous group homomorphism
        \begin{equation}
            \mathbf U:\ G \to \mathbb P(\UU(\HH)), 
        \end{equation}
        denoted as $(\mathbf U, \HH)$. 

        \item Given a measurable map $\mu: G\times G \to U(1)$, a Borel unitary $\mu$-multiplier representation ($u_\mu,\HH$)  on $\HH$ is a measurable map
    \begin{equation}
        u_\mu:\ G \to \UU(\HH),
    \end{equation}
    such that $u_\mu (e) = \mathds 1$ and $u_\mu (g)u_\mu (h) = \mu(g,h) u_\mu(gh)$, $g,h \in G$, and $e$ is the group unit. 
        \end{enumerate}
    \end{definition}
    Moreover, a Borel unitary $\mu$-multiplier representation $(u_\mu,\HH)$ of a group $G$ is said to be a Borel lift of a CUP-rep $(\mathbf U, \HH)$ of $G$ if 
\begin{equation}
    \mathbf U (g) = \mathscr P \circ u_\mu(g), \qquad g \in G,
\end{equation}
    namely, if the following diagram commutes: 
    

\begin{center}

\tikzset{every picture/.style={line width=0.75pt}} 

\begin{tikzpicture}[x=0.75pt,y=0.75pt,yscale=-1,xscale=1]

\draw    (283.71,174) -- (340.86,174) ;
\draw [shift={(342.86,174)}, rotate = 180] [color={rgb, 255:red, 0; green, 0; blue, 0 }  ][line width=0.75]    (10.93,-3.29) .. controls (6.95,-1.4) and (3.31,-0.3) .. (0,0) .. controls (3.31,0.3) and (6.95,1.4) .. (10.93,3.29)   ;
\draw    (372.86,117.14) -- (372.86,155.14) ;
\draw [shift={(372.86,157.14)}, rotate = 270] [color={rgb, 255:red, 0; green, 0; blue, 0 }  ][line width=0.75]    (10.93,-3.29) .. controls (6.95,-1.4) and (3.31,-0.3) .. (0,0) .. controls (3.31,0.3) and (6.95,1.4) .. (10.93,3.29)   ;
\draw    (277.71,160.86) -- (344.36,114.29) ;
\draw [shift={(346,113.14)}, rotate = 145.06] [color={rgb, 255:red, 0; green, 0; blue, 0 }  ][line width=0.75]    (10.93,-3.29) .. controls (6.95,-1.4) and (3.31,-0.3) .. (0,0) .. controls (3.31,0.3) and (6.95,1.4) .. (10.93,3.29)   ;

\draw (261.57,164.26) node [anchor=north west][inner sep=0.75pt]    {$G$};
\draw (348.86,92.83) node [anchor=north west][inner sep=0.75pt]    {$\mathcal{U}(\mathcal{H})$};
\draw (347.71,165.11) node [anchor=north west][inner sep=0.75pt]    {$\mathbb{P}(\mathcal{U}(\mathcal{H}))$};
\draw (308.14,179.26) node [anchor=north west][inner sep=0.75pt]    {$\mathbf U$};
\draw (288.14,110.54) node [anchor=north west][inner sep=0.75pt]    {$u_{\mu }$};
\draw (378.14,123.26) node [anchor=north west][inner sep=0.75pt]    {$\mathscr{P}$};

\end{tikzpicture}

\end{center}
Borel unitary multipliers are intimately connected to $\text{CUP}$-reps:
\begin{lemma} [Proposition 2' of \cite{cattaneo}]
    Every Borel unitary multiplier representation of a compact topological group $G$ is the Borel lift of a $\text{CUP}$-rep and, conversely, every $\text{CUP}$-rep admits a Borel lift that is a Borel unitary multiplier representation of $G$. 
\end{lemma}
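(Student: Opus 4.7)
The plan is to establish the two implications of the equivalence separately, both hinging on basic properties of the Borel section $\Sigma$ and on automatic continuity of Borel homomorphisms.

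\emph{Converse direction (every CUP-rep admits a Borel multiplier lift).} Given a CUP-rep $(\mathbf U,\HH)$, I set
\[
u_\mu(g) \;:=\; \Sigma(\mathbf U(g)), \qquad g\in G,
\]
with $\Sigma$ the normalized Borel section of Proposition \ref{thm.borelsection}. Since $\Sigma$ is Borel and $\mathbf U$ is continuous, $u_\mu$ is Borel measurable. Redefining $\Sigma$ at the single point $\mathbf U(e)$ if necessary (which preserves measurability) I can enforce $u_\mu(e)=\mathds 1$. By construction $\mathscr P\circ u_\mu=\mathbf U$ is a group homomorphism, so $\mathscr P(u_\mu(g)u_\mu(h))=\mathscr P(u_\mu(gh))$, and hence there is a unique $\mu(g,h)\in U(1)$ with $u_\mu(g)u_\mu(h)=\mu(g,h)u_\mu(gh)$. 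Measurability of $\mu$ on $G\times G$ follows from Borel-measurability of $u_\mu$ combined with separate strong continuity of multiplication and adjoint on bounded subsets of $\UU(\HH)$; concretely one can express $\mu(g,h)$ as a suitable matrix coefficient $\langle \xi,\, u_\mu(g)u_\mu(h)u_\mu(gh)^{*}\xi\rangle$ with $\xi$ in the range of $u_\mu(gh)$.

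\emph{Forward direction (every Borel multiplier representation projects to a CUP-rep).} Given $(u_\mu,\HH)$, define $\mathbf U:=\mathscr P\circ u_\mu$. The multiplier relation gives
\[
\mathbf U(g)\mathbf U(h)\;=\;\mathscr P\bigl(u_\mu(g)u_\mu(h)\bigr)\;=\;\mathscr P\bigl(\mu(g,h)u_\mu(gh)\bigr)\;=\;\mathscr P(u_\mu(gh))\;=\;\mathbf U(gh),
\]
because $\mathscr P$ quotients out the central $U(1)$. Thus $\mathbf U$ is a group homomorphism. As a composition of a Borel map with the continuous $\mathscr P$, the map $\mathbf U$ is Borel measurable. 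What remains, and what actually has content, is the upgrade from Borel measurable to continuous, so that $\mathbf U$ qualifies as a CUP-rep. Here I would invoke the classical automatic continuity theorem (Banach--Pettis--Weil): any Borel measurable homomorphism from a Polish group into a Polish group is continuous. Since $G$ is compact and metrizable (in particular Polish) and $\mathbb P(\UU(\HH))$ equipped with the final topology for $\mathscr P$ is Polish—being the quotient of the Polish group $\UU(\HH)$ (strong operator topology) by the closed central subgroup $U(1)\cdot\mathds 1$—the theorem applies and gives continuity of $\mathbf U$.

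\emph{Where the difficulty lies.} The arithmetic of multipliers and the use of $\Sigma$ are essentially bookkeeping; the substantive point is the continuity upgrade in the forward direction. In particular, one must verify that the final topology on $\mathbb P(\UU(\HH))$ really is Polish (equivalently, that $\mathscr P$ is continuous, open and has closed fibres), and that the version of the automatic-continuity theorem invoked applies with $\mathbb P(\UU(\HH))$ as target. Both points are standard but worth spelling out: the first follows from general quotient theory of Polish groups modulo closed normal subgroups, and the second from the classical Banach/Pettis argument, whose hypotheses reduce to second countability of the target and the Baire property of the source.
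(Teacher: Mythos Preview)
The paper does not give its own proof of this lemma; it is quoted as Proposition~2$'$ of \cite{cattaneo} and used as a black box. So there is no argument in the paper to compare against.

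Your proof is correct and is in fact the standard route to the result. Two minor remarks. First, in the converse direction, the section $\Sigma$ of Proposition~\ref{thm.borelsection} is already \emph{normalized}, meaning $\Sigma$ sends the identity of $\mathbb P(\UU(\HH))$ to $\mathds 1$, so $u_\mu(e)=\mathds 1$ holds automatically and no redefinition is needed. Second, your formula for $\mu(g,h)$ as a matrix coefficient is right but the qualifier ``with $\xi$ in the range of $u_\mu(gh)$'' is vacuous, since $u_\mu(gh)$ is unitary; any unit vector $\xi$ will do, and the point is only that $(g,h)\mapsto u_\mu(g)u_\mu(h)u_\mu(gh)^*$ is Borel into $\UU(\HH)$ (as a composition of Borel maps and the jointly continuous group operations on the Polish group $\UU(\HH)$), after which evaluation against a fixed $\xi$ is continuous. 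The genuinely nontrivial input, as you correctly isolate, is the automatic continuity of Borel homomorphisms between Polish groups, and your justification that both $G$ and $\mathbb P(\UU(\HH))$ are Polish is adequate.
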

Hence we can always associate a $\text{CUP}$-rep with a Borel multiplier representation of $G$ and vice-versa. Particularly, by associativity of unitary operators on a Hilbert space, a Borel multiplier $\mu$ associated to a CUP-rep satisfies the 2-cocycle condition: 
\begin{equation*}
    \mu(g,h) \mu(gh,k) = \mu(g,hk)\mu(h,k), \qquad g,h,k \in G.
\end{equation*}
The pointwise product $\mu_1 \cdot \mu_2$ of two Borel multipliers is also a Borel multiplier. A Borel 2-coboundary is a Borel multiplier that can be written as 
$$\mu(g,h) = \nu(g) \nu(h)/\nu(gh),\qquad g,h \in G,$$
where $\nu \in \hom(G,U(1))$ is measurable. The second Borel cohomology group of $G$ with indices in $U(1)$, 
$$H^2_{\text{Borel}} (G,U(1))$$
is the abelian group of equivalence classes of Borel 2-cocycles modulo Borel 2-coboundaries, with group structure given by $[\mu_1] \cdot [\mu_2] = [\mu_1 \cdot \mu_2]$, where $[\mu_1]$ and $[\mu_2]$ are equivalence classes of 2-cocycles with representatives $\mu_1$ and $\mu_2$.
We refer to Appendix \ref{appendix.borelcohomology} for a general discussion on Borel group cohomology. 

    \subsubsection{Equivalence of projective representations} \label{sec.equivalenceofprojectivereps}

    In this section we discuss an equivalence relation for CUP-reps, which will then be classified modulo tensor products with linear representations of $G$. We say a CUP-rep $\mathbf U$ of $G$ on $\HH$ is a continuous linear representation if it has a Borel lift with trivial multiplier $\mu \equiv 1$. In this case, the lift can be chosen to be a strongly continuous unitary representation of $G$ \cite{cattaneo}. We will also use the following remark: 

    \begin{remark}
    \noindent A unitary $V \in \UU(\HH,\HH')$ between separable Hilbert spaces induces a map $\widetilde V:\ \mathbb P(\HH) \to \mathbb P(\HH')$ via
    \begin{equation} \label{eq.inducedprojectivemap}
        \widetilde V p(\xi) := p'(V\xi),\qquad \xi \in \HH,
    \end{equation}
    where $p'$ denotes the projection from $\HH'\setminus \{0\} \to \mathbb P(\HH')$. The linear map $\widetilde V$ is injective, by definition, and surjective, as $V$ is a unitary. 
    \end{remark}
    
    \begin{definition} \label{def.equivalenceofcups}
        Two continuous unitary projective representations $(\mathbf U,\HH_{\mathbf U})$ and $(\mathbf V,\HH_{\mathbf V})$ are equivalent iff.\ there exist continuous linear representations $( {\mathbf U'}, \HH_{ {\mathbf U'}})$ and $( {\mathbf V'},\HH_{ {\mathbf V'}})$ and a unitary 
           $ {W} \in \UU(\HH_{\mathbf U} \otimes \HH_{ {\mathbf U'}}, \HH_{\mathbf V} \otimes \HH_{ {\mathbf V'}})$
        such that 
        \begin{equation}
            \tilde W \circ \mathbf U \otimes  {\mathbf U'} = \mathbf V \otimes  {\mathbf V'} \circ \tilde W,
            \end{equation}
            where $\tilde W$ is the map induced by $W$, as in the previous Remark. 
    \end{definition}

    The above is an equivalence relation on the set of continuous unitary projective representations of $G$. We denote by $\text{CUP}_G$ the abelian group of equivalence classes of CUP-reps with the group operation
    $$[\mathbf U] \cdot [\mathbf V] := [\mathbf U \otimes \mathbf V].$$
    The following classification Theorem holds:

    \begin{proposition} \label{thm.isomorphism.CUP.and.H2}
    There is a group isomorphism between $\text{CUP}_G$ and $H^2_{\text{Borel}} (G,U(1))$.
    \end{proposition}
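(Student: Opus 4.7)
The plan is to construct an explicit group isomorphism
$\Phi: \mathrm{CUP}_G \to H^2_{\mathrm{Borel}}(G,U(1))$
by sending a CUP-rep $\mathbf U$ to the cohomology class of the multiplier $\mu$ of any Borel lift $u_\mu$, whose existence is guaranteed by the Lemma after Proposition \ref{thm.borelsection}. I would first verify that this assignment is independent of the chosen lift: any two Borel lifts $u_\mu,u_{\mu'}$ of the same $\mathbf U$ differ by a Borel map $\nu:G\to U(1)$, i.e.\ $u_{\mu'}(g)=\nu(g)u_\mu(g)$, and a direct calculation shows $\mu'(g,h)=\mu(g,h)\,\nu(g)\nu(h)/\nu(gh)$, so $[\mu']=[\mu]$ in $H^2_{\mathrm{Borel}}(G,U(1))$.

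Next I would check that $\Phi$ descends to the equivalence relation of Definition \ref{def.equivalenceofcups}. Tensoring with a continuous linear representation $\mathbf U'$ is harmless, because a continuous linear representation admits a strongly continuous unitary lift (multiplier $\equiv 1$), so the multiplier of $u_\mu\otimes u_{1}$ is again $\mu$. If a unitary $W$ intertwines $\mathbf U\otimes \mathbf U'$ and $\mathbf V\otimes \mathbf V'$ at the projective level, then conjugation by $W$ sends the Borel lift of the left side to a Borel lift of the right side, and the two lifts of the same CUP-rep can only differ by a $U(1)$-valued Borel function, which by the first step represents a coboundary. Hence the multipliers of $\mathbf U$ and $\mathbf V$ agree in cohomology. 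Homomorphism of $\Phi$ is immediate: $u_{\mu_U}\otimes u_{\mu_V}$ is a Borel lift of $\mathbf U\otimes \mathbf V$ with multiplier $\mu_U\cdot\mu_V$.

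For surjectivity I would feed an arbitrary Borel 2-cocycle $\mu$ into the $\mu$-twisted left regular representation on $L^2(G,\mathrm{Haar})$, namely $(u_\mu(g)f)(x)=\mu(g,g^{-1}x)\,f(g^{-1}x)$. The cocycle identity makes this a Borel unitary $\mu$-multiplier representation, and by the Lemma after Proposition \ref{thm.borelsection} its projectivization $\mathbf U=\mathscr P\circ u_\mu$ is a genuine CUP-rep with $\Phi([\mathbf U])=[\mu]$.

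The main obstacle is injectivity. Suppose $\Phi([\mathbf U])=[1]$, i.e.\ the multiplier $\mu$ of some Borel lift $u_\mu$ is a coboundary $\mu(g,h)=\nu(g)\nu(h)/\nu(gh)$ for a measurable $\nu:G\to U(1)$. Then $\tilde u(g):=\overline{\nu(g)}\,u_\mu(g)$ is a Borel lift with trivial multiplier, hence a Borel homomorphism $G\to \UU(\HH)$. The key fact I would invoke is that for a compact (metrizable) group $G$ every Borel measurable unitary representation is automatically strongly continuous (a classical result going back to Weil/Mackey, used implicitly in \cite{cattaneo}). This turns $\tilde u$ into a strongly continuous linear representation lifting $\mathbf U$, so $\mathbf U$ is equivalent to the trivial CUP-rep via Definition \ref{def.equivalenceofcups} (take $\mathbf V$ trivial on $\bbC$, $\mathbf V'=\tilde u$, $\mathbf U'$ trivial, and $W=\mathrm{id}$). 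This establishes $\ker\Phi=\{[\mathrm{triv}]\}$ and completes the proof.
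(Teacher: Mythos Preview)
Your proof is correct and follows essentially the same architecture as the paper's: define the map via the multiplier of a Borel lift, check independence of the lift (the paper isolates this as a separate lemma), check it respects Definition~\ref{def.equivalenceofcups}, verify the homomorphism property on tensor products, then handle surjectivity and injectivity separately. You are actually more explicit than the paper in verifying compatibility with the equivalence relation and in spelling out the automatic-continuity step for injectivity (which the paper compresses into a single sentence, relying on Remark~\ref{remark.triviality} and \cite{cattaneo}).

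The one genuine difference is in surjectivity. You realize a given cocycle $\mu$ via the $\mu$-twisted left regular representation on $L^2(G)$, which is the quickest route. The paper instead invokes Mackey's correspondence between $\mu$-multiplier representations and ordinary representations of the central extension $G^{(\mu)}$, together with Peter--Weyl, to produce a \emph{finite-dimensional} irreducible CUP-rep with multiplier $\mu$ (this is the argument recycled from the proof of Lemma~\ref{lem.surjective}). Your construction is more elementary and perfectly sufficient for Proposition~\ref{thm.isomorphism.CUP.and.H2} in isolation; the paper's route has the advantage that the finite-dimensionality of the representative is exactly what is needed later to build an explicit $G$-state with prescribed index on a chain with bounded on-site dimension.
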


   \noindent For the proof, see Appendix \ref{appendix.cups}.
    
\subsection{SRE states and CUP representations}\label{sec: sre and cup}

We defer the full construction of the SPT index to Section \ref{sec.indexforGstates}. For now, we construct a \text{CUP}-rep starting from an SRE $G$-state, that is, a $G$-state that is strictly SRE rather than stably SRE.  This has been done by some authors before for discrete groups \cite{kapustin2021classification,ogata2021}, but some 
care is needed for the extension to continuous groups, e.g.\ as in Lemma \ref{lem: continuity of projective action}. For split states, an index for compact Lie groups was defined in \cite{Ragone:2024wie}, and it coincides with our definition. We make use of standard concepts like Galfand-Naimark-Segal (GNS) triple and we refer to \cite{bratteliI,bratteliII,naaijkens2017quantum} for their precise definitions and discussion.

\subsubsection{Construction of a CUP-rep} \label{sec.constructingcupreps}

    
    We partition the chain in half-chains  $\mathbb{Z}=L \cup R$ with 
     $L:= \{ j <0\},  R= \{ j \geq 0\}$. 
     Let $(\psi,\AA)$ be an $f$-SRE state with $\psi=\phi \circ \alpha_H$, with $\phi$ a product state and $\alpha_{H}$ an LGA.
     The product state $\phi$ has a GNS-triple of the form
    $$(\HH_{L,\phi} \otimes \HH_{R,\phi}, \pi_{L,\phi} \otimes \pi_{R,\phi}, \Omega_{L}\otimes \Omega_{R}),$$
where $(\HH_{L,\phi},\pi_{L,\phi}, \Omega_{L,\phi})$ and $(\HH_{R,\phi},\pi_{R,\phi}, \Omega_{R,\phi})$ are GNS triples of the restrictions $\phi|_L$ and $\phi|_R$, respectively. By a short calculation, it follows then that 
    \begin{align} 
\nonumber        &(\HH_{L}\otimes \HH_{R}, \pi_{L} \otimes \pi_{R}, \Omega) \\
\label{eq.splitgns}        &:=(\HH_{L,\phi}\otimes \HH_{R,\phi}, \pi_{L,\phi} \circ \alpha_{H_L} \otimes \pi_{R,\phi}\circ \alpha_{H_R}, \pi_{L}\otimes\pi_{R}(W^*) \Omega_{L}\otimes \Omega_{R})
    \end{align}
    is a GNS triple for $\psi$, where $\pi_{L}, \pi_{R}$ are again representations of $\caA_{L},\caA_R$ and 
    \begin{enumerate}
        \item $\alpha_{H_L}$ is an LGA on $\caA_L$ obtained by restricting $H$ to the halfline $L$, i.e.\ setting $H_L(S)=0$ whenever $S \not\subset L$.  Similarly for $\alpha_{H_R}$.
        \item   $W\in \AA$ is a unitary satisfying (see Theorem \ref{thm.lgadecomposition})
    \begin{equation*}
        \alpha_H = \alpha_{H_L}\otimes \alpha_{H_R} \circ \Ad{W}.
    \end{equation*}
    \end{enumerate}
    The existence of a factorized GNS representation as in \eqref{eq.splitgns} is known as the \emph{split property}.  The above remarks therefore show that  SRE states enjoy the split property.  
    Since the states $\phi_{\Gamma} \circ \alpha_{H_\Gamma}$ with $\Gamma = L,R$, are pure, the representations $(\HH_{\Gamma},\pi_{\Gamma},\Omega_\Gamma)$ are irreducible \cite{bratteliI}, and $\pi_\Gamma (\AA_\Gamma)'' = \mathcal B(\HH_\Gamma)$. 
    The following two lemmas are standard, see e.g. \cite{bratteliII}, and \cite{ogata2021,ogata2019classification}.

    \begin{lemma} \label{lem.strongcontinuity.unitaryimplementation}
        Let $(\psi,\AA,U)$ be an SRE $G$-state with GNS triple $(\HH_\psi,\pi_\psi,{\Omega_\psi})$. Then there exists a strongly continuous unitary representation $(\HH_\psi,u)$ of $G$, such that $u(g){\Omega_\psi} = {\Omega_\psi}$ and
        \begin{equation*}
		\pi_\psi \circ \beta_g = \Ad{u(g)} \circ \pi_\psi,\qquad g \in G.
	\end{equation*}
    \end{lemma}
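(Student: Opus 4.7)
The plan is to construct $u(g)$ from uniqueness of the GNS representation applied to the $G$-invariant state, then establish strong continuity by reducing to the dense subspace $\pi_\psi(\AA)\Omega_\psi$ and exploiting the finite on-site dimension.

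First, for each fixed $g \in G$, the $G$-invariance $\psi \circ \beta_g = \psi$ implies that both $(\HH_\psi, \pi_\psi, \Omega_\psi)$ and $(\HH_\psi, \pi_\psi \circ \beta_g, \Omega_\psi)$ are GNS triples for $\psi$. By the uniqueness (up to unitary equivalence fixing the cyclic vector) of the GNS construction, there exists a unique unitary $u(g) \in \UU(\HH_\psi)$ with $u(g)\Omega_\psi = \Omega_\psi$ and $\pi_\psi \circ \beta_g = \Ad{u(g)} \circ \pi_\psi$. Uniqueness also forces the homomorphism property: both $u(gh)$ and $u(g)u(h)$ implement $\beta_{gh}$ and fix $\Omega_\psi$, so they coincide.

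The main step is strong continuity. The key reduction is to show that $g \mapsto \beta_g(A)$ is norm-continuous for every $A \in \AA$. For local $A \in \AA_S$ with $S \in \caP_{\text{fin}}(\ZZ)$, the action is
$$
\beta_g(A) = \bigotimes_{j \in S} \Ad{U_j(g)}(A),
$$
a finite composition of Ad-maps. Since each $\HH_j$ is finite-dimensional, the strong continuity of $g \mapsto U_j(g)$ upgrades automatically to norm continuity, and hence $g \mapsto \beta_g(A)$ is norm-continuous for local $A$. For arbitrary $A \in \AA$, approximate in norm by local operators $A_n$, use the isometry $\|\beta_g(A) - \beta_g(A_n)\| = \|A - A_n\|$ uniformly in $g$, and pass to the limit.

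Given this, fix a vector of the form $\xi = \pi_\psi(A)\Omega_\psi$ with $A \in \AA$. Then
$$
u(g)\xi = u(g)\pi_\psi(A)u(g)^* u(g)\Omega_\psi = \pi_\psi(\beta_g(A))\Omega_\psi,
$$
and the right-hand side is continuous in $g$ by norm continuity of $g\mapsto \beta_g(A)$ and boundedness of $\pi_\psi$. Hence $g \mapsto u(g)\xi$ is continuous on the dense subspace $\pi_\psi(\AA)\Omega_\psi$. Since $\|u(g)\| = 1$ uniformly in $g$, a standard $\varepsilon/3$-argument extends strong continuity to all of $\HH_\psi$. The only point requiring care beyond the classical (discrete-group) argument is the norm continuity of $\beta_g$ on local operators, and this is precisely where the boundedness of $n_j$ and the assumed continuity of each on-site representation $U_j$ enter; no further obstacle arises.
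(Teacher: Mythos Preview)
Your proof is correct and follows the standard argument; the paper itself does not give a proof of this lemma but simply cites it as well-known (referring to Bratteli--Robinson and Ogata). One small remark: where you write that ``the boundedness of $n_j$'' is what makes $g\mapsto\beta_g(A)$ norm-continuous on local operators, in fact only the finite-dimensionality of each individual $\HH_j$ is needed (so that strong and norm continuity of $U_j$ coincide); the uniform bound $n_j\le n_{\max}$ plays no role here, and indeed the SRE hypothesis is not used either.
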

 
	\begin{lemma} \label{lem.splitunitary}
   Let $(\HH_\Gamma,\pi_\Gamma)$, $(\HH'_\Gamma,\pi'_\Gamma)$ be irreducible representations of $\AA_\Gamma$, for $\Gamma =L,R$. If there is a unitary $w\in \UU(\HH_{L} \otimes \HH_{R},\HH'_{L} \otimes \HH'_{R} )$ such that
		\begin{equation}
			\Ad{w} \circ \pi_{L} \otimes \pi_{R} = \pi'_{L}  \otimes \pi'_{R},  
		\end{equation}
		then there exist unitaries $w^{\Gamma} \in \UU(\HH_{\Gamma},\HH'_{\Gamma})$ such that 
		\begin{equation}
			w=w^L\otimes w^R,\qquad \pi'_{\Gamma}  = \Ad{w^{\Gamma}} \circ \pi_{\Gamma}.  
		\end{equation}
	\end{lemma}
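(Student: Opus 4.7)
The plan is to show that $\operatorname{Ad} w$ respects the left/right factorization at the level of the double commutants, and then invoke the classical fact that every $*$-isomorphism between type $\mathrm{I}$ factors is spatially implemented.

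First, since $\AA_L$ is unital, applying the hypothesis to $X=A\otimes \mathds{1}$ with $A\in\AA_L$ gives
\begin{equation*}
\operatorname{Ad}(w)\bigl(\pi_L(A)\otimes \mathds{1}_R\bigr)=\pi'_L(A)\otimes \mathds{1}'_R,
\end{equation*}
and similarly for $\mathds{1}\otimes B$ with $B\in\AA_R$. The map $\operatorname{Ad}(w)\colon \BB(\HH_L\otimes\HH_R)\to\BB(\HH'_L\otimes\HH'_R)$ is a spatial $*$-isomorphism, hence normal. Irreducibility of the $\pi_\Gamma$ and $\pi'_\Gamma$ gives $\pi_\Gamma(\AA_\Gamma)''=\BB(\HH_\Gamma)$ and $\pi'_\Gamma(\AA_\Gamma)''=\BB(\HH'_\Gamma)$, so taking weak closures in the two displayed identities yields
\begin{equation*}
\operatorname{Ad}(w)\bigl(\BB(\HH_L)\otimes \mathds{1}_R\bigr)=\BB(\HH'_L)\otimes \mathds{1}'_R,\qquad \operatorname{Ad}(w)\bigl(\mathds{1}_L\otimes\BB(\HH_R)\bigr)=\mathds{1}'_L\otimes\BB(\HH'_R).
\end{equation*}

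These inclusions let us define $*$-isomorphisms $\Phi^\Gamma\colon \BB(\HH_\Gamma)\to\BB(\HH'_\Gamma)$ by $\operatorname{Ad}(w)(A\otimes\mathds{1}_R)=\Phi^L(A)\otimes\mathds{1}'_R$ and $\operatorname{Ad}(w)(\mathds{1}_L\otimes B)=\mathds{1}'_L\otimes \Phi^R(B)$. Each $\Phi^\Gamma$ is a normal $*$-isomorphism between type $\mathrm{I}$ factors on separable Hilbert spaces, so by the standard spatial implementation theorem (see e.g.\ \cite{bratteliI}) there exist unitaries $w^\Gamma\in\UU(\HH_\Gamma,\HH'_\Gamma)$ with $\Phi^\Gamma=\operatorname{Ad}(w^\Gamma)$. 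Note $\HH_\Gamma$ and $\HH'_\Gamma$ must have the same dimension for this to be possible, a fact already forced by the existence of $\Phi^\Gamma$.

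Consider now the unitary $v:=w\,(w^L\otimes w^R)^*\in\UU(\HH'_L\otimes\HH'_R)$. By construction $\operatorname{Ad}(v)$ fixes $\BB(\HH'_L)\otimes\mathds{1}'_R$ and $\mathds{1}'_L\otimes\BB(\HH'_R)$ pointwise, hence fixes all of $\BB(\HH'_L\otimes\HH'_R)$; being in the centre of a factor, $v=\lambda\,\mathds{1}$ for some $\lambda\in U(1)$. Absorbing $\lambda$ into $w^L$ (i.e.\ replacing $w^L$ by $\lambda w^L$) gives the desired factorization $w=w^L\otimes w^R$. Finally, combining $\Phi^L=\operatorname{Ad}(w^L)$ with the identity $\operatorname{Ad}(w)(\pi_L(A)\otimes\mathds{1}_R)=\pi'_L(A)\otimes\mathds{1}'_R$ yields $\pi'_L=\operatorname{Ad}(w^L)\circ\pi_L$, and analogously on the right. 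The only nontrivial input is the spatial implementation of isomorphisms between type $\mathrm{I}$ factors, which I regard here as a black box rather than the main obstacle.
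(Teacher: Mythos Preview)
Your proof is correct. The paper does not actually give its own proof of this lemma; it simply states it as ``standard'' and cites \cite{bratteliII} and \cite{ogata2021,ogata2019classification}. Your argument via normal $*$-isomorphisms of type~$\mathrm{I}$ factors and spatial implementation is precisely the standard route those references follow, so there is nothing to compare.
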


  By Lemma \ref{lem.strongcontinuity.unitaryimplementation},   the G-invariance of an SRE $G$-state $(\psi,\AA,U)$ implies the existence of unitaries $u(g)$ implementing the symmetry. The group action on $\AA$ can be split between left and right, since it acts as a product on $\AA=\AA_L\otimes\AA_R$, i.e.\  $\beta_g=\beta_g^L \otimes \beta_g^R$. Therefore, we have
    \begin{equation*}
        \pi_L \otimes \pi_R \circ \beta_g^L \otimes \beta_g^R = \Ad{ u(g) } \circ \pi_L \otimes \pi_R. 
    \end{equation*}
    Hence, Lemma \ref{lem.splitunitary} gives  a (non-unique) factorization
    \begin{equation} \label{eq.splittinggroup}
        u(g) = u^L(g) \otimes u^R(g),
    \end{equation}
    and, consequently,
    \begin{equation*}
        \pi_\Gamma \circ \beta_g^\Gamma= \Ad{u^\Gamma(g)} \circ \pi_\Gamma,
    \end{equation*}
    for $u^\Gamma(g) \in \mathcal U(\HH_\Gamma)$. \\

The following lemma does not appear in the literature dealing with finite groups $G$, but it is crucial for us.
    \begin{lemma}\label{lem: continuity of projective action}
        Let $\mathscr P:\ \UU(\HH_R) \to \mathbb P(\UU(\HH_R))$ be as defined in section \ref{sec.cups}. Given a splitting of the unitary group representation as in equation \eqref{eq.splittinggroup}, the map 
        \begin{equation} \label{eq.cuprepresentation}
            \mathscr P \circ u^R:\ G \to \mathbb P(\UU(\HH_R))
        \end{equation}
        is a continuous unitary projective representation of $G$.
    \end{lemma}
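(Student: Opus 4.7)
The plan has two parts: verify the homomorphism property, then prove continuity. For the homomorphism property I would exploit the uniqueness-up-to-phase of the tensor factorization. Given $g,h \in G$, apply Lemma \ref{lem.splitunitary} (or rather, the standard fact underlying it) to the equation
\begin{equation*}
u^L(g)u^L(h) \otimes u^R(g)u^R(h) \;=\; u(g)u(h) \;=\; u(gh) \;=\; u^L(gh) \otimes u^R(gh),
\end{equation*}
to conclude that $u^R(g)u^R(h) = \omega(g,h)\, u^R(gh)$ for some $\omega(g,h) \in U(1)$. Since $\mathscr P$ is a group homomorphism annihilating $U(1)$, applying $\mathscr P$ to both sides yields the projective homomorphism property $(\mathscr P \circ u^R)(g) \cdot (\mathscr P \circ u^R)(h) = (\mathscr P \circ u^R)(gh)$.

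For continuity, since $G$ is metrizable it suffices to check sequential continuity at an arbitrary point $g \in G$. Let $g_n \to g$; by Lemma \ref{lem.strongcontinuity.unitaryimplementation}, $u(g_n) \to u(g)$ strongly on $\HH_L \otimes \HH_R$. Fix a unit vector $\xi \in \HH_L$. The main computational step is to apply the projection $P := \ket{u^L(g)\xi}\bra{u^L(g)\xi} \otimes \id_{\HH_R}$ to the relation $u(g_n)(\xi \otimes \eta) \to u(g)(\xi \otimes \eta)$, valid for each $\eta \in \HH_R$. This gives
\begin{equation*}
\lambda_n(\xi)\, u^R(g_n)\eta \;\longrightarrow\; u^R(g)\eta, \qquad \lambda_n(\xi) := \langle u^L(g)\xi,\, u^L(g_n)\xi\rangle,
\end{equation*}
with the coefficient $\lambda_n(\xi)$ depending on $\xi$ but \emph{not} on $\eta$. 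Taking norms and using that $u^R(g_n),u^R(g)$ are unitary forces $|\lambda_n(\xi)| \to 1$, so the phases $c_n := \lambda_n(\xi)/|\lambda_n(\xi)|$ are well-defined for large $n$, satisfy $|c_n|=1$, and
\begin{equation*}
c_n\, u^R(g_n)\eta \;\longrightarrow\; u^R(g)\eta \qquad \text{for every } \eta \in \HH_R,
\end{equation*}
i.e.\ $c_n u^R(g_n) \to u^R(g)$ in the strong operator topology.

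It remains to translate this phase-adjusted strong convergence into convergence in $\mathbb P(\UU(\HH_R))$. Let $V$ be any open neighborhood of $\mathscr P(u^R(g))$. By definition of the final topology for $\mathscr P$, the preimage $\mathscr P^{-1}(V) \subset \UU(\HH_R)$ is open and, being a preimage under a $U(1)$-quotient map, saturated under the multiplication action of $U(1)$. Since $c_n u^R(g_n) \to u^R(g) \in \mathscr P^{-1}(V)$ strongly, eventually $c_n u^R(g_n) \in \mathscr P^{-1}(V)$; by $U(1)$-saturation this is equivalent to $u^R(g_n) \in \mathscr P^{-1}(V)$, hence $\mathscr P(u^R(g_n)) \in V$ eventually. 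This establishes sequential, and therefore full, continuity.

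The main obstacle is conceptual rather than computational: the factorization $u(g) = u^L(g) \otimes u^R(g)$ is only defined up to a $U(1)$ ambiguity that is pointwise in $g$ and cannot in general be chosen continuously in $g$. The projective quotient $\mathscr P$ is precisely what kills this ambiguity, and the proof succeeds because the phase correction $c_n$ extracted from the projection onto $u^L(g)\xi$ turns out to be independent of the test vector $\eta$, yielding a single phase that works simultaneously for the whole operator.
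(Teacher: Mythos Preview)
Your proof is correct and takes a genuinely different route from the paper's.

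The paper never manufactures an explicit phase correction. Instead it observes that the phase ambiguity disappears automatically if one passes to the adjoint action: writing
\[
|\langle a,\,u^R(g)a\rangle|^2
=\langle a\otimes\Omega_L,\ \Ad{u(g)}\big(|a\rangle\langle a|\otimes\id_{\HH_L}\big)\ a\otimes\Omega_L\rangle,
\]
continuity of $g\mapsto\Ad{u(g)}(B)$ on all of $\mathcal B(\HH_L\otimes\HH_R)$ (obtained from strong continuity of $u$ on $\pi_\psi(\AA)$, extended by irreducibility of the GNS representation) immediately gives continuity of $g\mapsto|\langle a,u^R(g)a\rangle|$, which is the metric on $\mathbb P(\HH_R)$. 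The homomorphism property is taken for granted.

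Your argument instead stays at the unitary level: you project onto the left tensor leg to extract a scalar $\lambda_n(\xi)$, observe that its modulus must tend to $1$, and use the resulting phase $c_n$ to lift the convergence to $\UU(\HH_R)$ before descending to the quotient. This is more hands-on but also more self-contained: you do not invoke irreducibility of the GNS representation, and you spell out the projective homomorphism property rather than leaving it implicit. The paper's approach is slicker in that $\Ad{u(g)}$ is already phase-invariant, so no phase bookkeeping is needed; your approach makes the mechanism (a single $\eta$-independent phase suffices) completely transparent. Both arguments lean on metrizability of $G$ only for convenience.
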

    \begin{proof}Recall that the quotient topology in $\mathbb P(\HH_R)$ is metrizable, and we can write
        \begin{equation*}
            d(\mathbf{a},\mathbf{b}) = \inf\limits_{a\in \mathbf {a},\ b \in \mathbf {b}} \norm{\dfrac{a}{||a||} - \dfrac {b}{ ||b||}} = \left[ 2(1-|\langle a,b\rangle|) \right]^{\frac 12},
        \end{equation*}
        where the right-hand side does not depend on representatives $a,b$ of $\mathbf a, \mathbf b$. Let $\mathbf u^R := \mathscr P \circ u^R$. Then
        \begin{align} \label{eq.distrays}
            \dfrac 12 d(\mathbf u^R(g)\mathbf a, \mathbf a)^2 = 1 - |\langle u^R(g)a,a \rangle|, \qquad \mathbf a \in \mathbb P(\HH_R). 
        \end{align}

        By assumption of continuity of the on-site group action, it holds that $g \mapsto \Ad{u(g)} \left(\pi_\psi (A)\right)$ is continuous for each $A \in \AA$, and that continuity extends to
        \begin{equation} \label{eq: extended cont}
           g \mapsto \Ad{u(g)}(B), 
        \end{equation}
         for any  $B \in \mathcal B(\HH_L\otimes \HH_R)$.
         Indeed, the continuity extends to the strong closure of $\pi_\psi (\AA)$ in $\mathcal B(\HH_L\otimes \HH_R)$. Since 
     $\psi$ is a pure state, the GNS representation is irreducible, and this strong closure equals  $ \mathcal B(\HH_L\otimes \HH_R)$. 
     \par 
        We choose now vectors ${\Omega_R} \in \HH_R$ and $\Omega_L \in \mathcal H_L$, and let $a = \pi_R(A_R){\Omega_R} \in \HH_R$, for $A_R \in \AA_R$. Then
\begin{align*}
            \vert\langle  a, u^R(g) a \rangle\vert^2 &= \langle a, \Ad{u^R(g)}  (\ketbra{a}) a \rangle \\ &= \langle  a \otimes \Omega_L,  \Ad{u(g)} \ketbra{a}\otimes \mathds 1_{\HH_L}   
            a \otimes \Omega_L \rangle
        \end{align*}
 and by continuity of \eqref{eq: extended cont}, it follows that the map
        $$g\mapsto \vert\langle a,u^R(g)a \rangle\vert$$
        is continuous. Since  $\pi_R(\caA_R){\Omega_R}$ is dense in $\mathcal H_R$ (by irreducibility of the representation $(\HH_R,\pi_R)$), the continuity extends to any $a\in \mathcal H_R$. Along with \eqref{eq.distrays}, this proves continuity of the unitary projective representation $\mathbf u^R = \mathscr P \circ u^R$.

    \end{proof}

    \subsubsection{An index for SRE G-states}

    \noindent In the previous section, we explained how to construct a CUP-rep $\mathbf u^R$ from a SRE $G$-state $(\psi,\mathcal A, U)$, and how to associate an element of $H^2$ (which we henceforth use as an abbreviation for $H_{\text{Borel}}^2 (G,U(1))$) to a $\mathbf u^R$. The construction of $\mathbf u^R$ depends on the choice of a split GNS representation (via the choice of the LGA $\alpha_H$ and the product state $\phi$). However, we now prove that the associated element of  $H^2$ does not depend on these choices:
    
    \begin{proposition} \label{thm.index}
    The element of $H^2$ obtained as discussed above, is a well-defined function of the SRE G-state $(\psi,\mathcal A, U)$. Moreover, it is stable with respect to $G$-equivariant LGA's. 
    
    \end{proposition}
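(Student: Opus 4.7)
The plan is to verify that the $H^2$ class depends only on the $G$-state $(\psi, \AA, U)$, by showing that each ambiguity in the construction alters the CUP-rep $\mathbf u^R = \mathscr P \circ u^R$ only up to the equivalence of Definition~\ref{def.equivalenceofcups}. Stability is then obtained by constructing a convenient split GNS of $\psi' = \psi \circ \alpha_K$ on the same underlying Hilbert space as a given split GNS of $\psi$, and tracking how the implementing unitary $u(g)$ is modified.

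For a fixed split GNS, two factorizations $u(g) = u^L(g) \otimes u^R(g) = \tilde u^L(g) \otimes \tilde u^R(g)$ differ by a phase $c(g) \in U(1)$ exchanged between the factors, so $\mathbf u^R$ is intrinsic. Given two split GNS triples $(\HH_L \otimes \HH_R, \pi_L \otimes \pi_R, \Omega)$ and $(\HH'_L \otimes \HH'_R, \pi'_L \otimes \pi'_R, \Omega')$ of the same $\psi$, uniqueness of GNS provides an intertwining unitary $V$ with $V\Omega = \Omega'$, and Lemma~\ref{lem.splitunitary} factors it as $V = V_L \otimes V_R$. The $\Omega$-normalized implementing unitaries satisfy $u'(g) = V u(g) V^*$, and uniqueness of tensor factorizations gives $u'^R(g) = c(g)\, V_R u^R(g) V_R^*$ for some $c(g) \in U(1)$. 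Projecting, $\mathbf u'^R = \tilde V_R \circ \mathbf u^R \circ \tilde V_R^{-1}$, which is an equivalence of CUP-reps (taking the auxiliary linear representations in Definition~\ref{def.equivalenceofcups} to be trivial). This removes dependence on both $(\phi, \alpha_H)$ and the split structure, so by Proposition~\ref{thm.isomorphism.CUP.and.H2} the $H^2$ class is well-defined.

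For stability, decompose $\alpha_K = (\alpha_{K_L} \otimes \alpha_{K_R}) \circ \Ad{W_K}$ via Theorem~\ref{thm.lgadecomposition}. $G$-equivariance of $K$ descends to the restrictions $K_L, K_R$, so $\alpha_{K_L}\otimes\alpha_{K_R}$ is $G$-equivariant and hence so is $\Ad{W_K}$; triviality of the center of $\AA$ then forces $\beta_g(W_K) = c(g)\, W_K$ for a continuous character $c: G \to U(1)$. Starting from a split GNS $(\HH_L \otimes \HH_R, \pi_\psi, \Omega_\psi)$ of $\psi$, set $\pi'(A) := \pi_\psi((\alpha_{K_L} \otimes \alpha_{K_R})(A))$; this is again split irreducible, and a direct computation confirms that $(\HH_L \otimes \HH_R, \pi', \pi'(W_K)^* \Omega_\psi)$ is a GNS triple of $\psi'$. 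Because $\alpha_{K_L}\otimes\alpha_{K_R}$ is $G$-equivariant, the same $u(g)$ implements $\beta_g$ on $\pi'$, and the cyclic-vector normalization introduces exactly the phase $c(g)$, giving $u'(g) = c(g)\, u(g)$. Placing this phase on the left factor leaves $u'^R(g) = u^R(g)$, so the CUP-rep of $\psi'$ read off from this GNS coincides with that of $\psi$, and the well-definedness established above transfers the conclusion to any other split GNS of $\psi'$.

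The main obstacle I expect is the careful bookkeeping of the $U(1)$ ambiguities---the phases in tensor-product splittings of unitaries, and the character $c$ arising from the non-$G$-invariance of $W_K$---together with ensuring that the Borel section $\Sigma$ of $\mathscr P$, the splittings of $u(g)$, and the character $c$ are Borel or continuous as needed, so that the constructed intertwiners deliver genuine equivalences in $\text{CUP}_G$.
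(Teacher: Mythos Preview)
Your proposal is correct and follows essentially the same route as the paper: uniqueness of GNS plus Lemma~\ref{lem.splitunitary} to split the intertwiner for well-definedness, and the explicit split GNS of $\psi\circ\alpha_K$ built from that of $\psi$ via Theorem~\ref{thm.lgadecomposition} for stability. The paper's stability argument is slightly more streamlined---it works directly on the right half-chain, observing that $G$-equivariance of $\alpha_{K_R}$ gives $\pi_R\circ\alpha_{K_R}\circ\beta_g^R=\Ad{u^R(g)}\circ\pi_R\circ\alpha_{K_R}$, so the same $u^R$ implements the symmetry in the new split GNS---thereby bypassing your (correct but unnecessary) tracking of the character $c$ and the cyclic-vector normalization.
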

    \begin{proof}
        Let $(\HH_L \otimes \HH_R, \pi_L \otimes \pi_R, \Omega)$ and $(\HH_L' \otimes \HH_R', \pi_L' \otimes \pi_R',{\Omega'})$ be two split GNS triples associated to the SRE G-state $\psi$. By uniqueness of GNS representation, there exists a unitary $W: \HH_L\otimes \HH_R \to \HH_L'\otimes \HH_R'$ such that $\Ad{W} (\pi_L\otimes \pi_R) = \pi_L' \otimes \pi_R'$ and $W \Omega =  \Omega'$. As a consequence of Lemma \ref{lem.splitunitary}, it splits as
        $$W = W_L \otimes W_R, \qquad W_\Gamma \in \UU(\HH_\Gamma,\HH_\Gamma'),\ \Gamma = L,R.$$
        \par 
        Let $\mathscr P \circ u^R$ and $\mathscr P \circ u^{R'}$ be the $\text{CUP}_G$ obtained from the split GNS triples above. It holds
        \begin{align*}
            \pi_R'\circ \beta_g &= \Ad{u^{R'}(g)} \circ \pi_R'\\
            &= \Ad {W_R} \circ \pi_R \circ \beta_g \\
            &= \Ad {W_R u^R(g) W_R^*} \circ \pi'_R.
        \end{align*}
        It follows that $u^{R'} (g) W_R u^R(g) W_R^* \in \pi_R(\AA_R)' = \CC \mathds 1_R$. Hence
        $$u^{R'} (g) = c(g) W_R u^R(g) W_R^*, \qquad c(g) \in U(1),$$
        which implies that $\mathscr P \circ u^{R'}$ is equivalent (see Definition \ref{def.equivalenceofcups}) to $\mathscr P \circ u^R$. \\ 

        For the second part of the proof, let $\alpha_{H}$ be a G-equivariant LGA. If $(\HH_L \otimes \HH_R, \pi_L \otimes \pi_R,  \Omega)$ is a split GNS triple of $\psi$, the SRE $G$-state $(\psi \circ \alpha_H,\AA,U)$ has a split GNS triple of the form
        $$(\HH_L \otimes \HH_R, \pi_L \circ \alpha_{H_L} \otimes \pi_R \circ \alpha_{H_R}, \pi_L \otimes \pi_R (W^*) \Omega),$$
        where $\Ad{W} = \alpha_H \circ (\alpha_{H_L} \otimes \alpha_{H_R})^{-1}$, as in Theorem \ref{thm.lgadecomposition}. By using the $G$-equivariance of $\alpha_{H_R}$, we see that
        \begin{align*}
            \pi_R \circ \alpha_{H_R} \circ \beta_g &= \pi_R \circ \beta_g \circ \alpha_{H_R} \\
            &= \Ad{u^R(g)} \circ \pi_R \circ \alpha_{H_R},
        \end{align*}
        from which we conclude the $\text{CUP}_G$ remains the same, since the group can be represented by the same unitary action $u^R$.
    \end{proof}

    \spacing
We are now finally ready to  define the SPT-index for SRE $G$-states. 
    Let $(\psi,\mathcal A, U)$ be a SRE $G$-state. Its index $\sigma^{(\text{SRE})}_{(\psi,\mathcal A,U)}$ is set to be the element of $H^2$ corresponding (cf.\ Proposition \ref{thm.isomorphism.CUP.and.H2}) to a CUP-rep \eqref{eq.cuprepresentation} obtained from $(\psi,\AA,U)$.  This is well-defined by virtue of Proposition \ref{thm.index}.

\subsubsection{Properties of the index for SRE states}

Given $j\in \mathbb Z$, we define infinite half-chains
    $$L^{(j)}:= (-\infty, j)\cap \ZZ,\qquad R^{(j)}:= [j,\infty)\cap \ZZ.$$
    For any choice of $j$, we could perform the same construction as in Section \ref{sec: sre and cup}, replacing $R,L$ by $R^{(j)},L^{(j)}$, and obtain an index in $H^2_{\text{Borel}} (G,U(1))$. We now show that we would get the same index in this way. 
    \begin{lemma}
        The index does not depend on the choice of $j\in \bbZ$.
    \end{lemma}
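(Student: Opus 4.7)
The plan is to reduce the statement to the single-site shift $j \to j+1$ (the general case then follows by iteration) and, by translation, take $j = 0$. Start from any split GNS triple $(\HH_L \otimes \HH_R, \pi_L \otimes \pi_R, \Omega)$ at $j = 0$ with implementing unitaries $u^L(g), u^R(g)$ constructed as in Section~\ref{sec.constructingcupreps}. The idea is to identify, \emph{inside this same GNS data}, a split GNS at $j = 1$ whose associated CUP-rep differs from $\mathbf u^R$ only by tensoring with a \emph{linear} representation of $G$, hence defines the same class in $\text{CUP}_G \cong H^2_{\text{Borel}}(G, U(1))$.

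The algebraic decomposition $\AA_R = \AA_{\{0\}} \otimes \AA_{R^{(1)}}$, together with the fact that $\AA_{\{0\}} \cong M_{n_0}(\bbC)$ has a unique irreducible representation up to unitary equivalence and that $\pi_R$ is irreducible, produces a unitary identification $\HH_R \cong \HH_0 \otimes K$ under which $\pi_R = \pi_{\{0\}} \otimes \sigma$, with $\pi_{\{0\}}$ the defining representation of $\AA_{\{0\}}$ on $\HH_0$ and $\sigma$ a representation of $\AA_{R^{(1)}}$ on $K$. Irreducibility of $\sigma$ follows because $\pi_R(\AA_R)'' = \mathcal B(\HH_0) \otimes \sigma(\AA_{R^{(1)}})''$ must equal $\mathcal B(\HH_0 \otimes K)$. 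Regrouping $\HH_\psi \cong (\HH_L \otimes \HH_0) \otimes K$ now exhibits the structure of a split GNS at $j = 1$, with $\HH_{L^{(1)}} = \HH_L \otimes \HH_0$, $\HH_{R^{(1)}} = K$, $\pi_{L^{(1)}} = \pi_L \otimes \pi_{\{0\}}$, $\pi_{R^{(1)}} = \sigma$, and the same cyclic vector $\Omega$. By Proposition~\ref{thm.index}, the index at position $1$ may be computed from this particular split GNS.

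For the symmetry, $u^R(g)$ implements $\beta_g^R = \Ad{U_0(g)} \otimes \beta_g^{R^{(1)}}$ on $\pi_R(\AA_R)$, and by ultraweak continuity on $\pi_R(\AA_R)'' = \mathcal B(\HH_0 \otimes K)$. Since both $u^R(g)$ and $U_0(g) \otimes v(g)$, where $v(g)$ is the implementer of $\beta_g^{R^{(1)}}$ on $\sigma$, induce the same inner automorphism of $\mathcal B(\HH_0 \otimes K)$, they differ by a central phase $c(g) \in U(1)$, so that $u^R(g) = c(g)\, U_0(g) \otimes v(g)$. The implementer at position $1$ is $u^{R^{(1)}}(g) = v(g)$ up to a phase, and so at the projective level $\mathbf u^R = \mathbf U_0 \otimes \mathbf u^{R^{(1)}}$ as CUP-reps on $\HH_0 \otimes K$. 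Because $U_0$ is a genuine strongly continuous unitary representation of $G$ by assumption on the on-site action, $\mathbf U_0 = \mathscr P \circ U_0$ is a linear representation in the sense of Section~\ref{sec.equivalenceofprojectivereps}. By Definition~\ref{def.equivalenceofcups} and Proposition~\ref{thm.isomorphism.CUP.and.H2}, tensoring with $\mathbf U_0$ does not change the class in $\text{CUP}_G$ nor in $H^2_{\text{Borel}}(G, U(1))$, and the two indices agree.

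The principal technical step is the tensor factorization $\pi_R = \pi_{\{0\}} \otimes \sigma$ along $\AA_R = \AA_{\{0\}} \otimes \AA_{R^{(1)}}$, which is where the finite-dimensionality of $\AA_{\{0\}}$ is essential; the irreducibility of $\sigma$ on the remaining factor $K$ then upgrades this to a bona fide split GNS triple at $j = 1$. Continuity of $\mathbf v = \mathscr P \circ v$, needed for the CUP-rep framework, follows automatically from strong continuity of $u^R$ and $U_0$, so no extra regularity arguments are required.
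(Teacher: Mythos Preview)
Your argument is correct and takes a genuinely different route from the paper's proof. The paper compares two independently constructed split GNS triples at cuts $i<j$ via the uniqueness-of-GNS unitary $V$, then shows directly that $u^{R^{(i)}}(g)$ equals $V^* u^{R^{(j)}}(g) V$ times the linear representation $W_g=\pi_{R^{(i)}}(U_{[i,j)}(g))$ up to a phase, and reads off that the associated cocycles are cohomologous. You instead stay inside a single split GNS at $j=0$, exploit the finite-dimensionality of $\caA_{\{0\}}$ to factor $\pi_R=\pi_{\{0\}}\otimes\sigma$, and thereby \emph{exhibit} a split GNS at $j=1$ in the same Hilbert space; the equality $u^R(g)=U_0(g)\otimes v(g)$ then makes the $\text{CUP}_G$-equivalence $[\mathbf u^R]=[\mathbf U_0]\cdot[\mathbf u^{R^{(1)}}]=[\mathbf u^{R^{(1)}}]$ transparent. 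Your approach is more constructive and isolates exactly where the linearity of the on-site action enters; the paper's approach handles an arbitrary shift $i\to j$ in one step without induction.

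One small correction: in your last paragraph you invoke ``strong continuity of $u^R$'', but $u^R$ is in general only projectively continuous (this is precisely the content of Lemma~\ref{lem: continuity of projective action}). The continuity of $\mathbf v=\mathscr P\circ v$ that you need is not a consequence of strong continuity of $u^R$; rather, it follows immediately from Lemma~\ref{lem: continuity of projective action} applied to the split GNS at $j=1$ that you have just constructed. Since you already invoke Proposition~\ref{thm.index} for that split GNS, the same package of results from Section~\ref{sec.constructingcupreps} gives you the continuity for free, so no separate argument is required.
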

    
    \begin{proof}
        Let $i<j$ be two sites. The GNS representations $(\HH_{L^{(i)}}\otimes \HH_{R^{(i)}}, \pi_{L^{(i)}}\otimes \pi_{R^{(i)}})$ and $(\HH_{L^{(j)}}\otimes \HH_{R^{(j)}}, \pi_{L^{(j)}}\otimes \pi_{R^{(j)}})$ of $\psi$ are unitarily equivalent, via a unitary $V \in \UU(\HH_{L^{(i)}}\otimes \HH_{R^{(i)}},\HH_{L^{(j)}}\otimes \HH_{R^{(j)}})$, by uniqueness of the GNS representation. On the one hand, 
        \begin{align}
        \nonumber    \pi_{R^{(j)}} \circ \beta_g^{R^{(j)}} &= \Ad{u^{R^{(j)}} (g)} \circ \pi_{R^{(j)}} \\
        \label{eq.index-cut1}    &= \Ad{u^{R^{(j)}}(g)} \circ \Ad V \circ \pi_{R^{(i)}} .
        \end{align}
        On the other hand, 
        \begin{align} 
        \nonumber    \pi_{R^{(j)}} \circ \beta_g^{R^{(j)}}  &= \Ad V \circ \pi_{R^{(i)}} \circ \beta^{R^{(i)}}_g \circ \beta^{R^{(i)} \setminus R^{(j)}}_{g^{-1}}\\
        \label{eq.index-cut2} &= \Ad V \circ \Ad{u^{R^{(i)}}(g)} \circ \pi_{R^{(i)}} \circ \beta^{R^{(i)} \setminus R^{(j)}}_{g^{-1}}.
        \end{align}
        Define, for each region $S \in \mathcal P_{\text{fin}}(\ZZ)$, 
        $$U_S(g) := \bigotimes\limits_{j \in S} U_j(g),\qquad g \in G,$$
        where $U_j(g)$ is the on-site unitary group action. For an $h \in G$, we know $\beta_{h}^{R^{(i)}\setminus R^{(j)}} = \Ad {U_{[i,j)}(h)}$. It follows that $W_g := \pi_{R^{(i)}} (U_{[i,j)}(g))$ is a unitary representation. Then, as a consequence of \eqref{eq.index-cut1} and \eqref{eq.index-cut2}, it holds that
        \begin{align*}
            \Ad{u^{R^{(j)}}(g)} \circ \Ad V \circ \pi_{R^{(i)}} (A) &=\Ad V \circ \Ad{u^{R^{(i)}}(g)} \circ \pi_{R^{(i)}} \circ \beta^{R^{(i)} \setminus R^{(j)}}_{g^{-1}} (A) \\
            &= \Ad V \circ \Ad{u^{R^{(i)}}(g)} \circ \Ad{W_g} \circ \pi_{R^{(i)}} (A),
        \end{align*}
        for each $A \in \AA_{R^{(i)}}$. Hence
        \begin{align*}
            \Ad{ V^* u^{R^{(j)}} (g^{-1}) V u^{R^{(i)}}(g) W_g } \circ \pi_{R^{(i)}} (A) = \pi_{R^{(i)}} (A),
        \end{align*}
        which implies
        \begin{align} \label{eq.siteindependentindex}
            V^* u^{R^{(j)}} (g^{-1}) V u^{R^{(i)}}(g) W_g  \in \pi_{R^{(j)}}(\AA_{R^{(j)}})' = \CC \mathds 1,
        \end{align}
        where the equality follows from irreducibility of the GNS representation. Furthermore, it is clear that $[V^*u^{R^{(j)}}(g')V, W_g] = 0$ for every $g,g' \in G$. Let $\mu$ be a Borel multiplier associated to $u^{R^{(j)}}$. Then
        \begin{align*}
            V^* u^{R^{(j)}} (g) V W_g V^* u^{R^{(j)}} (h) V W_h u^{R^{(j)}} (h) &\left[ V^* u^{R^{(j)}} (gh) V W_{gh} \right]^{-1} \\ &= V^* u^{R^{(j)}} (g) u^{R^{(j)}} (h) \left[ u^{R^{(j)}}(gh)  \right]^{-1}V W_gW_hW_{gh}^{-1} \\ &= \mu(g,h) \cdot \mathds 1,
        \end{align*}
        since $W_g$ is a linear representation. Hence $V^*u^{R^{(j)}}V W$ is a Borel lift of the same CUP-rep as $u^{R^{(j)}}$. By equation \eqref{eq.siteindependentindex}, it holds that 
        \begin{align} \label{eq.index-cup-cohomologous}
            u^{R^{(i)}}(g) = \nu(g) V^* u^{R^{(j)}} (g) V W_g^*, \qquad g \in G, 
        \end{align}
        for some measurable map $\nu: G \to U(1)$. Hence, if $\mu^{(i)}, \mu^{(j)}$ are Borel cocycles associated to $u^{R^{(i)}}$ and $u^{R^{(j)}}$, then equation \eqref{eq.index-cup-cohomologous} implies $\mu^{(i)}(g,h) = \nu(g) \nu(h) \nu^{-1}(gh) \mu^{(j)}(g,h)$. The cocycles are cohomologous and, consequently, correspond to the same element in $H^2$. 
        
    \end{proof}

    \spacing 

    \begin{lemma} \label{lem.stackingindex}
        The index respects the stacking operation. Namely, given two SRE $G$-states $(\psi, \AA, U)$ and $(\psi', \AA', U')$, we have
        \begin{align*}
            \sigma^{(\text{SRE})}_{(\psi \stack \psi', \AA \stack \AA', U \stack U')} = \sigma^{\text{(SRE)}}_{(\psi, \AA, U)} \cdot \sigma^{\text{(SRE)}}_{(\psi',\AA', U')}.
        \end{align*}
    
    \end{lemma}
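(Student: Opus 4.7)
The plan is to construct an explicit split GNS triple for the stacked state from split GNS triples of its factors, and then read off that the right-half CUP-representation of the stack is, up to a canonical rearrangement of tensor factors, the tensor product of the individual right-half CUP-reps.

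First, I would pick split GNS triples $(\HH_L \otimes \HH_R, \pi_L \otimes \pi_R, \Omega)$ for $(\psi,\AA,U)$ and $(\HH'_L \otimes \HH'_R, \pi'_L \otimes \pi'_R, \Omega')$ for $(\psi',\AA',U')$, each obtained as in equation \eqref{eq.splitgns}. By Lemma \ref{lem.strongcontinuity.unitaryimplementation}, the $G$-invariance lifts to strongly continuous unitaries $u(g)$ and $u'(g)$ implementing the symmetries, and by Lemma \ref{lem.splitunitary} they admit factorizations $u(g)=u^L(g)\otimes u^R(g)$ and $u'(g)=u'^L(g)\otimes u'^R(g)$, producing the CUP-reps $\mathscr P\circ u^R$ and $\mathscr P\circ u'^R$ used to define the two indices.

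Next, I would exhibit a split GNS triple for $(\psi\stack\psi',\AA\stack\AA',U\stack U')$. The tensor product triple
\[
\bigl(\HH_L\otimes\HH_R\otimes\HH'_L\otimes\HH'_R,\ \pi_L\otimes\pi_R\otimes\pi'_L\otimes\pi'_R,\ \Omega\otimes\Omega'\bigr)
\]
is a GNS triple for $\psi\stack\psi'$ (as $\psi$ and $\psi'$ are pure, so the tensor representation is irreducible with cyclic vector $\Omega\otimes\Omega'$). Applying the canonical swap unitary $T$ that interchanges the middle two factors, I would obtain the unitarily equivalent split triple
\[
\bigl((\HH_L\otimes\HH'_L)\otimes(\HH_R\otimes\HH'_R),\ (\pi_L\otimes\pi'_L)\otimes(\pi_R\otimes\pi'_R),\ \Omega\otimes\Omega'\bigr),
\]
which is now split with respect to the stacked chain algebra $\AA\stack\AA' = (\AA_L\stack\AA'_L)\otimes(\AA_R\stack\AA'_R)$. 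The stacked symmetry $\beta_g\stack\beta'_g$ is implemented by $u(g)\otimes u'(g)$, and after conjugation by $T$ this becomes $(u^L(g)\otimes u'^L(g))\otimes(u^R(g)\otimes u'^R(g))$, giving a canonical splitting whose right-half factor is $u^R(g)\otimes u'^R(g)$.

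Therefore the CUP-rep associated to the stacked state is $\mathscr P\circ(u^R\otimes u'^R)$, i.e.\ the tensor product of the CUP-reps $\mathscr P\circ u^R$ and $\mathscr P\circ u'^R$. By the definition of the group law on $\mathrm{CUP}_G$ and the isomorphism of Proposition \ref{thm.isomorphism.CUP.and.H2}, passing to cohomology classes yields exactly the product in $H^2$, establishing the claim. The only real point of care is verifying that the tensor product of irreducible representations of $\AA_\Gamma$ and $\AA'_\Gamma$ gives an irreducible representation of $\AA_\Gamma\stack\AA'_\Gamma$, so that Proposition \ref{thm.index} applies to the constructed split triple and the resulting class is independent of the choices; this is standard for simple $C^*$-algebras and the argument goes through verbatim.
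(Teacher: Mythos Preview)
Your proof is correct and follows essentially the same approach as the paper: tensor the split GNS triples of the two factors to obtain a split GNS triple for the stacked state, observe that the right-half unitary factors as $u^R(g)\otimes u'^R(g)$, and conclude via the group law on $\mathrm{CUP}_G$ and Proposition~\ref{thm.isomorphism.CUP.and.H2}. The paper's proof is terser (it does not spell out the swap unitary or the irreducibility check you flag), but the underlying argument is the same.
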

    \begin{proof}
        The SRE $G$-state 
        $(\psi \stack \psi', \AA \stack \AA', U \stack U')$
        has a GNS triple $(\HH_\psi \stack \HH_{\psi'}, \pi_\psi \stack \pi_{\psi'}, \Omega_\psi \stack \Omega_{\psi'}).$ It is clear that 
        $(\mathscr P \circ u^R) \stack ( {\mathscr P}' \circ (u')^R)$
        is a CUP-rep (as in Lemma \ref{lem: continuity of projective action}) representing the stacked group action $\beta_g^R \stack (\beta_g')^R$. From the discussion in Section \ref{sec.equivalenceofprojectivereps}, it follows that
        \begin{align*}
            \left[ (\mathscr P \circ u^R) \stack ( {\mathscr P}' \circ (u')^R)  \right] = \left[ (\mathscr P \circ u^R) \right] \cdot \left[ ( {\mathscr P}' \circ (u')^R) \right],
        \end{align*}
        which proves the result. 
    \end{proof}

\subsection{An index for G-states} \label{sec.indexforGstates}
    In the previous subsections we explained how to obtain an $H^2$-valued index for a SRE $G$-state, and we proved some of its properties. Let $(\psi,\AA,U)$ be a $G$-state that is not necessarily strictly SRE. There exists a product state $(\phi,\AA_\phi)$ for which $(\psi \stack \phi, \AA \stack \AA_\phi)$ is SRE. We equip $(\phi,\AA_\phi)$ with a trivial group action $U_\phi \equiv \mathds 1$, and define the index
    \begin{align} \label{eq.definitionindexGstate}
        \sigma_{(\psi, \AA, U)} := \sigma^{(\text{SRE})}_{(\psi \stack \phi, \AA \stack \AA_\phi, U \stack U_\phi)}
    \end{align}
    of the $G$-state $(\psi,\AA,U)$ as the index of the SRE $G$-state $(\psi \stack \phi, \AA \stack \AA_\phi, U \stack U_\phi)$. 
    
    \spacing

    \begin{proposition} \label{thm.indexhomomorphism}
    The index map defined above restricts to a group homomorphism
    \begin{equation} \label{eq.indexmap}
        \sigma:\ \text{SPT}_G \to H^2_{\text{Borel}} (G,U(1)).
    \end{equation}

    \end{proposition}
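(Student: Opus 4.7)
The plan is to reduce Proposition \ref{thm.indexhomomorphism} to the invariance and multiplicativity properties already established for the SRE index. The key observation is that any special $G$-product state $(\phi, \AA_\phi, U_\phi)$ has trivial SRE index: triviality of the on-site $G$-charges means $U_j(g) \xi_j = \xi_j$ for each site $j$ and group element $g \in G$, so on the right half-chain GNS space $\HH_R$ of $\phi|_R$, realized as the incomplete tensor product along $(\xi_j)_{j \ge 0}$ with GNS vector $\Omega_R = \bigotimes_{j \ge 0} \xi_j$, the formal product $u^R(g) := \bigotimes_{j \ge 0} U_j(g)$ is a well-defined unitary. On the dense subspace of vectors $\pi_R(A) \Omega_R$ with $A$ supported on finitely many sites, only finitely many factors act nontrivially, so strong continuity of $g \mapsto u^R(g)$ reduces to that of the on-site actions; hence $u^R$ is a strongly continuous linear unitary representation implementing $\beta_g^R$, its CUP-rep lifts with trivial Borel multiplier, and the corresponding class in $H^2_{\text{Borel}}(G, U(1))$ is the identity.

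Given this, well-definedness of $\sigma$ in \eqref{eq.definitionindexGstate} is a routine consequence of Lemma \ref{lem.stackingindex}: if $(\phi, \AA_\phi, \mathds 1)$ and $(\phi', \AA_{\phi'}, \mathds 1)$ are two product states with trivial $G$-action stabilizing $(\psi, \AA, U)$ to an SRE $G$-state, then their joint stack stabilizes it as well, and
$$\sigma^{(\text{SRE})}_{\psi \stack \phi} = \sigma^{(\text{SRE})}_{\psi \stack \phi \stack \phi'} = \sigma^{(\text{SRE})}_{\psi \stack \phi'}$$
by the triviality of $\sigma^{(\text{SRE})}(\phi)$ and $\sigma^{(\text{SRE})}(\phi')$. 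The substantial step is descent to $\text{SPT}_G$: suppose $(\psi, \AA, U)$ and $(\psi', \AA', U')$ are $G$-stably equivalent with special $G$-product witnesses $\phi, \phi'$. The same triviality argument shows $\sigma(\psi) = \sigma^{(\text{SRE})}_{\psi \stack \phi}$ and $\sigma(\psi') = \sigma^{(\text{SRE})}_{\psi' \stack \phi'}$, and I would then check that the SRE index is preserved under $G$-equivalence by splitting into two invariances: (a) transport by the on-site intertwining isomorphism $\mathscr F = \bigotimes_j \Ad{\mathscr F_j}$ induces a canonical unitary equivalence of GNS triples that intertwines the unitary implementations of the symmetry, so the CUP-rep on the right half-chain is unchanged up to the equivalence of Definition \ref{def.equivalenceofcups}; (b) post-composition with a $G$-equivariant LGA preserves the SRE index, which is exactly the second statement of Proposition \ref{thm.index}. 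This yields $\sigma^{(\text{SRE})}_{\psi \stack \phi} = \sigma^{(\text{SRE})}_{\psi' \stack \phi'}$, hence $\sigma(\psi) = \sigma(\psi')$.

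The homomorphism property is then immediate from Lemma \ref{lem.stackingindex} applied after stabilization: stacking the respective stabilizers of two $G$-states yields a stabilizer for their stack, and multiplicativity of $\sigma^{(\text{SRE})}$ descends to $\sigma$. The main technical subtlety I anticipate is the triviality claim for special $G$-product states; the hypothesis $q_j \equiv 1$ is essential because without it the would-be infinite tensor product $\bigotimes_j U_j(g)$ fails to converge to a unitary on the incomplete tensor product space, and the lift of the symmetry to $\HH_R$ would then generically carry a nontrivial cocycle. Once this triviality is carefully established, everything else is a formal bookkeeping argument combining the stacking lemma with the invariance statements of Proposition \ref{thm.index}.
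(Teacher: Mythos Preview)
Your approach is essentially the same as the paper's: reduce everything to the multiplicativity of the SRE index (Lemma~\ref{lem.stackingindex}), the triviality of the SRE index on product $G$-states, and the invariance of the SRE index under $G$-equivalence (Proposition~\ref{thm.index}). Your explicit verification that special $G$-product states carry trivial SRE index, and your remark that invariance under the on-site intertwiner $\mathscr F$ needs a separate (easy) check, are both useful additions that the paper leaves implicit.

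There is, however, a genuine gap in your descent argument. When $(\psi,\AA,U)$ and $(\psi',\AA',U')$ are $G$-stably equivalent with special $G$-product witnesses $\phi,\phi'$, you write $\sigma(\psi)=\sigma^{(\mathrm{SRE})}_{\psi\stack\phi}$ and then invoke $G$-equivalence of $\psi\stack\phi$ and $\psi'\stack\phi'$. But $\sigma^{(\mathrm{SRE})}$ is only defined on SRE $G$-states, and nothing in the hypotheses guarantees that $\psi\stack\phi$ is SRE: the state $\psi$ is merely stably SRE, and the witness $\phi$ for $G$-stable equivalence is in general unrelated to the product state that stabilizes $\psi$ to an SRE state. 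Your ``same triviality argument'' would read $\sigma^{(\mathrm{SRE})}_{\psi\stack\tau}=\sigma^{(\mathrm{SRE})}_{\psi\stack\tau\stack\phi}=\sigma^{(\mathrm{SRE})}_{\psi\stack\phi}$, and the last term is exactly the one whose well-definedness is in question.

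The fix is what the paper does: stack with \emph{both} the $G$-stable equivalence witnesses $\varphi,\varphi'$ and the trivial-action stabilizers $\phi,\phi'$ simultaneously. Then $\psi\stack\varphi\stack\phi\stack\phi'$ and $\psi'\stack\varphi'\stack\phi\stack\phi'$ are genuinely SRE (because $\psi\stack\phi$ and $\psi'\stack\phi'$ are, and stacking an SRE state with a product state preserves SRE), they are $G$-equivalent, and their SRE indices equal $\sigma(\psi)$ and $\sigma(\psi')$ respectively by multiplicativity and the triviality of all the product factors. Once you route the argument through this larger stack, your proof goes through.
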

    \begin{proof}
        We first prove that this map is well-defined. Firstly, we prove it does not depend on the choice of stacked $G$-product state $(\phi, \AA_\phi, \mathds 1)$ considered in equation \eqref{eq.definitionindexGstate}. Choose a different product state $( \phi',\AA_{\phi'})$ such that $(\psi \stack \phi', \AA \stack \AA_{\phi'} , U \stack U_{\phi'})$ is G-invariant SRE, where $U_{ \phi'}$ is again a trivial group action. It holds that:
    \begin{enumerate} [label=\roman*)]
        \item The indices of $(\phi, \AA_\phi, U_{\phi})$ and $(\phi', \AA_{\phi'}, U_{\phi'})$ are trivial,

        \item SRE $G$-states 
        $$(\psi \stack \phi \stack \phi', \AA \stack \AA_\phi \stack \AA_{\phi'}, U \stack U_\phi \stack U_{ \phi'})$$ 
        and 
        $$(\psi \stack \phi' \stack \phi, \AA \stack \AA_{\phi'} \stack \AA_\phi, U \stack U_{\phi'} \stack U_\phi)$$
        are $G$-equivalent, hence, by Proposition \ref{thm.index}, have the same SRE index. 
    \end{enumerate}
    Consequently, by Lemma \ref{lem.stackingindex}, it holds
    \begin{align*}
        \sigma_{(\psi \stack \phi, \AA \stack \AA_\phi, U \stack U_\phi)} &= \sigma_{(\psi \stack \phi, \AA \stack \AA_\phi, U \stack U_\phi)} \cdot \sigma_{(\phi, \AA_\phi, U_\phi)} 
        \\ &= \sigma_{(\psi \stack \phi \stack \phi', \AA \stack \AA_\phi \stack \AA_{ \phi'}, U \stack U_\phi \stack U_{\phi'})} \\ &= \sigma_{(\psi \stack \phi' \stack \phi, \AA \stack \AA_{\phi'} \stack \AA_\phi, U \stack U_{\phi'} \stack U_\phi)} \\ &=
        \sigma_{(\psi \stack \phi', \AA \stack \AA_{ \phi'} , U \stack U_{\phi'})} \cdot \sigma_{(\phi', \AA_{\phi'},U_{\phi'})} \\
        &= \sigma_{(\psi \stack \phi', \AA \stack \AA_{\phi'} , U \stack U_{\phi'})}.
    \end{align*}
    Hence the index does not depend on the choice of stacked $G$-product state. \par
    Secondly, we prove the index does not depend on a representative of a class in $\text{SPT}_G$: let $(\psi,\AA,U)$ and $(\psi',\AA',U')$ be $G$-stably equivalent. There are $G$-product states $(\varphi, \AA_\varphi, U_\varphi)$ and $(\varphi',\AA_{\varphi '},U_{\varphi'})$ such that $(\psi \stack \varphi,\AA \stack \AA_\varphi, U \stack U_\varphi)$ and $(\psi' \stack \varphi',\AA' \stack \AA_{\varphi'}, U'\stack U_{\varphi '})$ are $G$-equivalent. On the other hand, there are $G$-product states $(\phi, \AA_\phi, \mathds 1)$ and $(\phi',\AA',\mathds 1)$ such that $(\psi \stack \phi , \AA\stack \AA_\phi, U \stack \mathds 1)$ and $(\psi'\stack \phi', \AA\stack \AA_{\phi'}, U'\stack \mathds 1)$ are SRE $G$-states. Thus, the SRE $G$-states
    $$(\psi \stack \varphi \stack \phi \stack \phi') \qquad \text{and} \qquad (\psi' \stack \varphi' \stack \phi \stack \phi')$$
    are $G$-equivalent. By Proposition \ref{thm.index}, they have the same index. Since $G$-product states have trivial index, Lemma \ref{lem.stackingindex} allows us to conclude
    $$\sigma_{(\psi,\AA,U)} = \sigma^{\text{SRE}}_{(\psi\stack \phi, \AA \stack \AA_\phi, U\stack U_\phi)} = \sigma^{\text{SRE}}_{(\psi'\stack \phi', \AA' \stack \AA_{\phi'}, U'\stack U_{\phi'})} = \sigma_{(\psi',\AA',U')}.$$
    Hence the index for $G$-states is well-defined. It also follows from Lemma \ref{lem.stackingindex} and the first part of this proof that the index map for $G$-states respects the stacking operation. 
    \end{proof}
    
    

    
    \spacing 

    Next, we show that every element of $H^2$ corresponds to some $G$-state: 
    \begin{lemma} \label{lem.surjective}
        The index map \eqref{eq.indexmap} is surjective.
    \end{lemma}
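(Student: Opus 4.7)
The plan is to construct, for each $[\mu]\in H^2_{\text{Borel}}(G,U(1))$, an explicit bond-dimer (AKLT-type) $G$-state whose SPT index is $[\mu]$. First, pick a Borel $2$-cocycle $\mu$ representing the class, together with a finite-dimensional Borel $\mu$-multiplier representation $(u_\mu,\HH_\mu)$ of $G$ with $d_\mu:=\dim\HH_\mu<\infty$. Such a finite-dimensional lift exists because the compact central $U(1)$-extension $\widetilde G_\mu$ determined by $\mu$ is itself compact (and metrizable), so by Peter-Weyl it admits an irreducible continuous unitary representation on which the central $U(1)$ acts by its defining character; this descends to a finite-dimensional Borel $\mu$-multiplier representation of $G$. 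Let $(\bar u_\mu,\bar\HH_\mu)$ denote the complex conjugate, a Borel $\bar\mu$-multiplier representation.

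I build the $G$-state as follows: at each $j\in\bbZ$ set the on-site Hilbert space $\HH_j:=\HH^{(L)}_j\otimes\HH^{(R)}_j$ with $\HH^{(L)}_j\cong\HH_\mu$, $\HH^{(R)}_j\cong\bar\HH_\mu$, and on-site symmetry $U_j(g):=u_\mu(g)\otimes\bar u_\mu(g)$. The cocycles satisfy $\mu\bar\mu=1$ and the two $U(1)$-ambiguities of the Borel lifts cancel in the tensor $u_\mu\otimes\bar u_\mu$, so $U_j$ descends from the continuous CUP-rep $\mathscr P\circ(u_\mu\otimes\bar u_\mu)$ and is therefore a strongly continuous unitary representation. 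Define $\psi$ as the unique pure state on $\AA$ that, after reclustering the chain into \emph{bonds} $B_j:=\HH^{(R)}_{j-1}\otimes\HH^{(L)}_j\cong\bar\HH_\mu\otimes\HH_\mu$, factorizes into the projector on the canonical maximally-entangled vector $|\Omega_j\rangle:=d_\mu^{-1/2}\sum_k|\bar k\rangle\otimes|k\rangle$ on each bond. The standard identity $(\bar u_\mu(g)\otimes u_\mu(g))|\Omega_j\rangle=|\Omega_j\rangle$ gives $G$-invariance of $\psi$, and a two-layer circuit of bond-local unitary rotations on adjacent-pair sites (an LGA by Section \ref{sec.trivialTDI} and Lemma \ref{lem.lgaproperties}) maps $\psi$ to a pure product state, so $\psi$ is SRE. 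Hence $(\psi,\AA,U)$ is a SRE $G$-state.

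To compute the index, cut at $L=\{j<0\}$, $R=\{j\geq 0\}$: exactly one bond, namely $B_0=\HH^{(R)}_{-1}\otimes\HH^{(L)}_0$, straddles the cut, all others lying entirely on one side. In the split GNS of $\psi$ guaranteed by the SRE property (Section \ref{sec: sre and cup}), the right factor $\HH_R$ thus contains a dangling copy of $\HH^{(L)}_0\cong\HH_\mu$ on which the symmetry acts by $u_\mu(g)$, tensored with a strongly continuous linear representation $v(g)$ built from the fully contained bonds $B_j$, $j\geq 1$. The implementing unitary splits as $u^R(g)=u_\mu(g)\otimes v(g)$, and $\otimes$-factoring by the linear representation $v$ is precisely the equivalence in Definition \ref{def.equivalenceofcups}. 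Therefore $\mathscr P\circ u^R$ represents the same class in $\text{CUP}_G$ as $\mathscr P\circ u_\mu$, namely $[\mu]$ by Proposition \ref{thm.isomorphism.CUP.and.H2}, so $\sigma_{(\psi,\AA,U)}=[\mu]$. The main obstacle will be making the dangling-factor argument fully rigorous: one must exhibit the split GNS of $\psi$ explicitly via the entangling unitary on $B_0$ and apply Lemma \ref{lem.splitunitary} to extract the claimed factorization $u^R(g)=u_\mu(g)\otimes v(g)$ with $v$ strongly continuous and linear; a secondary, but standard, point is the appeal to Peter-Weyl on the compact extension $\widetilde G_\mu$.
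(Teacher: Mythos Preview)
Your construction is essentially identical to the paper's: both build the AKLT-type bond-dimer state on $\HH_j=\HH_\mu\otimes\bar\HH_\mu$ with the singlet on each bond $\bar\HH_\mu\otimes\HH_\mu$, invoke Peter--Weyl on the compact $U(1)$-extension to obtain the finite-dimensional multiplier representation, and read off the index from the dangling $\HH_\mu$ at the cut. Your presentation is somewhat more detailed (the explicit two-layer SRE circuit, the split-GNS factorization $u^R=u_\mu\otimes v$), but the paper's proof follows the same line and at a comparable level of rigor.
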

    \begin{proof}
    Given a 2-cocycle $\mu$, representative of an element $[\mu] \in H^2$, one can construct a finite-dimensional irreducible CUP-rep $(\mathcal V,v)$ of $G$ with $\mu$ as a Borel multiplier, that is, a pair $(\mathcal V,v)$ consisting of a finite-dimensional Hilbert space $\mathcal V$ and maps $\mathcal:\ g \to \UU(\mathcal V)$ satisfying
        \begin{equation*}
            v(g)v(h) v(gh)^{-1} = \mu(g,h),\qquad g,h \in G.
        \end{equation*} 
    In fact, irreducible CUP-reps of $G$ correspond uniquely to certain irreducible unitary representations of the group extension $G^{(\mu)}$ of $G$ by $U(1)$ \cite{mackeyI}. Given a suitable topology, $G^{(\mu)}$ is compact, and by a corollary of Peter Weyl's theorem, see \cite{johnson}, $\mathcal V$ is finite-dimensional. \par 

    We construct a triple $(\psi,\AA,U)$ with index $[\mu] \in H^2_{\text{Borel}}(G,U(1))$. Define on-site Hilbert spaces $\HH_j := \mathcal V\otimes \overline {\mathcal V}$, where $(\overline{\mathcal V}, \overline v)$ is the complex conjugate multiplier representation of $(\mathcal V,v)$. The unitary group action is then the linear representation $g \mapsto v(g) \otimes \overline{v(g)}$. The chain algebra $\AA$ is constructed from the on-site algebras $\AA_j = \mathcal B(\mathcal V\otimes \overline {\mathcal V})$. \par 
    There is a trivial representation $t_{j,j+1}\in \overline {\mathcal V}_j \otimes {\mathcal V}_{j+1}$, see the proof of Part (ii) of Proposition \ref{thm.singlets}. Let $\psi$ be the entangled pair state obtained formally as 
    $$\bigotimes\limits_{j\in \ZZ} \langle t_{j,j+1},(\ \cdot\ ) t_{j,j+1}\rangle.$$
    By construction it is a $G$-state. We claim the index of $(\psi, \AA, v \otimes \overline v)$ is $[\mu]$. In fact, when restricted to $R^{(0)}$, the group acts formallly as 
    $$v(g) \otimes (\overline {v(g)} \otimes v(g)) \otimes (\overline {v(g)} \otimes v(g)) \otimes \dots,$$
    which, in effect, corresponds to the projective representation $(\mathcal V,v)$, with index $[\mu]$.
    \end{proof}

    \spacing

    \begin{remark} [Triviality of the index] \label{remark.triviality}
    As already pointed out in section \ref{sec.equivalenceofprojectivereps}, a CUP-rep $\mathbf U$ of $G$ has trivial $H^2$ index iff.\ it has a lift $u$ that is a strongly continuous unitary representation of $G$ (Remark 2' of \cite{cattaneo}). Therefore, the index of a class $[(\psi,\AA,U)] \in \text{SPT}_G$ is trivial iff.\ the unitary action $u^R$ can be chosen to be a strongly continuous unitary representation of $G$ on $\HH_R$.
    \end{remark}

	
	\subsection{Completeness of Classification}

    We state our result regarding completeness of the classification of 1d SPT's and we use it to give a proof of Theorem \ref{thm.grouphomomorphism}.
    
	\begin{theorem} \label{thm.main}
		Let $(\psi,\AA,U)$ be a G-state. If its index is trivial, i.e.\ 
        $$\sigma_{(\psi,\AA,U)} = [1],$$
        then $(\psi,\AA,U)$ is G-stably equivalent (see Definition \ref{def.gequivalence}) to a special G-product state.
	\end{theorem}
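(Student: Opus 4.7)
The plan is to leverage the triviality of the index to construct a $G$-equivariant TDI that deforms $\psi$, after a suitable stacking with a special $G$-product state, into another special $G$-product state.

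First, I would reduce to the case where $\psi$ is strictly SRE rather than merely stably SRE, by stacking with a special $G$-product state whose on-site dimensions are large enough to accommodate the constructions below; triviality of the index is preserved under stacking by Lemma \ref{lem.stackingindex}. By Remark \ref{remark.triviality}, the half-chain implementer $u^R$ on $\HH_R$ may be chosen as a strongly continuous unitary representation of $G$, and the factorization $u = u^L\otimes u^R$ implementing $\beta_g = \beta_g^L\otimes\beta_g^R$ forces $u^L$ to be strongly continuous and linear as well. Since the SPT index is cut-independent, for every $j\in\ZZ$ we obtain a compatible family of linear strongly continuous representations on the right-half GNS Hilbert spaces at cut $j$.

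Second, I would work from the Schmidt decomposition $\Omega = \sum_n s_n\, \xi_n\otimes\eta_n$ of the invariant GNS vector at each cut. The identity $u^L(g)\otimes u^R(g)\,\Omega = \Omega$ combined with the linearity of $u^L,u^R$ forces a conjugate-representation compatibility between the Schmidt systems $\{\xi_n\}$ and $\{\eta_n\}$. Using the Schmidt-vector estimates of Section \ref{sec.technical}, one gets, uniformly in the cut, finite-dimensional $G$-invariant subspaces of $\HH_L\otimes\HH_R$ that approximate $\Omega$ arbitrarily well. Lifting these finite-dimensional approximations to $G$-equivariant local unitaries in a neighborhood of each cut, and assembling them into an $X$-anchored TDI on growing intervals with $f$-norm controlled by the Schmidt tail decay, produces a $G$-equivariant LGA that successively trivializes the entanglement across each bond. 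The limiting state is a $G$-product state on the stacked algebra, and by a final routine stacking with a special $G$-product state one matches the on-site dimensions and $G$-representation types with those of a prescribed special $G$-product state.

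The main obstacle is the bond-by-bond construction in the second step: in the finite-group setting of \cite{kapustin2021classification} one can average arbitrary local disentanglers over $G$ to enforce equivariance, but for compact continuous $G$ this destroys the $f$-norm bounds needed to stay in the LGA class. The Schmidt-vector approach circumvents averaging entirely by using the intrinsic $G$-covariant structure already present in $\Omega$; the required decay and stability estimates on Schmidt vectors are precisely the content of Section \ref{sec.technical}, and that is what makes the continuous-group argument go through under the uniformly bounded on-site dimension constraint.
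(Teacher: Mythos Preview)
Your outline captures the high-level philosophy---use the Schmidt structure at each cut, together with linearity of $u^R$, to extract $G$-covariant data---but the crucial middle step is a genuine gap. You write ``lifting these finite-dimensional approximations to $G$-equivariant local unitaries in a neighborhood of each cut'', but there is no mechanism explained for producing a $G$-equivariant element of the \emph{chain algebra} from a $G$-invariant subspace of the \emph{GNS Hilbert space}. The paper's solution is not a lift at all: it stacks with an auxiliary chain algebra $\caA'$ whose on-site space at $j$ contains the conjugate representation $\overline{\mathcal J^{(j)}}$, so that the singlet in $\mathcal J^{(j)}\tilde\otimes\overline{\mathcal J^{(j)}}$ defines a genuinely $G$-invariant vector state $\gamma_j$ that is factorized across the cut and $\tilde f$-close to $\psi\tilde\otimes\phi'$ on the right (Proposition~\ref{cor.invariantstates}). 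One never disentangles $\psi$ bond by bond; one builds independent approximants $\gamma_j$ and then connects $\gamma_j$ to $\gamma_{j+\ell}$ by an anchored $G$-equivariant LGA.

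Two further ingredients are missing and are not optional. First, connecting $f$-close $G$-states by a $G$-equivariant anchored TDI (Proposition~\ref{prop.connectingclosestates}) requires the relative $0$-dimensional charge $q[\,\cdot\,/\,\cdot\,]$ to vanish; the approximants $\gamma_j$, $\nu_{j,\ell}$ have no reason to satisfy this, and the paper devotes Lemma~\ref{lem.chargesequence} to stacking with a further algebra $\caA^{(c)}$ of charged product states to cancel these charges along the whole sequence. Your proposal does not mention this obstruction. Second, ``assembling them into an $X$-anchored TDI on growing intervals'' glosses over the fact that an infinite composition of LGAs is generally \emph{not} an LGA; the paper needs Lemma~\ref{lem.infiniteproduct}, which requires the anchoring sites to be $\ell$-sparse for $\ell$ large depending only on $f$, and this dictates the entire structure of the argument (one steps in increments of $\ell$, not bond by bond). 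Without these three pieces---the singlet-via-stacking construction of the approximants, the $0$-charge bookkeeping, and the sparse-composition lemma---the outline does not yield a proof.
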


    \spacing

    \noindent Theorem \ref{thm.main} is proved in Section \ref{sec.proofmain}, since it requires a sequence of technical preliminary results that we discuss carefully over the rest of the paper. 

So far, we have proved already in Lemma \ref{lem.surjective} that the index map is surjective. By Theorem \ref{thm.main}, we obtain the injectivity. 
    \begin{corollary} \label{cor.injective}
        The index map \eqref{eq.indexmap}
        is injective.
    \end{corollary}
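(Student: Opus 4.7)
The plan is to deduce injectivity of $\sigma$ purely from Theorem~\ref{thm.main} together with Lemma~\ref{lem.surjective} and the monoid structure of $\text{SPT}_G$ under stacking. The core observation is that Theorem~\ref{thm.main} says precisely that the kernel of $\sigma$ (interpreted as the preimage of the identity in $H^2_{\text{Borel}}(G,U(1))$) is the identity class in $\text{SPT}_G$, which is the class containing special $G$-product states. Once one has this, standard manipulations in a commutative monoid with enough inverses close the argument.

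Concretely, suppose $[(\psi,\AA,U)]$ and $[(\psi',\AA',U')]$ are two classes in $\text{SPT}_G$ with $\sigma_{(\psi,\AA,U)} = \sigma_{(\psi',\AA',U')}$. The first step is to manufacture an ``inverse'': by Lemma~\ref{lem.surjective} there exists a $G$-state $(\chi,\AA_\chi,U_\chi)$ whose index is the inverse of $\sigma_{(\psi,\AA,U)}$ in the group $H^2_{\text{Borel}}(G,U(1))$. Since $\sigma$ is a group homomorphism with respect to stacking (Proposition~\ref{thm.indexhomomorphism}), I then have
\begin{equation*}
\sigma_{(\psi \stack \chi,\, \AA \stack \AA_\chi,\, U \stack U_\chi)} = \sigma_{(\psi,\AA,U)} \cdot \sigma_{(\chi,\AA_\chi,U_\chi)} = [1],
\end{equation*}
and likewise $\sigma_{(\psi' \stack \chi,\, \AA' \stack \AA_\chi,\, U' \stack U_\chi)} = [1]$ because by hypothesis $\psi$ and $\psi'$ carry the same index. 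Applying Theorem~\ref{thm.main} to each of these two stacked $G$-states, both are $G$-stably equivalent to special $G$-product states, i.e.\ both represent the identity in $\text{SPT}_G$.

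The final step is to combine these trivialities. Stacking with a special $G$-product state does not alter a $G$-stable equivalence class (this is immediate from Definition~\ref{def.gequivalence}), so the identity class acts as the neutral element for stacking. Using associativity and commutativity of $\stack$ on equivalence classes, I compute
\begin{equation*}
[(\psi,\AA,U)] = [(\psi,\AA,U)] \stack [(\psi' \stack \chi)] = [(\psi \stack \chi)] \stack [(\psi',\AA',U')] = [(\psi',\AA',U')],
\end{equation*}
which yields injectivity and, together with surjectivity, promotes $\text{SPT}_G$ to a group isomorphic to $H^2_{\text{Borel}}(G,U(1))$, proving Theorem~\ref{thm.grouphomomorphism} as well.

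The main obstacle is of course Theorem~\ref{thm.main} itself, which does all the topological work; at the level of this corollary the only thing one should verify with care is that stacking descends to a well-defined, associative and commutative operation on equivalence classes and that special $G$-product states serve as a two-sided identity, but these are bookkeeping facts built into Definition~\ref{def.gequivalence} and already used implicitly in Proposition~\ref{thm.indexhomomorphism}.
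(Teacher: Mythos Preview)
Your argument is correct and essentially identical to the paper's own proof: both pick a $G$-state with inverse index (via surjectivity), use Theorem~\ref{thm.main} to trivialize the two stacked states, and then cancel through the commutative monoid structure of $\text{SPT}_G$. The only cosmetic difference is that the paper phrases the last step via transitivity through the triple stack $\psi'\stack\overline\psi\stack\psi$, whereas you write it as a chain of equalities of equivalence classes.
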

    \begin{proof}
    Let $(\psi,\AA,U),\ (\psi',\AA',U')$ have the same index. By surjectivity of $\sigma$, there is a corresponding $(\overline{\psi},\overline A, \overline U)$ with inverse index, and, by Theorem \ref{thm.main}, both
    \begin{equation*}
    (\psi'\tilde \otimes \overline{\psi}, \AA' \tilde\otimes \overline {\AA}, U\tilde\otimes \overline U), 
    \qquad (\psi \tilde \otimes \overline{\psi}, \AA \tilde\otimes \overline {\AA}, U\tilde\otimes \overline U)
    \end{equation*}
    are in the trivial phase, i.e. are G-stably equivalent to special G-product states, since the index respects the stacking operation. It follows that the $G$-state
    $$( \psi'\stack \overline \psi \stack \psi, \AA' \stack \overline{\AA} \stack \AA , U' \stack \overline U \stack U)$$
    is $G$-stably equivalent to both $(\psi,\AA,U)$ and $(\psi',\AA',U')$. Hence $\psi$ is G-stably equivalent to $\psi'$, by transitivity.
    \end{proof}
    \vspace{.6cm}

    \noindent We can therefore conclude the proof of the main result: 
    
    \begin{proof} [Proof of Theorem \ref{thm.grouphomomorphism}] \label{section.proofhomomorphism}

    From Section \ref{sec.indexforGstates}, there is a well defined index associated to each phase in $\text{SPT}_G$. Furthermore, the index map 
    $$\sigma:\ [(\psi,\AA,U)] \mapsto \sigma_{(\psi,\AA,U)}$$
    is a bijective (Lemma \ref{lem.surjective} and Corollary \ref{cor.injective}) homomorphism (Proposition \ref{thm.indexhomomorphism}).
    \end{proof}


\section{Technical Preliminaries} \label{sec.technical}
We state a few general results that will be used in Section \ref{sec.proofmain}. 
\subsection{G-product states versus special G-product states} \label{appendix.ginvariantproductstates}
    In this subsection we prove that the requirement of stacking with a special G-invariant product state in the definition of G-stable equivalence can be weakened: the G-invariant product states one can stack with are allowed to have non-zero on-site G-charges (see Section \ref{sec:productstates}). This is because any two $G$-product states are G-stably equivalent, as it will be proved in Theorem \ref{thm.equivalenceofproductstates}. This result is inspired by \cite{kapustin2021classification} but the proof is different because of the requirement of bounded on-site Hilbert space dimension. 

We need the following:
    \begin{lemma} \label{lem.specialgstate}
        Let $(\psi,\AA,U)$ be a $G$-state. Then it is G-stably equivalent to a $G$-state \linebreak
        $\left(\psi \tilde \otimes \phi, \AA \tilde \otimes \AA', U\tilde\otimes U'\right)$
        such that there exists a special $G$-product state
        $\left(\phi', \AA \stack \AA', U \stack U' \right).$
    \end{lemma}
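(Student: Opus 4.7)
The plan is to choose $(\phi,\AA',U')$ so that $\phi$ is itself a special $G$-product state. This makes the required $G$-stable equivalence between $(\psi,\AA,U)$ and $(\psi\stack\phi,\AA\stack\AA',U\stack U')$ automatic: in Definition~\ref{def.gequivalence} one sets $(\phi_1,\AA_1,U_1):=(\phi,\AA',U')$ and lets $(\phi_2,\AA_2,U_2)$ be a one-dimensional trivial special $G$-product state, so that the two stacked sides coincide up to a trivial tensor factor and are $G$-equivalent via the identity LGA. The remaining task is therefore to produce $(\AA',U')$ admitting, on each site, both a $G$-invariant unit vector (so that $\phi$ can be a special $G$-product state) and a $G$-invariant unit vector in $\HH_j\otimes\HH_j'$ (so that $\phi'$ exists).

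The natural source of $G$-invariants in $\HH_j\otimes\HH_j'$ is the maximally entangled vector $\frac{1}{\sqrt{n_j}}\sum_i e_i\otimes\bar e_i\in\HH_j\otimes\overline{\HH_j}$, which is fixed by $U_j(g)\otimes\overline{U_j(g)}$ for any orthonormal basis $\{e_i\}$ of $\HH_j$. This suggests including $\overline{U_j}$ in $U_j'$; however $\overline{\HH_j}$ on its own need not carry a $G$-invariant vector, so I would adjoin a trivial one-dimensional summand:
\[
    \HH_j' \;:=\; \CC \oplus \overline{\HH_j}, \qquad U_j'(g) \;:=\; 1 \oplus \overline{U_j(g)}, \qquad \AA_j' \;=\; M_{1+n_j}(\CC).
\]
The bounded on-site dimension hypothesis is preserved because $1+n_j\le 1+n_{\max}$. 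Taking the on-site unit vector $\xi_j^0:=(1,0)\in\HH_j'$, which is $G$-invariant, I define $\phi:=\bigotimes_j\langle\xi_j^0,(\cdot)\xi_j^0\rangle$; this is a pure product state, $G$-invariant with trivial on-site $G$-charges, hence a special $G$-product state on $(\AA',U')$.

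On the stacked chain, the on-site Hilbert space decomposes as $\HH_j\otimes\HH_j'=\HH_j\oplus(\HH_j\otimes\overline{\HH_j})$ with action $U_j\oplus(U_j\otimes\overline{U_j})$. Placing the maximally entangled state in the second summand yields the $G$-invariant unit vector $\eta_j:=0\oplus\tfrac{1}{\sqrt{n_j}}\sum_i e_i\otimes\bar e_i$, and the product state $\phi':=\bigotimes_j\langle\eta_j,(\cdot)\eta_j\rangle$ is a special $G$-product state on $(\AA\stack\AA',U\stack U')$, as required.

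There is no substantive obstacle in this argument. The only subtlety -- and the reason one cannot simply choose a large fixed auxiliary representation such as the regular representation of $G$ -- is the uniform on-site dimension bound, which fails for infinite compact $G$. The minimal enlargement $\CC\oplus\overline{\HH_j}$ is engineered precisely to secure the $G$-invariant vector required for $\phi$ while preserving the uniform bound $\dim\HH_j'\le 1+n_{\max}$.
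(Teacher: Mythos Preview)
Your proof is correct and follows essentially the same construction as the paper: both define $\HH_j'=\CC\oplus\overline{\HH_j}$ with action $1\oplus\overline{U_j}$, take $\phi$ to be the special $G$-product state supported on the trivial summand, and build $\phi'$ from the canonical singlet in $\HH_j\otimes\overline{\HH_j}\subset\HH_j\otimes\HH_j'$. Your write-up is in fact slightly more explicit, spelling out the maximally entangled vector, the $G$-stable equivalence via Definition~\ref{def.gequivalence}, and the preservation of the uniform dimension bound.
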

    \begin{proof}
        For each $j \in \ZZ$, denote by $(\overline{\HH_j}, \overline{U_j})$ the complex conjugate representation to $(\HH_j,U_j)$. It has the same dimension as $\HH_j$. We construct an auxiliary chain algebra $\AA'$ with on-site Hilbert spaces decomposing as
        $$\HH'_j = \mathbb Ce_j \oplus \overline{\HH_j},$$
        for a one-dimensional representation space $\CC e_j$, with respect to the group action
        $$U_j' = \mathds 1 \oplus \overline{U_j}.$$
        The $G$-state defined as
        $$\left(\psi \tilde \otimes \phi, \AA \tilde \otimes \AA', U\tilde\otimes U'\right):= \left(\psi \tilde \otimes \left( \bigotimes_{j\in \ZZ} \langle e_j, (\cdot) e_j \rangle \right), \AA \tilde \otimes \AA', U\tilde\otimes U'\right)$$
        is clearly G-stably equivalent to $(\psi,\AA,U)$. Each on-site Hilbert space $\HH_j\tilde \otimes \HH_j'$ contains now a singlet $\xi_j \in \HH_j \otimes \overline{\HH_j}$ \cite{creutz,fuchs}, giving rise to a special $G$-product state 
        $$\left(\phi', \AA \stack \AA', U \stack U' \right):=\left( \bigotimes \limits_{j \in \ZZ} \langle \xi_j, (\cdot) \xi_j \rangle, \AA \stack \AA', U \stack U' \right).$$
    \end{proof}

	\begin{theorem} \label{thm.equivalenceofproductstates}
		Any two $G$-product states are $G$-stably equivalent.  
	\end{theorem}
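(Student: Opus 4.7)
The plan is to reduce Theorem \ref{thm.equivalenceofproductstates} to the asymmetric assertion that every $G$-product state is $G$-stably equivalent to some \emph{special} $G$-product state. Once this is established, transitivity of $G$-stable equivalence finishes the argument: any two special $G$-product states are $G$-stably equivalent because, after stacking with further special $G$-product states to match on-site dimensions and on-site $G$-representations, their on-site vectors both lie in the trivial isotypic subspaces, where on-site $G$-equivariant unitaries (assembled into a trivial TDI as in Section \ref{sec.trivialTDI}) easily connect them.

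To produce the reduction, given a $G$-product state $(\phi, \AA, U)$ with on-site charges $q_j$, I would invoke Lemma \ref{lem.specialgstate} to obtain a $G$-stably equivalent $G$-product state $(\phi \stack \eta, \AA \stack \AA', U \stack U')$ on an enlarged chain algebra whose on-site Hilbert space $\caH_j \oplus (\caH_j \otimes \overline{\caH_j})$ admits a special $G$-product state $\phi_0$ built from on-site singlets $\zeta_j$. It then remains to construct a $G$-equivariant LGA on this common algebra connecting $\phi \stack \eta$ (whose on-site vectors $\xi_j \otimes e_j$ carry the characters $q_j$) and $\phi_0$ (whose on-site vectors $\zeta_j$ carry the trivial character). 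I would realise such an LGA as a trivial TDI of the form $\bigotimes_B \Ad{V_B}$, with $\ZZ$ partitioned into consecutive blocks $B$ of uniformly bounded size and each $V_B$ a $G$-equivariant unitary on the block that maps the block restriction of $\phi \stack \eta$ to that of $\phi_0$.

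The construction of the block unitary $V_B$ is the crux, and is where the argument must depart from \cite{kapustin2021classification}. For finite $G$ one can choose block sizes depending on $|G|$ so that block-total characters cancel; but for compact infinite $G$, the bounded-dimension constraint $n_j \leq n_{\max}$ prevents us from arbitrarily enlarging on-site Hilbert spaces to absorb characters, and we must work with a uniformly bounded block size. My plan is to exploit the enriched on-site structure from Lemma \ref{lem.specialgstate}: the factor $\caH_j \otimes \overline{\caH_j}$ decomposes into isotypic components which, together, contain each character occurring in $\caH_j$. On a block of neighbouring sites this allows a $G$-equivariant rearrangement of vectors between the $\caH_j$ and $\caH_j \otimes \overline{\caH_j}$ summands that matches block-total charges without leaving the block. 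Haar-averaging over the compact group $G$ provides the required $V_B$ in a measurable, continuous manner, so that the assembled trivial TDI has finite $f$-norm and its associated $G$-equivariant LGA realises the desired equivalence.
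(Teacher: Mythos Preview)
Your reduction to ``every $G$-product state is $G$-stably equivalent to a special one'' and the invocation of Lemma \ref{lem.specialgstate} are fine, but the core construction has a genuine gap. A $G$-equivariant unitary $V_B$ on a block $B$ preserves the total $G$-character of any product vector supported in $B$. The source vector $\bigotimes_{j\in B}(\xi_j\otimes e_j)$ carries character $\prod_{j\in B} q_j$, while the target vector $\bigotimes_{j\in B}\zeta_j$ carries the trivial character. Hence $V_B$ can exist only when $\prod_{j\in B} q_j = 1$, and for a general sequence $(q_j)_j$ (with $G$ infinite and block size uniformly bounded) there is no partition of $\ZZ$ into bounded blocks on which this holds. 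Your claim that ``$\caH_j\otimes\overline{\caH_j}$ contains each character occurring in $\caH_j$'' is also false in general (take $\caH_j$ one-dimensional with nontrivial character $q_j$: then $\caH_j\otimes\overline{\caH_j}$ is the trivial representation), and in any case this would not circumvent the charge-conservation obstruction just described. The Haar-averaging remark does not address this either.

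The paper's proof resolves precisely this point by a different mechanism. Rather than trying to trivialise charges block by block, it stacks with an auxiliary chain whose on-site spaces are three-dimensional but whose characters are \emph{site-dependent accumulations} of the original charges: at site $i$ one places a one-dimensional summand transforming as $q_i\cdots q_0$ (or its conjugate, alternating in parity). After transferring the original charges into the auxiliary layer, a family of $2$-local $G$-equivariant swaps rearranges them so that neighbouring pairs carry mutually conjugate accumulated characters, which can then be annihilated by another family of $2$-local equivariant unitaries. The telescoping pattern is what makes each step genuinely local while still removing an a priori global obstruction; this is the idea your proposal is missing.
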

	\begin{proof}
	    Let $(\phi,\AA,U)$ be a $G$-product state, characterized by on-site one-dimensional Hilbert spaces $\CC \nu_i$, transforming as continuous $q_i \in \hom({G, U(1)})$. Our goal is to prove $(\phi,\AA,U)$ is $G$-stably equivalent to a special $G$-product state, which will imply the theorem. Without loss of generality, by Lemma \ref{lem.specialgstate}, there is a special $G$-product state
        $$\left( \bigotimes\limits_{i \in \ZZ} \langle e_i, (\cdot ) e_i\rangle, \AA, U \right).$$
     
        Assume first that $q_i = 1$ for every $i<0$. Consider a chain algebra $\AA'$ carrying on-site representations $(\HH_i',U_i')$ that decompose as direct sums
        $$\CC e_i' \oplus \CC \nu_i' \oplus \CC \xi_i' = \HH_i',$$
        where
        \begin{enumerate}
            \item $(\CC e_i',U_i')$ is a trivial representation;
            \item $(\CC\nu_i',U_i')$ is a representation transforming as $q_i$;
            \item $(\CC\xi_i',U_i')$ are representations such that (for $i=0,1,2,\dots$): 
		\begin{itemize}
			\item $\xi_0'$ transforms as $q_0$,
			\item $\xi_{2n+1}'$ transforms as $\overline q_{2n} \dots \overline q_1 \overline q_0$, where $\overline{q}$ denotes the conjugate representation to $q$,
			\item $\xi_{2n+2}'$ transforms as $q_{2n+2}\dots q_1 q_0$,
		\end{itemize}
        \end{enumerate}

        \spacing 

        We will now transform $(\phi \tilde \otimes \phi',\AA \stack \AA', U \stack U')$, where $\phi'$ is the special $G$-product state 
        $\otimes_i \langle e_i', (\cdot) e_i'  \rangle$,
        with on-site trivial representations $e_i'$, into a special $G$-product state, in 3 steps. \\
        
		\noindent \textbf{(a)} In each stacked on-site space $\HH_i \stack \HH_i'$, there is a $G$-invariant subspace $K_i$ spanned by $\nu_i \stack e_i'$ and $e_i \stack \nu_i'$, hence there exists a $G$-equivariant unitary $T_i$ swapping $\nu_i \stack e_i'$ and $e_i \stack \nu_i'$. For instance, for each $i \in \ZZ^+$, define
        \begin{equation}
            T_i := \left( \ket {\nu_i \stack e_i'} \bra{e_i \stack \nu_i'} + \ket {e_i \stack \nu_i'  } \bra{\nu_i \stack e_i'} \right) \oplus \mathds 1_{(K_i)^\perp}. 
        \end{equation}
        By the discussion in subsection \ref{sec.trivialTDI}, there is a G-equivariant TDI $H_T$ with finite $f$-norm, for any $f \in \mathcal F$, which implements the formal automorphism $\bigotimes_{i \in \ZZ^+} \Ad{T_i}$. To lighten the notation, we sometimes identify a vector with its corresponding one-dimensional representation (its $G$-charge). Hence, in summary, the family $\{T_i\}_i$ transfers non-trivial charges from the original chain algebra $\AA$ to the stacked chain algebra $\AA'$, i.e. it maps on-site charges $\{q_i \tilde \otimes 1\}_i$ to charges $\{1 \tilde \otimes q_i\}_i$, as in figure \ref{fig.productstate2}, producing a $G$-product state $(\phi\stack \phi' \circ \alpha_{H_T},\AA\stack \AA', U\stack U')$.
        \begin{figure} [h]
            \centering
            \includegraphics[scale=.25]{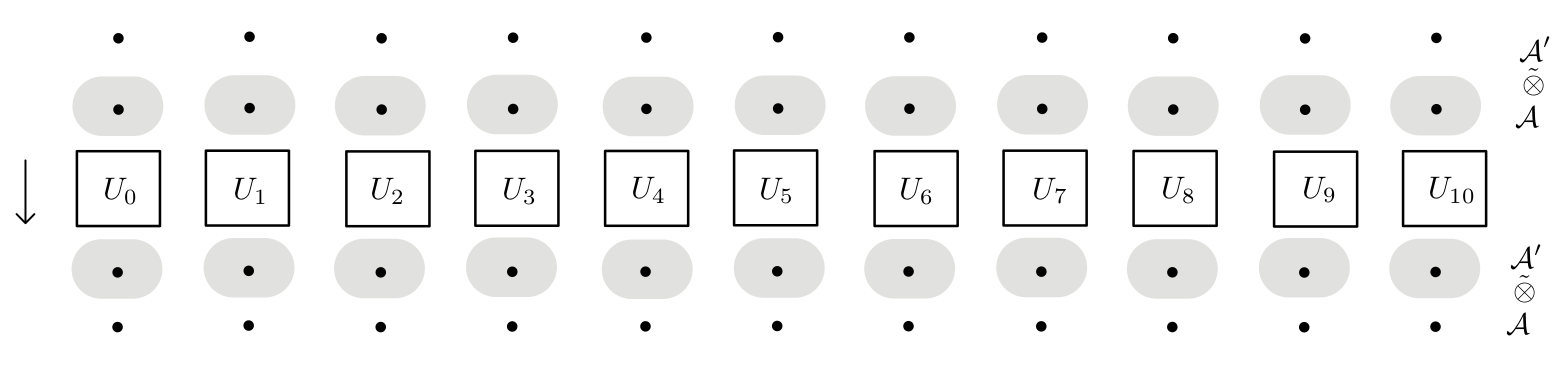}
            \caption{Non-trivial charges (in gray) are transferred from the original chain to the stacked chain.}
            \label{fig.productstate2}
        \end{figure}
        \spacing

        \noindent \textbf{(b)} Secondly, we construct another $G$-equivariant LGA rearranging the $G$-charges, as follows: 
        \begin{enumerate} [label=(\roman*)]
            \item Define $V_0 \in \UU(\HH_0 \stack \HH_0')$ as a $G$-equivariant unitary that restricts to the identity on $\HH_0$ and swaps between $\nu_0'$ and $\xi_0'$ on $\HH_0'$,
            

            \item For $j=2k+1$, $k=0,1,2,\dots$, define a unitary $V_j$ acting on $(\HH_j \stack \HH_j') \otimes (\HH_{j+1} \stack \HH_{j+1}')$ as the $G$-equivariant unitary that restricts to the identity on $\HH_j \otimes \HH_{j+1}$ and swaps between $\xi_j' \otimes \xi_{j+1}'$ and $\nu_j' \otimes \nu_{j+1}'$ on $\HH_j' \otimes \HH_{j+1}'$.
        \end{enumerate}
        
		\noindent For each $j=2k+1$, the corresponding unitary $V_{j}$ swaps charges $q_{2k+1}q_{2k+2}$ and $\overline q_{2k} \dots \overline q_0\cdot q_{2k+2} q_{2k+1} \dots q_0$. E.g. 
        \begin{itemize}
            \item $V_1$ maps $q_1 \otimes q_2$ into $\overline {q_0} \otimes q_2q_1q_0$,
            \item $V_3$ maps $q_3  \otimes q_4$ into $\overline {q_2 q_1 q_0}  \otimes q_4q_3q_2q_1q_0$,
            \item etc,
        \end{itemize}
        such that charges accumulate along the half-chain. Similarly as for the family of $T_i$'s, we can obtain a $G$-equivariant TDI $H_V$ generating the automorphism
        \begin{align*}
            \alpha_{H_V} = \bigotimes\limits_{k \in \ZZ^+} \Ad{V_{2k+1}}.
        \end{align*}
        
		\spacing 
  
        \noindent \textbf{(c)} Finally, after applying the unitaries $V_{2k+1}$, we end up with a state
        as represented in Figure \ref{fig.productstate3} a),
        having on-site charges
		\begin{align*}
			\dots \otimes (1 \tilde \otimes 1)_{\text{site -1}} \otimes (1 \tilde \otimes q_0)_{\text{site 0}} \otimes (1 \tilde \otimes \overline q_0) \otimes (1 \tilde \otimes q_2q_1q_0) \otimes (1 \tilde \otimes \overline{q_2q_1q_0}) \otimes (1 \tilde \otimes q_4q_3q_2q_1q_0) \otimes \dots 
		\end{align*}
		It is now clear that pairs with opposite charges can be canceled locally by a third family of $G$-equivariant 2-local unitaries $W_{2k}$, $k=0,1,2,3,\dots$. I.e., there is a $G$-equivariant LGA $\alpha_{H_{W}}$ such that
        \begin{equation}
            \left( \phi\stack \phi' \circ \alpha_{H_T} \circ \alpha_{H_V} \circ \alpha_{H_W}, \AA \stack \AA', U \stack U' \right)
        \end{equation}
        is a special $G$-product state, when restricted to the right of site $0$. 
        
        \begin{figure} [h]
            \centering
            \includegraphics[scale=.25]{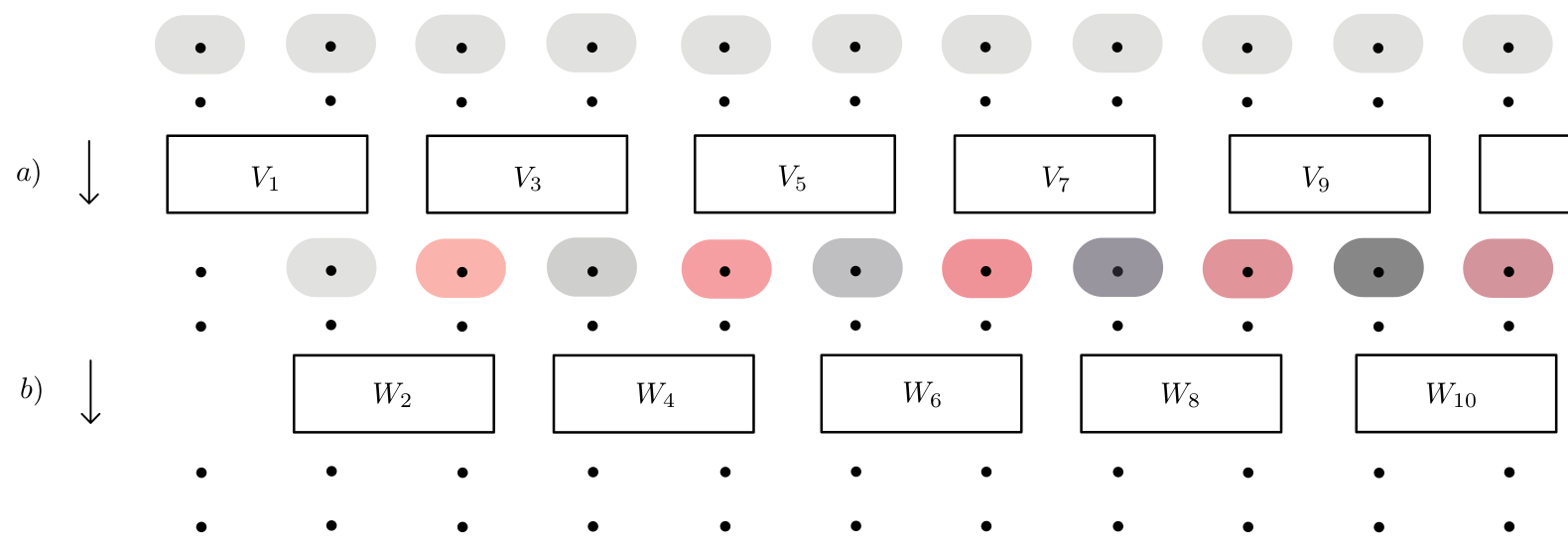}
            \caption{a) Charges are mapped into pairs charge/anticharge by unitaries $V_{2k+1}$; b) Pairs charge/anticharge are annihilated together in a $G$-invariant way into a special $G$-invariant product state via unitaries $W_{2k}$.}
            \label{fig.productstate3}
        \end{figure}

        The same arguments can be applied to the left half-chain, hence we can conclude $\phi$ is $G$-stably equivalent to a special $G$-product state. 
        
	\end{proof}

\subsection{Composition of LGA's}
We collect a few results on LGA's that will prove useful. 
As already remarked in Lemma \ref{lem.lgaproperties}, families of LGA's can be inverted and composed. 
Of course, a composition of infinitely many LGA's is in general not well-defined as an automorphism, let alone an LGA. Let us now focus on LGA's generated by a TDI that is anchored, as defined in section \ref{ALEs}. The following lemma states that in this case, an infinite composition is well-defined if the LGA's are anchored in a sufficiently sparse set:

\begin{lemma} \label{lem.infiniteproduct}
        Let $\{\alpha_{G_j,s}\}_{j=0,1,2,\ldots},\ s\in [0,1],$ be a family of LGA's  on $\AA$, such that the TDI's $G_{j}$ generating $\alpha_{G_j}$, satisfy 
        $$\sup_j \norm{G_{j}}_{\{j\},f}<\infty, $$
        for some $f \in \mathcal F$.
        Then there is $\ell_0\in \NN$ such that, if $\ell\geq \ell_0$, the limit 
		\begin{equation}\label{eq: product of m lgas}
			\lim_{M\to \infty} \alpha_{G_0,s}  \circ \alpha_{G_\ell,s} \circ \dots \circ\alpha_{G_{M\ell},s}  (A), \qquad A \in \AA
        \end{equation}
      exists and can be written as $\alpha_{G,s}(A)$ where $\alpha_{G,s}$ is a family of LGA's generated by a TDI $G$ satisfying $\norm{G}_{\tilde f} < \infty$, with $\tilde f$ depending only on $f$.
\end{lemma}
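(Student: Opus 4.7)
The underlying picture is that each $G_j$, being anchored at $\{j\}$, generates an LGA that is quasi-local around site $j$, and sparseness ($\ell \geq \ell_0$) makes these LGAs essentially independent of one another on any fixed finite region. So the plan is to first show convergence of the composition as an automorphism on local observables, and then reverse-engineer the generating TDI via the composition formula for LGAs.

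\textbf{Step 1 (existence of the limit).} By the remarks at the end of Section \ref{ALEs}, each anchored TDI $G_j$ yields an LGA $\alpha_{G_j,s} = \Ad{W_j(s)}$ with $W_j(s) \in \AA$, and the Lieb--Robinson bounds in the appendix supply a super-polynomially decaying function $h \in \caF$, depending only on the uniform bound on $\norm{G_j}_{\{j\},f}$, such that for any $A\in \AA_X$ with $X$ finite,
$$\norm{\alpha_{G_j,s}(A) - A} \;\leq\; \norm{A}\,h(\mathrm{dist}(X,\{j\})).$$
Given $A \in \AA_X$, for $M$ large enough the factor $\alpha_{G_{M\ell},s}$ lies far from $X$, and a telescoping estimate shows the sequence in \eqref{eq: product of m lgas} is Cauchy with difference at step $M$ bounded by $\norm{A}\,h(M\ell - \mathrm{diam}(X))$. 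This defines an automorphism $\alpha_s$ on the dense subalgebra of local observables, hence on all of $\AA$ by continuity.

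\textbf{Step 2 (identifying the generator).} For a pair of LGAs one computes directly that $\alpha_{H_1,s}\circ\alpha_{H_2,s} = \alpha_{K,s}$ with $K(s) = H_2(s) + \alpha_{H_2,s}^{-1}(H_1(s))$. Iterating, the truncated composition equals $\alpha_{G^{(M)},s}$ where
$$G^{(M)}(s) \;=\; \sum_{k=0}^{M}\,\bigl(\alpha_{G_{M\ell},s}\circ\cdots\circ\alpha_{G_{(k+1)\ell},s}\bigr)^{-1}\!\bigl(G_{k\ell}(s)\bigr).$$
The natural candidate for the limiting TDI is $G(s) = \sum_{k=0}^\infty \widetilde G_k(s)$, where $\widetilde G_k(s)$ denotes the $k$-th summand after letting $M\to\infty$ (which, by Step 1, converges since $\alpha_{G_{j},s}^{-1}$ acts trivially on far-away operators up to super-polynomially small corrections). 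One then checks that $\alpha_{G,s} = \alpha_s$ by showing both solve the Heisenberg equation \eqref{eq: heisenberg} with $G(s)$ on local observables, uniqueness following from Lemma \ref{lem.lgaproperties}.

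\textbf{Step 3 ($\tilde f$-norm bound and main obstacle).} The hard part is verifying $\norm{G}_{\tilde f} < \infty$, which reduces to bounding each $\widetilde G_k$. Here one needs a quantitative ``anchoring-preserving'' version of Lemma \ref{lem.lgaproperties}: conjugating an $\{k\ell\}$-anchored TDI by an LGA $\alpha_{G_m,s}^{-1}$ with $m\neq k$ yields an interaction that is no longer strictly anchored at $\{k\ell\}$ but whose terms $S$ with $S \cap \{k\ell\} = \emptyset$ carry norm decaying super-polynomially in $\mathrm{dist}(S,\{k\ell\})$. Once this is in place, for each fixed site $i \in \ZZ$,
$$\sum_{S\ni i} \frac{\norm{\widetilde G_k(s,S)}}{\tilde f(1+\mathrm{diam}\,S)} \;\leq\; C\,g(|i-k\ell|),$$
where $g \in \caF$ is a decaying function depending only on $f$. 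Summing over $k$, the super-polynomial decay of $g$ together with the sparsity $\ell\geq \ell_0$ gives a finite geometric-type sum, uniformly in $i$. The choice of $\ell_0$ is precisely dictated by the requirement that $\sum_k g(k\ell)$ be summable and that the transformed TDIs compose coherently; all constants depend only on $f$ and on $\sup_j \norm{G_j}_{\{j\},f}$.
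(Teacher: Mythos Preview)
Your approach is essentially the same as the paper's: write out the generator of the finite composition via the iterated formula (your Step~2 matches the paper's displayed expression for the generator), and then control each summand using locality of the anchored LGAs. The paper packages your Steps~1 and~3 into Lemma~\ref{lem.infiniteproduct simple}, which it cites from \cite{sopenkoindex}.

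However, there is a genuine gap in your Step~3. Your stated reason for needing $\ell\geq\ell_0$ --- summability of $\sum_k g(k\ell)$ --- is not the real obstruction: if $g\in\caF$ then $\sum_k g(k\ell)$ converges for \emph{every} $\ell\geq 1$, so this cannot be what forces $\ell_0$. The actual difficulty is one level earlier: establishing that $\widetilde G_k$ satisfies your claimed bound $\sum_{S\ni i}\|\widetilde G_k(s,S)\|/\tilde f(1+\diam S)\leq C\,g(|i-k\ell|)$ with $g\in\caF$ depending only on $f$. You reduce this to ``conjugating by a single $\alpha_{G_m,s}^{-1}$ preserves approximate anchoring,'' but $\widetilde G_k$ is $G_{k\ell}$ conjugated by an \emph{infinite} composition $\alpha_{G_{(k+1)\ell},s}^{-1}\circ\alpha_{G_{(k+2)\ell},s}^{-1}\circ\cdots$, and the accumulated spreading from iterating infinitely many such conjugations is exactly what must be controlled --- and what fails for small $\ell$ (cf.\ the swap example after the lemma, where the infinite product is the shift, which is not an LGA). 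This is precisely the content of Lemma~\ref{lem.infiniteproduct simple}: one shows directly that the infinite composition satisfies $\|\alpha(A)-\Pi_{X_n}\alpha(A)\|\leq\|A\|\tilde f(n)$ uniformly, and \emph{this} is where the sparseness $\ell\geq\ell_0$ enters. Your proposal does not supply this argument.
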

This lemma is proven in \cite{sopenkoindex} and used (implicitly) in \cite{kapustin2021classification}. It is interesting to note, for example, that the restriction to large $\ell$ is crucial.  Indeed, let $\delta_n$ be the LGA that swaps operators at sites $n-1,n$ and acts trivially on other sites. I.e.\ $\alpha_{n}(A_{n-1}\otimes B_{n})= B_{n-1}\otimes A_{n}$ where $A_{j},B_{j} \in \AA_j$ are copies of a one-site observables $A,B$.  
Then, 
$$
   \delta_{0}  \circ  \dots  \circ \delta_{M-1} \circ  \delta_{M}  (A_{M}) =A_0
$$
and therefore \eqref{eq: product of m lgas} cannot converge to an LGA when $\alpha_j=\delta_j$ and $\ell=1$.  However, if we set $\alpha_j=\delta_j$ and $\ell=2$, then \eqref{eq: product of m lgas} does converge to an LGA. Indeed, the limit 
$$
 \delta_{0}  \circ  \dots  \circ  \delta_{2M-2} \circ  \delta_{2M} 
$$
is a trivial LGA as described in Section \ref{sec.trivialTDI}.

To set the stage for the proof of the lemma, we first note by direct computation (see Lemma \ref{lem.lgaproperties}) that $\alpha_{G_0,s}  \circ \alpha_{G_\ell,s} \dots \circ\alpha_{M\ell,s}$ is an LGA generated by the operator
$$
Z_{M\ell}(s) + (\alpha_{G_{M\ell},s})^{-1} (Z_{(M-1)\ell }(s)) + \dots + (\alpha_{G_{M\ell},s} \circ \dots \circ \alpha_{G_{\ell},s}  )^{-1} (Z_0(s)).
$$
where $Z_j(s)$ is the time-dependent operator that generates the  LGA  $\alpha_{G_j,s}$, i.e.\  $Z_j(s)=\sum_{S} G_j(s,S)$, see section \ref{ALEs}. 
One  needs hence to prove that this operator corresponds to a TDI that is bounded in some $\tilde f$-norm, uniformly in $M$. 
Therefore, Lemma \ref{lem.infiniteproduct} follows from Lemma \ref{lem.infiniteproduct simple} below, which was proved in \cite{sopenkoindex}. \par 
Given a region $X \subseteq \ZZ$, we denote by $X_r$ the $r$-fattening of $X$, defined by $X_r := \{j \in \ZZ\ |\ \text{dist}(j,X) \le r\}$. In the following, $\Pi_{X_n}$ is the conditional expectation map, introduced in Proposition \ref{prop.conditionalexpectation}.

\begin{lemma} \label{lem.infiniteproduct simple}
      There is $\ell_0\in \NN$ such that, if $\ell\geq \ell_0$,  the limit 
		\begin{equation}
			\alpha(A) := \lim_{M\to \infty}  \alpha_{G_0,s}  \circ \alpha_{G_\ell,s} \circ \dots \circ\alpha_{G_{M\ell},s}(A), \qquad A \in \AA_X,\ X \in \mathcal P_{\text{fin}} (\ZZ),
        \end{equation}
      exists, and satisfies
      \begin{align*}
            \norm{\alpha (A) - \Pi_{X_n} \alpha (A) } \le \norm{A} \tilde f(n),
    \end{align*}
    for $\tilde f$ depending only on $f$.
\end{lemma}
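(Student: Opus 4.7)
The plan is to combine a Lieb-Robinson type quasi-locality estimate for each anchored LGA $\alpha_{G_{k\ell},s}$ with a telescoping argument that exploits that compositions of $*$-automorphisms are isometric. The key input, which I would extract from the Lieb-Robinson machinery of Appendix \ref{appendix.lga} applied to each $G_{k\ell}$, is that for any $A \in \AA_Y$ one has
$$\norm{\alpha_{G_j,s}(A) - A} \le \norm{A}\, g(\mathrm{dist}(j,Y)), \qquad \norm{\alpha_{G_j,s}(A) - \Pi_{(Y \cup \{j\})_r}\alpha_{G_j,s}(A)} \le \norm{A}\, g(r),$$
for some $g \in \mathcal F$ depending only on $f$ and on $\sup_j \norm{G_j}_{\{j\},f}$.

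First I would establish convergence. Writing $\alpha^{(M)} := \alpha_{G_0,s} \circ \cdots \circ \alpha_{G_{M\ell},s}$, the identity $\alpha^{(M)} = \alpha^{(M-1)} \circ \alpha_{G_{M\ell},s}$ combined with the isometry of the $*$-automorphism $\alpha^{(M-1)}$ yields the telescoping bound
$$\norm{\alpha^{(M)}(A) - \alpha^{(M-1)}(A)} = \norm{\alpha_{G_{M\ell},s}(A) - A} \le \norm{A}\, g(\mathrm{dist}(M\ell, X)).$$
Since $X$ is finite, $\mathrm{dist}(M\ell, X)$ grows linearly in $M$ and $g$ decays superpolynomially, so $\sum_M \norm{\alpha^{(M)}(A) - \alpha^{(M-1)}(A)}$ converges absolutely. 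Hence $\{\alpha^{(M)}(A)\}_M$ is Cauchy in $\AA$ and the limit $\alpha(A)$ exists.

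Next I would prove the quasi-locality bound. Fix $n$ and split the anchors into a ``near'' set $N := \{k : k\ell \in X_{n/2}\}$ and a ``far'' set $F$ consisting of the remaining non-negative integers. Using the same telescoping idea but restricted to the far anchors, one shows that replacing $\alpha$ by the finite composition $\alpha_{\mathrm{near}}$ over $k \in N$ introduces an error bounded by $\norm{A}\sum_{k \in F} g(\mathrm{dist}(k\ell, X)) \lesssim \norm{A}\,\tilde f_1(n)$. The remaining $\alpha_{\mathrm{near}}$ is a composition of at most $O(n/\ell)$ anchored LGAs with anchors in $X_{n/2}$; iterating the second locality bound along this composition, and using the spacing $\ell \ge \ell_0$ to guarantee that the quasi-local tails of successive LGAs compound into a uniform function $\tilde f_2 \in \mathcal F$ rather than multiplying out of control, yields $\norm{\alpha_{\mathrm{near}}(A) - \Pi_{X_n}\alpha_{\mathrm{near}}(A)} \le \norm{A}\,\tilde f_2(n/2)$. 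Setting $\tilde f := \tilde f_1 + \tilde f_2 \in \mathcal F$ completes the bound.

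The main obstacle is making the iterated near composition $\alpha_{\mathrm{near}}$ respect a single Lieb-Robinson light-cone with superpolynomial tails uniformly in $M$ and in the number of anchors in $N$. This is precisely where $\ell \ge \ell_0$ enters: the spacing between anchors must be large enough that, during each individual $\alpha_{G_{k\ell},s}$, the information originating at anchor $k\ell$ does not reach the next anchor $(k+1)\ell$ through the $g$-tails in amounts that accumulate uncontrollably across the composition. Without such spacing, as the $\ell = 1$ swap example in the excerpt illustrates, ballistic transport of information is possible and the quasi-local structure of the limit breaks down.
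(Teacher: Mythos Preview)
The paper does not supply its own proof of this lemma; it explicitly defers to \cite{sopenkoindex}. So there is no in-paper argument to compare against line by line. What I can do is assess whether your sketch would actually close.

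Your convergence argument is fine: the telescoping via $\alpha^{(M)}=\alpha^{(M-1)}\circ\alpha_{G_{M\ell},s}$ together with the isometry of $\alpha^{(M-1)}$ and the anchored bound $\norm{\alpha_{G_{M\ell},s}(A)-A}\le \norm{A}g(\mathrm{dist}(M\ell,X))$ gives absolute summability for strictly local $A$.

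The quasi-locality half, however, has a real gap at exactly the place you flag. Two concrete issues:

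\begin{enumerate}
\item Your telescoping removal of the far anchors only works cleanly for those $k$ with $k\ell$ \emph{to the right of} all near anchors, because those $\alpha_{G_{k\ell}}$ act first on $A\in\AA_X$. For far anchors with small $k$ (which are applied \emph{last} in the composition), the same telescoping produces $\norm{\alpha_{G_{k\ell}}(B)-B}$ with $B=\alpha_{G_{(k+1)\ell}}\circ\cdots(A)$, which has already been spread by the near anchors and is no longer supported in $X$. To control that term you first need the quasi-locality of the near block, so the order of your two steps has to be reversed (or the two arguments interleaved).
\item More importantly, the claim that iterating the second locality bound along the $O(n/\ell)$ near anchors yields a single $\tilde f_2\in\caF$, uniformly in $n$, is exactly the content of the lemma and is not something that follows by simple iteration. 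Naively composing $m$ anchored LGAs each with tail $g$ degrades the tail by a factor that grows with $m$; since $m\sim n/\ell$ here, one cannot conclude superpolynomial decay in $n$ without a genuinely new ingredient. The spacing $\ell\ge\ell_0$ is what makes this work, but the mechanism --- roughly, that $\alpha_{G_{k\ell}}$ acts almost trivially away from $k\ell$ and so the tails from different anchors do not compound multiplicatively but rather add with geometric suppression in the overlap --- needs to be made quantitative. This is the substance of the argument in \cite{sopenkoindex}, and your sketch does not supply it.
\end{enumerate}

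In short: you have correctly identified both the structure of the proof and where the difficulty lies, but the sentence ``using the spacing $\ell\ge\ell_0$ to guarantee that the quasi-local tails \ldots compound into a uniform function'' is an assertion of the result rather than a proof of it.
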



\subsection{Locality of states}
    Short-range entangled states have a natural $f$-induced decay of correlations between distant observables, as stated below, since they are adiabatically connected to a zero correlation product state. 

    \begin{proposition} \label{lem.decayofcorrelations}
        Let $(\psi,\AA)$ be $f$-SRE, for $f \in \mathcal F$. Then 
		\begin{equation} \label{eq.decayofcorrelations}
			|\psi(AB) - \psi(A)\psi(B)| < \norm{A}\norm{B} \tilde f(\text{dist}(X,Y)), \qquad A \in \AA_X,\ B \in \AA_Y,
		\end{equation}
        for $\tilde f \in \caF$ depending only on $f$, and $X,Y \subset \ZZ$.
    \end{proposition}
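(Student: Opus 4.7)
The plan is to exploit the definition of $f$-SRE directly: there exist a product state $\phi$ and an LGA $\alpha_H$ with $\|H\|_f<\infty$ such that $\psi=\phi\circ\alpha_H$. Since $\alpha_H$ is a $^*$-automorphism, one may rewrite
\[
\psi(AB)-\psi(A)\psi(B)=\phi\bigl(\alpha_H(A)\alpha_H(B)\bigr)-\phi\bigl(\alpha_H(A)\bigr)\phi\bigl(\alpha_H(B)\bigr),
\]
so the problem is reduced to showing that the product state $\phi$, evaluated on $\alpha_H(A)\alpha_H(B)$, approximately factorizes.

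The key input is a locality bound for LGAs, available from the Lieb-Robinson estimates behind Lemma \ref{lem.infiniteproduct simple} in Appendix \ref{appendix.lga}: there is $g\in\mathcal{F}$ depending only on $f$ such that for any $A\in\AA_X$ and $r\in\NN$,
\[
\bigl\|\alpha_H(A)-\Pi_{X_r}\alpha_H(A)\bigr\|\le \|A\|\,g(r),
\]
where $\Pi_{X_r}$ denotes the conditional expectation onto $\AA_{X_r}$. Fix $r:=\lfloor(\mathrm{dist}(X,Y)-1)/2\rfloor$ so that $X_r\cap Y_r=\emptyset$, and set $A_r:=\Pi_{X_r}\alpha_H(A)\in\AA_{X_r}$ and $B_r:=\Pi_{Y_r}\alpha_H(B)\in\AA_{Y_r}$, with $\|A_r\|\le\|A\|$, $\|B_r\|\le\|B\|$.

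Since $\phi$ is a product state and $X_r\cap Y_r=\emptyset$, one has the exact factorization $\phi(A_r B_r)=\phi(A_r)\phi(B_r)$. A triangle-inequality decomposition
\[
\begin{aligned}
\bigl|\psi(AB)-\psi(A)\psi(B)\bigr|
&\le\bigl|\phi(\alpha_H(A)\alpha_H(B))-\phi(A_r B_r)\bigr|\\
&\quad+\bigl|\phi(A_r)\phi(B_r)-\phi(\alpha_H(A))\phi(\alpha_H(B))\bigr|,
\end{aligned}
\]
together with the bounds $\|\alpha_H(A)-A_r\|\le\|A\|\,g(r)$ and $\|\alpha_H(B)-B_r\|\le\|B\|\,g(r)$, gives the estimate
\[
\bigl|\psi(AB)-\psi(A)\psi(B)\bigr|\le 4\|A\|\|B\|\,g(r).
\]
Setting $\tilde f(n):=4\,g(\lfloor(n-1)/2\rfloor)$ produces an element of $\mathcal F$ depending only on $f$ and yields the claimed bound.

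The only non-trivial step is the localization estimate for $\alpha_H(A)$. This is the standard content of the Lieb-Robinson bound applied to a TDI with finite $f$-norm, already invoked in the paper to construct the LGAs and stated explicitly as part of Lemma \ref{lem.infiniteproduct simple}; the decay function $g$ arising there depends only on $f$, which is what makes $\tilde f$ depend only on $f$ as required.
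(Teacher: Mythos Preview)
Your argument is correct and essentially identical to the paper's proof: both write $\psi=\phi\circ\alpha_H$, approximate $\alpha_H(A)$ and $\alpha_H(B)$ by their conditional expectations onto disjoint fattened supports, and use exact factorization of the product state $\phi$ on the approximants. The only cosmetic differences are that the paper takes $n=\mathrm{dist}(X,Y)/3$ and first reduces to $\psi(A)=\psi(B)=0$, and that the localization estimate you invoke is stated more directly as Proposition~\ref{prop.conditionalexpectation} in Appendix~\ref{appendix.lga} rather than in Lemma~\ref{lem.infiniteproduct simple}.
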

    \begin{proof}
        The proof is a straightforward consequence of the Lieb-Robinson bounds \cite{liebrobinson,nachtergaele2010liebrobinson}. Denote by $H$ the TDI generating $\psi$ from a product state $\phi$, and abbreviate $\alpha_H = \alpha$.  Without loss, we can assume that $\psi(A)=\psi(B)=0$. Then
		\begin{align*}
			|\psi(AB)| = |\phi \circ \alpha (AB)| &\le | \phi\left[{\alpha(AB) - \Pi_{X_n} \left[\alpha(A)\right] \Pi_{Y_n}\left[\alpha(B)\right]}\right]| \\
			&+ |\phi\left[\Pi_{X_n} \left[\alpha(A)\right] \right] \cdot \phi \left[ \Pi_{Y_n}\left[\alpha(B)\right]\right] - \phi(\alpha(A)) \phi(\alpha(B))|,
		\end{align*}
        where $\Pi$ is the conditional expectation map of Proposition \ref{prop.conditionalexpectation}. We choose $n = \text{dist} (X,Y)/3$ so that the product of conditional expectations factorize in the product state, and the above inequality holds. The above can be upper bounded by 
		\begin{align*}
			&2\left(\norm{A} \norm{\Pi_{Y_n} \alpha(B) - \alpha(B)} + \norm{B} \norm{\Pi_{X_n} \alpha(A) - \alpha(A)} \right) \\ 
            &\le \norm{A} \norm{B} \tilde f(\text{dist} (X,Y)),
		\end{align*} 
        for some $\tilde f$ depending only on $f$, where the last inequality follows from Proposition \ref{prop.conditionalexpectation}.
  \end{proof}

\spacing

The correlation bound in Proposition \ref{lem.decayofcorrelations} is stronger than the well-known exponential decay for ground states induced by the spectral gap of a Hamiltonian \cite{Hastings_2006,nachtergaelesims.exponentialdecay}, since the bound does not depend on the size of the support of the observables $A$ and $B$.  We need the following extension of Proposition \ref{lem.decayofcorrelations}: 
\begin{corollary} \label{cor.decayofcorrelationsalgebra}
    Let $(\psi,\AA)$ be $f$-SRE, for $f \in \mathcal F$, with GNS triple $(\HH,\pi,\Omega)$. Then
     \begin{equation}
            \vert  \langle \Omega, B\pi(A) \Omega \rangle - \langle \Omega, B \Omega \rangle \langle \Omega, \pi(A) \Omega \rangle \vert < \norm{A} \norm {B} \tilde f(\text{dist}(X,Y)),
        \end{equation}
    for every $A \in \AA_{X}$, $B$ in the strong closure of $\pi(\AA_Y)$, and $\tilde f \in \mathcal F$ depending only on $f$.
\end{corollary}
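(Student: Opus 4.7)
The plan is to reduce to Proposition \ref{lem.decayofcorrelations} by approximating $B$ in the strong closure of $\pi(\AA_Y)$ by elements of $\pi(\AA_Y)$ itself, while keeping the norm under control so that the decay estimate survives passage to the limit.

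\textbf{Step 1: Reformulate Proposition \ref{lem.decayofcorrelations} in GNS language.} For any $A\in \AA_X$ and $B'\in\AA_Y$, translating $\psi(\cdot)=\langle\Omega,\pi(\cdot)\Omega\rangle$, the bound \eqref{eq.decayofcorrelations} reads
\begin{equation*}
\bigl|\langle \Omega,\pi(B')\pi(A)\Omega\rangle-\langle \Omega,\pi(B')\Omega\rangle\langle \Omega,\pi(A)\Omega\rangle\bigr|\le \|A\|\|B'\|\,\tilde f(\mathrm{dist}(X,Y)),
\end{equation*}
with $\tilde f$ the function supplied by Proposition \ref{lem.decayofcorrelations}. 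So the corollary already holds when $B$ is in $\pi(\AA_Y)$ itself (taking $B'$ with $\pi(B')=B$ and $\|B'\|=\|B\|$; or, more safely, noting that the map $B\mapsto$ left-hand-side is linear and bounded by $\|A\|\|B\|\tilde f$ on $\pi(\AA_Y)$, independently of any particular $B'$).

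\textbf{Step 2: Extend by strong approximation.} Let $\mathcal M$ denote the strong closure of $\pi(\AA_Y)$ in $\mathcal B(\HH)$; since $\pi(\AA_Y)$ is a $*$-subalgebra, $\mathcal M$ is a von Neumann algebra. By Kaplansky's density theorem, the unit ball of $\pi(\AA_Y)$ is strongly dense in the unit ball of $\mathcal M$. Given $B\in \mathcal M$, write $B=\|B\|\,B_0$ with $\|B_0\|\le 1$, and choose a net (or, since $\HH$ is separable, a sequence) $B_0^{(\lambda)}\in \pi(\AA_Y)$ with $\|B_0^{(\lambda)}\|\le 1$ and $B_0^{(\lambda)}\to B_0$ strongly. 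Set $B^{(\lambda)}:=\|B\|\,B_0^{(\lambda)}\in\pi(\AA_Y)$, so $\|B^{(\lambda)}\|\le\|B\|$ and $B^{(\lambda)}\to B$ strongly.

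\textbf{Step 3: Pass to the limit.} The vector functionals $T\mapsto \langle\Omega,T\eta\rangle$ are continuous in the strong operator topology for any fixed $\eta\in\HH$. Applying this with $\eta=\pi(A)\Omega$ and $\eta=\Omega$ gives
\begin{equation*}
\langle \Omega,B^{(\lambda)}\pi(A)\Omega\rangle\to \langle \Omega,B\pi(A)\Omega\rangle,\qquad \langle \Omega,B^{(\lambda)}\Omega\rangle\to \langle \Omega,B\Omega\rangle.
\end{equation*}
Applying the estimate from Step 1 to each $B^{(\lambda)}$ and using $\|B^{(\lambda)}\|\le\|B\|$ yields the claimed bound in the limit.

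\textbf{Obstacle.} There is no serious technical obstacle beyond making sure the approximating sequence has norm at most $\|B\|$; without the norm control, the decay factor $\tilde f(\mathrm{dist}(X,Y))$ would be multiplied by an unbounded quantity. Kaplansky's density theorem is precisely what supplies this norm bound, and strong continuity of vector functionals does the rest.
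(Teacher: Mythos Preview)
Your proof is correct and follows essentially the same approach as the paper: reduce to Proposition \ref{lem.decayofcorrelations} in GNS form, invoke Kaplansky's density theorem to get strongly convergent approximants $B^{(\lambda)}\in\pi(\AA_Y)$ with $\|B^{(\lambda)}\|\le\|B\|$, and pass to the limit using strong continuity of the relevant vector functionals. The paper streamlines slightly by first assuming $\psi(A)=0$ without loss of generality, which kills the second term, but the substance is identical.
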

\begin{proof}
Without loss of generality, we assume that $\psi(A)=\langle \Omega, \pi(A)\Omega \rangle=0$. By assumption, $B$ is a strong limit of a  sequence $\{\pi(B_n)\}_n$, for $B_n \in \AA_Y$, which, by the Kaplansky density theorem, can be chosen to satisfy $\sup_n \norm{B_n} = \sup_n \norm{\pi(B_n)}\le \norm {B}$. The bound \eqref{eq.decayofcorrelations} yields then, uniformly in $n \in \NN$:
\begin{equation}
            \vert  \langle \Omega, \pi(B_n)\pi(A) \Omega \rangle \vert \leq  \norm{A} \norm {B} \tilde f(\text{dist}(X,Y)).
        \end{equation}
and hence the bound is inherited by the $n\to \infty$ limit of the left-hand side, which is $ \vert  \langle \Omega, B\pi(A) \Omega \rangle \vert $.
\end{proof}

\spacing 

It will be crucial for us to connect SRE states via anchored TDI's, whenever this is possible. The next definition provides a condition that enables this. 
\begin{definition} \label{def.fcloseness}
    (Locality of states)
    Given $f\in \mathcal F$, two pure states $(\psi,\AA)$ and $(\phi, \AA)$ are said to be \textit{$f$-close} far from region $X\subset \ZZ$ if 
    \begin{equation}
        |\psi (A) - \phi(A)| < \norm{A} f(r) 
    \end{equation}
    holds for every $r>0$ and $A \in \AA_{(X_r)^c}$. 

    \end{definition}
We will exploit this definition in subsection \ref{sec: mutually normal g states}.


\subsection{Mutually Normal G-States}\label{sec: mutually normal g states}

A state $(\psi,\AA)$ is said to be normal with respect to $(\psi',\AA)$ whenever there is density matrix $\rho_\psi \in \mathcal B(\HH_{\psi'})$, with $\HH_{\psi'}$ the GNS representation space of $\psi'$, such that
$$\psi(A) = \Tr (\rho_\psi \pi_{\psi'} (A)), \qquad A \in \AA.$$
Moreover, if both states are pure, then there exists a $\xi_\psi \in \mathcal H_{\psi'}$, for which $\rho_\psi = \ket{\xi_\psi} \bra{\xi_\psi}$. Hence, normal pure states can be compared as vectors in a Hilbert space. The next proposition was proved in \cite{bratteliI}, see also \cite{kapustin-sopenko.flux,wojciech.thoulesspump}.
\begin{proposition}
    \label{remark.fcloseness.impliesnormality}
    Let $(\psi,\AA)$ and $(\psi', \AA)$ be two pure SRE states. Assume they are $f$-close far from a finite region $X \subset \ZZ$, for some $f \in \mathcal F$. Then $\psi'$ is normal with respect to $\psi$.
\end{proposition}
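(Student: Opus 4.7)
The plan is to combine the split property enjoyed by SRE states with the classical dichotomy for factor states at norm distance less than two. Concretely, I will show that the GNS representations of $\psi$ and $\psi'$ are unitarily equivalent, which by purity of both states immediately yields normality.

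First I would set up a tripartite factorization of both GNS representations. Because $(\psi,\AA)$ is SRE, write $\psi = \phi \circ \alpha_H$ with $\phi$ a pure product state and $\alpha_H$ an LGA, and apply the split construction of section \ref{sec.constructingcupreps} at each of the two boundary points of $X_r$, for some $r$ to be chosen. This produces an isomorphism
\begin{equation*}
    (\HH_\psi,\pi_\psi,\Omega_\psi) \cong (\HH_L \otimes \HH_{X_r} \otimes \HH_R,\, \pi_L \otimes \pi_{X_r} \otimes \pi_R,\, \Omega),
\end{equation*}
where each of $\pi_L$, $\pi_{X_r}$, $\pi_R$ is irreducible and $\HH_{X_r}$ is finite-dimensional (since $\AA_{X_r}$ is a matrix algebra). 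Exactly the same structure holds for the SRE state $\psi'$, with primed objects $\HH'_L, \pi'_L$, etc.

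Next I would exploit the $f$-closeness assumption. The restriction $\psi|_{\AA_{(X_r)^c}}$ is a vector state in the type-I factor representation $\pi_L \otimes \pi_R$ of $\AA_{(X_r)^c} = \AA_L \otimes \AA_R$ (obtained from $\Omega$ by tracing out the middle factor), and is therefore a factor state; the same holds for $\psi'|_{\AA_{(X_r)^c}}$. By the $f$-closeness hypothesis, $\norm{\psi|_{\AA_{(X_r)^c}} - \psi'|_{\AA_{(X_r)^c}}} \leq f(r)$. Picking $r$ large enough that $f(r)<2$ (possible because $f \in \mathcal F$) forces the two restrictions to be quasi-equivalent: indeed, two factor states on a $C^*$-algebra are either quasi-equivalent or disjoint, and disjoint states have norm distance equal to $2$ (see e.g.\ \cite{bratteliI}). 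Quasi-equivalence of the restrictions lifts to unitary equivalence of the irreducible representations $\pi_L\otimes \pi_R$ and $\pi'_L\otimes \pi'_R$ on $\AA_{(X_r)^c}$, and by Lemma \ref{lem.splitunitary} this intertwiner factorizes as $V_L \otimes V_R$.

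Finally, since $\AA_{X_r}$ is a finite-dimensional simple algebra, its irreducible representations $\pi_{X_r}$ and $\pi'_{X_r}$ are unitarily equivalent through some $V_{X_r}$. Assembling $V := V_L \otimes V_{X_r} \otimes V_R$ produces a unitary intertwining $\pi_\psi$ and $\pi_{\psi'}$, and setting $\xi_{\psi'} := V^* \Omega_{\psi'} \in \HH_\psi$ gives a vector realization $\psi'(A) = \langle \xi_{\psi'}, \pi_\psi(A)\xi_{\psi'}\rangle$, which is precisely the statement of normality. The main obstacle is the quantitative input in the penultimate step — verifying that the restrictions to $\AA_{(X_r)^c}$ are factor states and invoking the norm-2 characterization of disjointness for factor states; everything else is bookkeeping with the split property once this is in hand.
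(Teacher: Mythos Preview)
Your argument is correct. The paper does not give its own proof of this proposition but only cites \cite{bratteliI} together with \cite{kapustin-sopenko.flux,wojciech.thoulesspump}; what you have written is precisely the standard argument underlying those references. The split property furnishes irreducible tripartite GNS representations, the restrictions to $\AA_{(X_r)^c}$ are normal states in the irreducible (hence type~I factor) representations $\pi_L\otimes\pi_R$ and $\pi'_L\otimes\pi'_R$ and are therefore factor states, and the dichotomy ``quasi-equivalent or at norm distance $2$'' together with $f(r)<2$ forces quasi-equivalence, which for irreducible representations upgrades to unitary equivalence. One terminological slip: the restriction $\psi|_{\AA_{(X_r)^c}}$ is not a vector state in $\pi_L\otimes\pi_R$ but a finite-rank density-matrix state there (the partial trace of $|\Omega\rangle\langle\Omega|$ over $\HH_{X_r}$); this does not affect the argument, since any normal state in an irreducible representation is a factor state with GNS quasi-equivalent to that representation.
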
 

Let us now include symmetry, by considering two mutually normal $G$-states $(\psi,\AA,U)$ and $(\psi',\AA,U)$. By Lemma \ref{lem.strongcontinuity.unitaryimplementation}, the $G$-action is implemented by a strongly continuous unitary representation $u(G) $ on $\mathcal H_{\psi'}$, such that the  vector representant $\Omega_{\psi'}$ of $\psi'$ is invariant: $u(g) \Omega_{\psi'} = \Omega_{\psi'}$ for $ g \in G$. The state $\psi$, by assumption of pureness and normality, can be implemented as
$$\psi(A) = \langle \xi_\psi, \pi_{\psi'} (A) \xi_\psi \rangle, \qquad A \in \AA,$$
for some $\xi_\psi \in \mathcal H_{\psi'}$. Furthermore, its $G$-invariance 
\begin{align*}
    \psi\circ \beta_g (A) = \langle  \xi_\psi, u^*(g)  \pi_{\psi'}(A) u(g) \xi_\psi \rangle = \psi (A), 
\end{align*}
 implies that $\xi_\psi$ transforms one-dimensionally as $u(g)\xi_\psi = q[\psi/\psi'](g) \xi_\psi$. The continuous one-dimensional representation $q[\psi/\psi'] \in \hom(G,U(1))$ defines the \textit{relative 0-dimensional charge of $\psi$ with respect to $\psi'$}. It is straightforward to establish the following properties of the $0$-dimensional charge (see \cite{wojciech.thoulesspump}). 
 
\begin{proposition} \label{prop.0dimcharge}
    For each ordered pair of mutually normal $G$-states $(\psi,\AA,U)$ and $(\psi',\AA,U)$, the continuous $q\left[ \psi/\psi' \right] \in \hom(G,U(1))$, satisfies
    \begin{enumerate} [label=\roman*)]
        \item $q\left[ \psi/\psi \right] = 1$,
        \item if $(\tilde \psi,\tilde \AA,\tilde U)$ and $(\tilde {\psi'},\tilde{\AA},\tilde {U})$ are also mutually normal, it holds that
        $$q[\psi \stack \tilde \psi/\psi' \stack \tilde{\psi'}] = q[\psi/\psi'] \cdot q[\tilde \psi/\tilde {\psi'}].$$
    \end{enumerate}
\end{proposition}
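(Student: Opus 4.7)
The plan is to derive both properties directly from the definition of the relative 0-dimensional charge given in the paragraph preceding the proposition, together with the functoriality of the GNS construction under stacking. No hard analysis is needed; the work consists of identifying the correct vector representants and implementing unitaries in the relevant GNS spaces.

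For part (i), I first observe that when $\psi = \psi'$, the purity of $\psi'$ together with irreducibility of its GNS representation forces any vector representant of $\psi$ inside $\mathcal H_{\psi'}$ to coincide with $\Omega_{\psi'}$ up to a phase, i.e.\ $\xi_\psi = e^{i\theta}\Omega_{\psi'}$ for some $\theta \in \mathbb R$. Since Lemma \ref{lem.strongcontinuity.unitaryimplementation} already provides an implementing unitary representation $u$ satisfying $u(g)\Omega_{\psi'} = \Omega_{\psi'}$ for all $g \in G$, linearity yields $u(g)\xi_\psi = \xi_\psi$, and thus $q[\psi/\psi](g) = 1$ identically.

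For part (ii), I would first verify that the tensor triple
$$\bigl(\mathcal H_{\psi'} \otimes \mathcal H_{\tilde{\psi'}},\ \pi_{\psi'} \otimes \pi_{\tilde{\psi'}},\ \Omega_{\psi'} \otimes \Omega_{\tilde{\psi'}}\bigr)$$
is a GNS triple for $\psi' \stack \tilde{\psi'}$. The cyclicity/vector-state property is clear on elementary product observables $A \stack \tilde A$ and extends by density; irreducibility follows from purity of both factors. The same argument identifies $\xi_\psi \otimes \tilde\xi_{\tilde\psi}$ as a vector representant of $\psi \stack \tilde\psi$ inside this stacked GNS. The stacked symmetry $\beta_g \stack \tilde\beta_g$ is then implemented by $u(g) \otimes \tilde u(g)$: this product unitary is strongly continuous in $g$, leaves $\Omega_{\psi'} \otimes \Omega_{\tilde{\psi'}}$ invariant, and intertwines the representation with the stacked action, so by the uniqueness clause of Lemma \ref{lem.strongcontinuity.unitaryimplementation} it is the canonical implementing representation. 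Applying it to $\xi_\psi \otimes \tilde\xi_{\tilde\psi}$ gives
$$u(g)\xi_\psi \otimes \tilde u(g)\tilde\xi_{\tilde\psi} = q[\psi/\psi'](g)\, q[\tilde\psi/\tilde{\psi'}](g)\,\bigl(\xi_\psi \otimes \tilde\xi_{\tilde\psi}\bigr),$$
from which the multiplicative identity for $q[\psi \stack \tilde\psi / \psi' \stack \tilde{\psi'}]$ is read off. There is no real obstacle; the only point requiring minor care is fixing the phase ambiguity in the implementing unitaries, which is removed once one demands invariance of the vacuum vectors $\Omega_{\psi'}$ and $\Omega_{\tilde{\psi'}}$.
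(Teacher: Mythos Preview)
The proposal is correct and is exactly the direct verification from the definitions that the paper deems ``straightforward'' (the paper gives no proof of its own, referring instead to \cite{wojciech.thoulesspump}). One minor remark: Lemma~\ref{lem.strongcontinuity.unitaryimplementation} does not state a uniqueness clause explicitly, but uniqueness of the implementing unitary fixing the vacuum follows immediately from irreducibility of $\pi_{\psi'}$---and in any case you do not strictly need it, since the charge is well-defined and you have exhibited an implementing unitary on the stacked GNS space satisfying all the required properties.
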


\spacing 

$0$-dimensional charges are not relevant for the $G$-equivalence relation because they are not preserved by stacking with $G$-product states, as we point out now:

\begin{lemma} \label{lem.moddingout0charge}
    Let $(\psi, \AA, U)$ and $(\psi', \AA, U)$ be an ordered pair of mutually normal G-states, as in Proposition \ref{prop.0dimcharge}. For any site $j \in \ZZ$, there are $G$-product states $(\phi^{(c)}_\psi,\AA^{(c)},U^{(c)})$ and $(\phi^{(c)}_{\psi'},\AA^{(c)},U^{(c)})$, differing only on site $j$, such that 
        $$q[\psi \stack \phi^{(c)}_\psi / \psi' \stack \phi^{(c)}_{\psi'}] = 1.$$
\end{lemma}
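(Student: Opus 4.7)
The plan is to reduce the claim to the multiplicativity of the relative $0$-charge under stacking (Proposition \ref{prop.0dimcharge}(ii)): if I can build a pair of $G$-product states whose relative $0$-charge is $\overline{q}$, where $q := q[\psi/\psi']$, then the conclusion $q[\psi \stack \phi_\psi^{(c)}/\psi'\stack \phi_{\psi'}^{(c)}] = q\cdot\overline q = 1$ follows at once. Note that $q$ is a continuous character by the discussion preceding Proposition \ref{prop.0dimcharge}, so we are allowed to use it as a building block for an on-site $G$-action.

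First I would construct a chain algebra $\AA^{(c)}$ that is completely trivial away from site $j$, taking $\HH_k^{(c)} = \CC$ with $U_k^{(c)} \equiv 1$ for $k \neq j$, and setting $\HH_j^{(c)} = \CC v_1 \oplus \CC v_2$ with on-site $G$-action $U_j^{(c)}$ acting as the identity on $\CC v_1$ and as multiplication by $q(g)$ on $\CC v_2$ (continuity of $U_j^{(c)}$ is inherited from continuity of $q$). Then I would define $\phi_\psi^{(c)}$ and $\phi_{\psi'}^{(c)}$ as the product states determined by the on-site unit vectors $v_1$ and $v_2$ at site $j$ respectively (and by the canonical unit vector at every other site). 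Both are pure, $G$-invariant product states on the common chain algebra $\AA^{(c)}$ with the common on-site action $U^{(c)}$, and they differ only at site $j$, matching the constraints of the lemma. Their mutual normality is immediate since they are pure and differ at only finitely many sites.

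The main (though mild) step is then to verify that $q[\phi_\psi^{(c)}/\phi_{\psi'}^{(c)}] = \overline{q}$. I would use the standard on-site description of the GNS representation of a pure product state: the cyclic vector $\Omega_{\phi_{\psi'}^{(c)}}$ corresponds formally to the tensor product of the on-site reference vectors, including $v_2$ at site $j$, and the strongly continuous unitary implementation $u^{(c)}(g)$ fixing this vector (supplied by Lemma \ref{lem.strongcontinuity.unitaryimplementation}) must act on the site-$j$ factor as $\overline{q(g)}\,U_j^{(c)}(g)$ in order to cancel the $q(g)$ phase picked up by $v_2$. Replacing $v_2$ by $v_1$ to obtain the vector representative of $\phi_\psi^{(c)}$, the site-$j$ factor picks up a phase $\overline{q(g)}$ (since $U_j^{(c)}$ acts trivially on $v_1$) and every other factor is unchanged, yielding $q[\phi_\psi^{(c)}/\phi_{\psi'}^{(c)}](g) = \overline{q(g)}$.

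Finally I would apply Proposition \ref{prop.0dimcharge}(ii), observing that stacking preserves mutual normality since the GNS representation of a stack is the stack of GNS representations, to obtain
\[
q\bigl[\psi \stack \phi_\psi^{(c)} \,/\, \psi' \stack \phi_{\psi'}^{(c)}\bigr] \;=\; q[\psi/\psi']\cdot q[\phi_\psi^{(c)}/\phi_{\psi'}^{(c)}] \;=\; q\cdot\overline{q} \;=\; 1.
\]
The only real obstacle is the phase bookkeeping in the on-site GNS computation that gives $q[\phi_\psi^{(c)}/\phi_{\psi'}^{(c)}] = \overline q$, but this is standard once one observes that everywhere except site $j$ the unitary implementation acts trivially, so the argument reduces to a genuinely one-site computation.
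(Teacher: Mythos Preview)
Your proposal is correct and follows essentially the same construction as the paper: an auxiliary chain algebra that is one-dimensional away from site $j$ and two-dimensional at $j$, carrying a trivial summand and a summand charged by the character $q=q[\psi/\psi']$, with the two $G$-product states distinguished only by which basis vector they use at site $j$, followed by Proposition \ref{prop.0dimcharge}(ii). The only cosmetic difference is that the paper places the conjugate character $\overline q$ directly in the on-site action and leaves the relative-charge computation as ``one checks,'' whereas you put $q$ in the action and spell out the GNS phase bookkeeping explicitly.
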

\begin{proof}
    Fix $j \in \ZZ$ and define an on-site representation
    $$\HH_j\charge := \CC e_j \oplus \CC w_j,$$
    with
    \begin{align*}
        U_j\charge := \begin{pmatrix}
            1 & 0 \\ 0 & \overline{q\left[ \psi / {\psi'} \right]}
        \end{pmatrix}.
    \end{align*}
    For $i \neq j$, let $\HH_i\charge := \CC e_i$ be a trivial representation. We construct the chain algebra $\AA\charge$ with on-site Hilbert spaces $\HH_k\charge,\ k \in \ZZ$, and we define the $G$-product states
    \begin{align*}
        &\phi_\psi \charge (A_k) := \langle e_k, A_k e_k \rangle,\qquad A_k \in \AA_{\{k\}},
    \end{align*}
    and
    \begin{align*}
        &\phi_{\psi'}\charge (A_k) := \begin{cases}
            \langle w_j, A_k w_j \rangle, &\qquad k = j, \\
            \langle e_k, A_k e_k \rangle, &\qquad \text{otherwise}.
        \end{cases}
    \end{align*}
    One checks that $q\left[ \phi_\psi\charge / \phi_{\psi'}\charge \right] = \overline{q\left[ \psi/{\psi'} \right]}$. The proof of the lemma now follows from part (ii) of Proposition \ref{prop.0dimcharge}.
\end{proof}

\spacing

Nevertheless, one can easily check that the zero-dimensional charge is an obstruction to connecting SRE states via an anchored, $G$-equivariant  TDI.  Once this obstruction is removed, one can connect mutually normal $G$-states, as we state now.
    
    \begin{proposition} \label{prop.connectingclosestates} 
        Let $(\psi,\AA,U)$ and $(\psi', \AA,U)$ be two f'-SRE $G$-states that are $f$-close far from site $j\in \mathbb Z$. Assume further that $q\left[ \psi/\psi' \right] = 1$. Then there exists a $G$-equivariant TDI $H$ satisfying
            $$\norm{H}_{\{j\}, \tilde f} < 1,$$ 
        such that
        \begin{equation}
            \psi = \psi' \circ \alpha_H,
        \end{equation}
        for some $\tilde f$ depending only on $f,f'$. Hence the $G$-states are $G$-equivalent. In case $q[\psi/\psi'] \neq 1$, the same holds if we replace $G$-equivalence by $G$-stable equivalence, as follows from Lemma \ref{lem.moddingout0charge}. 
    \end{proposition}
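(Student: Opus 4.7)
The plan is to interpolate $\psi'$ to $\psi$ through pure $G$-invariant states inside the GNS representation of $\psi'$, and then realize this path by a Hastings-style $G$-equivariant TDI anchored at $\{j\}$.

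By Proposition \ref{remark.fcloseness.impliesnormality} the $f$-closeness hypothesis yields a unit vector $\xi_\psi \in \HH_{\psi'}$ with $\psi(\cdot) = \langle \xi_\psi, \pi_{\psi'}(\cdot) \xi_\psi \rangle$. Lemma \ref{lem.strongcontinuity.unitaryimplementation} gives a strongly continuous unitary $u(g)$ on $\HH_{\psi'}$ fixing $\Omega_{\psi'}$, and the assumption $q[\psi/\psi']=1$ forces $u(g)\xi_\psi = \xi_\psi$ for all $g\in G$. Both endpoint vectors therefore sit in the $G$-fixed subspace $\HH_{\psi'}^G$.

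I would then pick, inside $\HH_{\psi'}^G$, a norm-continuous path of unit vectors $\xi(s)$, $s\in[0,1]$, from $\Omega_{\psi'}$ to $\xi_\psi$ (up to a global phase); the natural choice is the spherical geodesic in the real plane they span after adjusting the phase of $\xi_\psi$. Setting $\psi_s := \langle \xi(s), \pi_{\psi'}(\cdot) \xi(s)\rangle$ gives a continuous path of pure $G$-invariant states with $\psi_0=\psi'$ and $\psi_1=\psi$. Using $f$-closeness of $\psi$ and $\psi'$ far from $\{j\}$, together with the correlation bound of Corollary \ref{cor.decayofcorrelationsalgebra}, a short Cauchy--Schwarz estimate shows each $\psi_s$ is $\tilde f$-close to $\psi'$ far from $\{j\}$ uniformly in $s$, and inherits $\tilde f$-SRE.

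To produce the TDI, I would apply the spectral-flow/quasi-adiabatic construction (in the spirit of \cite{sopenkoindex,wojciech.thoulesspump}) to the path $\psi_s$: the resulting time-dependent generator has terms whose norms decay like $\tilde f(\mathrm{dist}(S,\{j\}))$, because $\psi_s$ agrees with $\psi'$ up to $\tilde f$ far from $\{j\}$ and the correlation decay of Proposition \ref{lem.decayofcorrelations} allows one to localize the generator. Finally, I would average the TDI against the normalized Haar measure on $G$, well defined by compactness and metrizability; since the whole path lies in $\HH_{\psi'}^G$, this averaging produces a $G$-equivariant TDI that still generates the same LGA on the $G$-invariant sector, and it preserves both the anchoring at $\{j\}$ and the $\tilde f$-bound.

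The main obstacle is the quantitative locality control for the Hastings generator: one has to verify that its $\|\cdot\|_{\{j\},\tilde f}$ norm depends only on $f$ and $f'$ (not on $\psi,\psi'$ individually), and that $G$-averaging does not spoil this bound. These estimates rely crucially on the split structure of the GNS representation of the SRE state $\psi'$ and on controlling the Schmidt data of $\xi_\psi$ relative to $\Omega_{\psi'}$ across half-line cuts far from $j$ — precisely the ingredients that the preceding sections of the paper set up.
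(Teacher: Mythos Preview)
Your approach is genuinely different from the paper's, and it has a real gap at the averaging step. The paper does not construct a path of states in the GNS space at all. Instead it writes $\psi = \phi \circ \alpha_{H'}$ with $\phi$ a product state, observes that $\psi'\circ \alpha_{H'}^{-1}$ is then $\tilde f$-close to the product state $\phi$ far from $j$, invokes the already-known product-state version of the proposition from \cite{wojciech.thoulesspump} to get a $G$-equivariant, $\{j\}$-anchored TDI $H$ with $\psi'\circ \alpha_{H'}^{-1}\circ\alpha_H = \phi$, and finally conjugates back: $\psi' = \psi \circ \alpha_{H'}^{-1}\circ\alpha_H^{-1}\circ\alpha_{H'}$, which is anchored at $j$ by Lemma~2.2 of \cite{sopenkoindex}. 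The $G$-equivariance is built in from the start via the cited product-state result; no post-hoc averaging is needed.

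The fatal problem with your plan is the Haar-averaging step. The map $H\mapsto \alpha_H$ is a time-ordered exponential, hence nonlinear in $H$, so $\alpha_{\int_G \beta_g(H)\,dg}$ has no reason to agree with $\alpha_H$ even on $G$-invariant states. Concretely, take two qubits with $G=\ZZ_2$ acting by $\sigma_z\otimes\sigma_z$, let $\psi'=|00\rangle$, $\psi=|11\rangle$ (both $G$-invariant with trivial relative charge), and $H=\tfrac{\pi}{2}(\sigma_x\otimes\id+\id\otimes\sigma_x)$. Then $e^{-iH}|00\rangle=-|11\rangle$, and $\beta_g(H)=-H$ also does the job since $e^{iH}|00\rangle=-|11\rangle$; but the averaged TDI is $H^G=0$, and $\alpha_{H^G}=\mathrm{id}$ certainly does not take $\psi'$ to $\psi$. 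So ``averaging the TDI'' destroys the very property you built it for. A secondary issue is that ``spectral flow applied to the path $\psi_s$'' is not a standard construction: quasi-adiabatic generators are built from a smooth family of gapped \emph{Hamiltonians}, not from a geodesic of GNS vectors, and you have not explained what local parent Hamiltonians would play that role here or why their gaps and locality would be uniform in $s$.
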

    \begin{proof}
        When $\psi$ is a product state, this was proved in \cite{wojciech.thoulesspump}. Now we extend the result to SRE states. Let $\psi = \phi \circ \alpha_{H'}$ be an $f'$-SRE G-state, with $\phi$ a special G-product state and $H'$ a G-equivariant TDI satisfying $\norm{H'}_{f'} < \infty$. Firstly, it follows from the assumptions and from Proposition \ref{prop.conditionalexpectation} that $\psi' \circ (\alpha_{H'})^{-1}$ is $\tilde f$-close to $\phi$ far from $j \in \ZZ$, for some $\tilde f \in \mathcal F$ depending only on $f,f'$.
        
        Since $\phi$ is product, there is a G-equivariant TDI $H$, that is anchored at site $j$, satisfying $\norm{H}_{\tilde f}<\infty$, for some updated $\tilde f$ depending only on $f,f'$, such that $\psi' \circ (\alpha_{H'})^{-1} \circ \alpha_H = \phi$. It follows that
        \begin{align*}
            \psi' &= \phi \circ (\alpha_H)^{-1} \circ \alpha_{H'} \\
            &= \psi \circ (\alpha_{H'})^{-1} \circ (\alpha_H)^{-1} \circ \alpha_{H'}.
        \end{align*}
        The G-equivariant LGA $(\alpha_{H'})^{-1} \circ (\alpha_H)^{-1} \circ \alpha_{H'}$ is generated by a G-equivariant TDI with finite $\tilde f$-norm, for some updated $\tilde f$ depending only on $f,f'$, and that is anchored at site $j$ (see, for instance, Lemma 2.2 of \cite{sopenkoindex}). 
    \end{proof}

\spacing

For SRE states, $f$-closeness in half-chains implies $\tilde f$-closeness on the entire chain, as we state now. This will make it easier to apply the above proposition in practice.
\begin{lemma} \label{lem.clonesshalfchains}
    Let $(\psi,\AA)$ and $(\psi',\AA)$ be $f'$-SRE states satisfying
    \begin{align*}
        |\psi(A) - \psi'(A)| < f(r) \norm{A}, \qquad A \in \mathcal A_{(-\infty,-r]} \cup \AA_{[r,\infty)},
    \end{align*}
    for some $f',f \in \mathcal F$. Then $\psi$ is $\tilde f$-close to $\psi'$ far from $\{0\}$, for $\tilde f$ depending only on $f$ and $f'$.
\end{lemma}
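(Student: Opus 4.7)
The plan is to reduce the two-sided bound to the one-sided hypothesis by exploiting the split structure of SRE states. Denote $L = (-\infty,-1]\cap\ZZ$ and $R = [0,\infty)\cap\ZZ$, so that $\AA = \AA_L\stack\AA_R$; any observable $A\in\AA_{(\{0\}_r)^c}$ then lies in $\AA_{\{j\le -r-1\}}\otimes\AA_{\{j\ge r+1\}}$, i.e.\ it is bi-localized strictly on both half-chains. The argument proceeds in three steps.

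The first step is to establish an \emph{operator-norm} clustering bound for each $f'$-SRE state $\varphi\in\{\psi,\psi'\}$:
\[
|\varphi(A) - (\varphi|_L \stack \varphi|_R)(A)| \le g(r)\,\|A\|, \qquad A\in\AA_{(\{0\}_r)^c},
\]
with $g\in\mathcal F$ depending only on $f'$. For this I would invoke the LGA splitting $\alpha_H = (\alpha_{H_L}\otimes\alpha_{H_R})\circ\Ad{W}$ already used in Section \ref{sec.constructingcupreps} (and guaranteed by Theorem \ref{thm.lgadecomposition}), where $W\in\AA$ is a unitary whose action is quasi-local near $\{0\}$ with tails controlled by $f'$. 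Then $\varphi = (\varphi_L\stack\varphi_R)\circ\Ad{W}$ with $\varphi_{L/R}:=\phi|_{L/R}\circ\alpha_{H_{L/R}}$ an honest product marginal. Approximating $W$ by a local restriction $W_r\in\AA_{\{0\}_{r/2}}$ with $\|W - W_r\|\le h_1(r)$ (via Proposition \ref{prop.conditionalexpectation}), the commutation $[W_r,A]=0$ forces $\|WAW^* - A\|\le 2h_1(r)\|A\|$, hence $|\varphi(A) - (\varphi_L\stack\varphi_R)(A)|\le 2h_1(r)\|A\|$. The same argument applied to $A_L\otimes\mathds 1$ and $\mathds 1\otimes A_R$ gives $|\varphi|_L(A_L) - \varphi_L(A_L)|\le 2h_1(r)\|A_L\|$ for $A_L\in\AA_{\{j\le -r-1\}}$, and similarly for $R$. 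Combining via the slice-map bound $\|\mathrm{id}\otimes\lambda\|_{\mathrm{op}}\le\|\lambda\|$, valid for any bounded linear functional $\lambda$, completes Step 1.

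The second step is now purely formal:
\[
(\psi|_L\stack\psi|_R)(A) - (\psi'|_L\stack\psi'|_R)(A) = ((\psi-\psi')|_L\stack\psi|_R)(A) + (\psi'|_L\stack(\psi-\psi')|_R)(A),
\]
and each summand is bounded in absolute value by $f(r)\|A\|$, because $(\mathrm{id}\otimes\psi|_R)(A)$ and $(\psi'|_L\otimes\mathrm{id})(A)$ are elements of $\AA_{\{j\le -r-1\}}$ and $\AA_{\{j\ge r+1\}}$ respectively of norm at most $\|A\|$, on which the hypothesis $|\psi(B)-\psi'(B)|\le f(r)\|B\|$ directly applies. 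A triangle inequality combining Step 1 (for both $\psi$ and $\psi'$) and Step 2 then yields $|\psi(A)-\psi'(A)|\le (4g(r)+2f(r))\|A\|$, and one may take $\tilde f := 4g+2f \in \mathcal F$.

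The main obstacle is Step 1: upgrading the clustering to \emph{operator norm}. The decay-of-correlations estimate of Proposition \ref{lem.decayofcorrelations} only controls simple tensors $A_L A_R$ in terms of $\|A_L\|\|A_R\|$, which does not yield the operator-norm bound for a general $A\in\AA_L\otimes\AA_R$ since its Schmidt decomposition may have arbitrarily large $\ell^1$-norm of coefficients. The essential device is the quasi-local disentangler $W$ inherited from the SRE structure: $W$ commutes with $A$ up to error $h_1(r)$, which is precisely what converts clustering from the tensor/projective norm (where decay of correlations naturally lives) to the operator norm required in the statement.
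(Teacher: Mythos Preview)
Your proposal is correct and follows essentially the same route as the paper: both arguments establish operator-norm clustering of an $f'$-SRE state across the cut via the disentangler $W$ from Theorem~\ref{thm.lgadecomposition}, then telescope the tensor-product difference and apply the half-chain hypothesis. The paper simply states the clustering bound \eqref{eq.fclosenessglobal.1} and refers to Theorem~\ref{thm.lgadecomposition}, whereas you spell out the mechanism (approximate $W$ by a local $W_r$, use $[W_r,A]=0$) and also handle the passage from $\varphi_{L/R}$ to $\varphi|_{L/R}$ explicitly; your final observation that Proposition~\ref{lem.decayofcorrelations} alone would not suffice is exactly the point.
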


\begin{proof}
    It can be checked, with the help of Theorem \ref{thm.lgadecomposition}, that an $f'$-SRE state $(\psi,\AA)$ satisfies
    \begin{align} \label{eq.fclosenessglobal.1}
        \norm{\psi|_{[-r,r]^c} - \psi|_{(-\infty, -r)} \otimes \psi|_{(r,\infty)} } < f'(r),\qquad r \in \ZZ^+,
    \end{align}
    (recall the metric \eqref{eq: metric on states} on states), where $\psi|_X$ denotes the restriction of $\psi$ to the algebra $\AA_X$, $X \subset \ZZ$. Now, take $(\psi,\AA)$, $(\psi',\AA)$ as stated in the lemma. We must bound: 
    \begin{align*}
        \norm {\psi|_{[-r,r]^c} - \psi'|_{[-r,r]^c}} &\le 2f'(r) + \norm{\psi|_{(-\infty,-r)} \otimes \psi|_{(r,\infty)} - \psi'|_{(-\infty,-r)} \otimes \psi'|_{(r,\infty)}} \\
        &\le 2f'(r) + \norm{\psi|_{(-\infty,-r)} \otimes \left( \psi|_{(r,\infty)} - \psi'|_{(r,\infty)} \right)}+ \\  &+ \norm{\left( \psi|_{(-\infty,-r)} - \psi'|_{(-\infty,-r)} \right) \otimes \psi'|_{(r,\infty)}} \\
        &\le 2f'(r) + \norm{\psi|_{(r,\infty)} - \psi'|_{(r,\infty)}} + \norm{ \psi|_{(-\infty,-r)} - \psi'|_{(-\infty,-r)} }
    \end{align*}
    From the assumption, the above right hand side is then upper bounded by $\tilde f(r) = 2f'(r) + 2f(r)$. 
    
\end{proof}

\subsection{Schmidt decompositions} \label{sec.schmidtdecomposition}



An SRE state $(\psi,\AA)$ has a split GNS representation \eqref{eq.splitgns} with respect to the bipartition $\mathbb{Z}=L\cup R$, which yields a tensor product structure $\HH_L \otimes \HH_R$ for the GNS Hilbert space. We consider the corresponding Hilbert-Schmidt decomposition
		\begin{equation} \label{eq.schmidtdecomposition}
			 {\Omega_\psi} := \sum\limits_{n=1}^\infty \sqrt{\lambda_n} {\Omega^{L}_n} \otimes  {\Omega^{R}_n}
		\end{equation}
		of the GNS representant $\Omega_\psi$ of $\psi$. Here, $\Omega_n^{L/R}$ are orthonormal bases and $\lambda_n \ge 0$, $\sum_{n=1}^\infty\lambda_n=1$. The Schmidt coefficients $\lambda_n$ are ordered such that $\lambda_1 \ge \lambda_2 \ge \dots$

	\begin{lemma} \label{lem.lowerboundschmidt}
		Let $(\psi,\AA)$ be an $f$-SRE state. There is a lower bound $1/k >0$ for the first Schmidt coefficient $\lambda_1$ of ${\Omega_\psi}$, which depends only on the function $f$.
	\end{lemma}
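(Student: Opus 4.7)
The plan is to combine the split-GNS decomposition~\eqref{eq.splitgns} with the locality of the boundary unitary $W$ coming from the LGA factorization $\alpha_H = (\alpha_{H_L}\otimes\alpha_{H_R})\circ\Ad{W}$. This gives $\Omega_\psi = \pi(W^*)(\Omega_L\otimes\Omega_R)$ with $\pi = \pi_L\otimes\pi_R$, and since $\Omega_L\otimes\Omega_R$ has Schmidt rank one across the cut $L\mid R$, all entanglement of $\Omega_\psi$ is produced by $\pi(W^*)$. The strategy is therefore to truncate $W$ to a neighborhood of the cut, bound the Schmidt rank of the truncated vector by a direct counting argument, and then transfer the lower bound on its largest Schmidt coefficient back to $\Omega_\psi$ by a singular-value perturbation estimate.

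Concretely, I would first apply the conditional expectation of Proposition~\ref{prop.conditionalexpectation} to obtain $W_r := \Pi_{[-r,r-1]}(W)\in \mathcal A_{[-r,r-1]}$ with $\|W-W_r\|\le g(r)$, where $g\in\mathcal F$ depends only on $f$ via the Lieb--Robinson estimates of Appendix~\ref{appendix.lga}. Then $\xi_r:=\pi(W_r^*)(\Omega_L\otimes\Omega_R)$ satisfies $\|\Omega_\psi-\xi_r\|\le g(r)$ and, since $W_r\in \mathcal A_{[-r,-1]}\otimes \mathcal A_{[0,r-1]}$, it lies in the finite-dimensional subspace $\pi_L(\mathcal A_{[-r,-1]})\Omega_L\otimes \pi_R(\mathcal A_{[0,r-1]})\Omega_R$ of $\mathcal H_L\otimes\mathcal H_R$, each factor of dimension at most $n_{\max}^r$. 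Hence the Schmidt rank of $\xi_r$ across $L\mid R$ is bounded by $n_{\max}^r$, and pigeonhole on its non-negative Schmidt coefficients (which sum to $\|\xi_r\|^2\ge (1-g(r))^2$) gives
$$\lambda_1(\xi_r)\ \ge\ \frac{(1-g(r))^2}{n_{\max}^r}.$$
Since the Schmidt coefficients of a bipartite vector are the squared singular values of its matrix unfolding, the standard $1$-Lipschitz bound for singular values, $|\sqrt{\lambda_n(\Omega_\psi)}-\sqrt{\lambda_n(\xi_r)}|\le \|\Omega_\psi-\xi_r\|\le g(r)$, yields
$$\sqrt{\lambda_1(\Omega_\psi)}\ \ge\ \frac{1-g(r)}{n_{\max}^{r/2}}-g(r).$$
Fixing $r=r_0$ at the value that makes the right-hand side positive produces the sought bound $\lambda_1\ge 1/k$ with $k$ a function of $f$ (the uniform bound $n_{\max}$ being fixed by the setup of the paper).

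The step I expect to be the main obstacle is the quantitative balance in the choice of $r_0$. For a generic $g\in\mathcal F$, one only knows super-polynomial decay of $g(r)$, whereas the Schmidt-rank bound $n_{\max}^{r/2}$ grows exponentially in $r$; making the right-hand side of the final estimate strictly positive requires $g(r_0)<1/(1+n_{\max}^{r_0/2})$ at some finite scale $r_0$, which is not automatic from $g\in\mathcal F$ alone. Carrying the proof through therefore depends on using sharper locality estimates for the boundary unitary $W$: the generating TDI in the decomposition of Theorem~\ref{thm.lgadecomposition} is anchored near the cut, and exploiting this anchoring (rather than mere quasi-locality) is what gives a sufficiently rapid $g$ to find a working $r_0 = r_0(f)$.
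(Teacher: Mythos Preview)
Your overall strategy --- truncate $W$ to a window $[-r,r-1]$, bound the Schmidt rank of the approximant $\xi_r$ by $n_{\max}^r$, and extract a lower bound on $\lambda_1$ --- is precisely the mechanism behind the tail estimate $\sum_{n>N}\lambda_n<\tilde f(\log N)$ that the paper simply quotes from \cite{kapustin2021classification}. The gap is in how you combine the two ingredients. You first pigeonhole on $\xi_r$ to get $\sqrt{\lambda_1(\xi_r)}\ge(1-g(r))/n_{\max}^{r/2}$ and only then perturb, losing an additive $g(r)$; the resulting inequality is positive only if $g(r)\lesssim n_{\max}^{-r/2}$, an exponential smallness that a generic $g\in\mathcal F$ does not deliver (take $g(r)=e^{-(\log r)^2}$). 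Your proposed remedy --- that anchoring of the boundary TDI produces a qualitatively faster $g$ --- does not help: anchoring constrains the support structure of the generator, but the decay profile in \eqref{eq.localityofunitaryatcut} is still $\tilde f\in\mathcal F$ determined by the original $f\in\mathcal F$, with no exponential improvement.

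The fix is simply to reverse the two steps. Since $\xi_r$ has Schmidt rank at most $N:=n_{\max}^r$, the Eckart--Young best low-rank approximation theorem gives a tail bound for $\Omega_\psi$ directly:
\[
\sum_{n>N}\lambda_n(\Omega_\psi)\;=\;\min_{\mathrm{rank}(\eta)\le N}\|\Omega_\psi-\eta\|^2\;\le\;\|\Omega_\psi-\xi_r\|^2\;\le\; g(r)^2.
\]
Now pigeonhole on the first $N$ Schmidt coefficients of $\Omega_\psi$ itself: $\lambda_1(\Omega_\psi)\ge(1-g(r)^2)/n_{\max}^{r}$, which is strictly positive as soon as $g(r_0)<1$ --- automatic for any $g\in\mathcal F$ at some $r_0=r_0(f)$. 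Substituting $r=(\log N)/\log n_{\max}$ recovers exactly the cited bound, and the paper's three-line proof is then just this last pigeonhole.
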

	\begin{proof}
 In \cite{kapustin2021classification}, it was proven that  \begin{equation}
			\sum\limits_{n>N} \lambda_n < \tilde f(\log N).
		\end{equation}
  for $\tilde f$ depending only on $f$. 
 The above bound and normalization of the reduced density matrix  $\sum\limits_{j=1}^\infty \lambda_j = 1$
		 imply that $\lambda_1 \ge {\dfrac {1-\tilde f (\log N)}N}$. Choose $N$ large enough so that $1-\tilde f(\log N) > 0$, which yields a strictly positive lower bound.  
	\end{proof}

    \spacing

    Equation \eqref{eq.schmidtdecomposition} implies that the restriction $(\psi|_R, \AA_R)$ of $(\psi,\AA)$ to the halfline $R$ can be represented by a density matrix
    \begin{align*}
        \rho_R = \sum\limits_{n=1}^\infty \lambda_n \ketbra{\Omega_n^R} \in \mathcal B(\HH_R),
    \end{align*}
    via $\psi|_R (A) = \Tr (\rho_R \pi^R_\psi(A))$, $A \in \AA_R$. In spectral notation,
    \begin{equation} \label{eq.spectraldecompositionofrestriction}
        \rho_R = \sum\limits_{m=1}^\infty \lambda_{n_m} P_m,
    \end{equation}
    where $\{\lambda_{n_m}\}_m$ is a subsequence obtained from $\{\lambda_n\}_n$ by removing degeneracies, i.e.\ 
    $n_1=1$ and $n_m=\min \{n |\, n > n_{m-1}, \lambda_n < \lambda_{n_{m-1}} \}$ for $m>1$.
  The spectral projectors are 
    $$P_{m} = \sum\limits_{n_m \leq n <n_{m+1}} \ketbra{\Omega_n^R}.$$

    Moreover, recall that Lemma \ref{lem.lowerboundschmidt} implies there is a (uniform in $j$) lower bound for the absolute value of $\lambda_1$. By the normalization of the density matrix $ \Tr(\rho_R)=1$, this  immediately implies 
	\begin{equation} \label{eq.lowerboundfirstschmidt}
	\text{dim}(\text{Range}(P_1)) \le 1/\lambda_1 \le k.
	\end{equation}


\section{Proof of Completeness of Classification} \label{sec.proofmain}

In this section, we prove Theorem \ref{thm.main}. 
    \subsection{Approximation of G-states by Schmidt eigenvectors}

    The next proposition introduces a novel approach for the completeness proof when compared to \cite{kapustin2021classification}.  The main idea is to approximate a $G$-state $(\psi,\AA,U)$ with trivial index by its Schmidt eigenvectors corresponding to the largest Schmidt eigenvalue. 
    We use the split GNS representation introduced in subsection \ref{sec.constructingcupreps} and used in subsection \ref{sec.schmidtdecomposition}. In particular, we have the representation $(\mathcal H_R,\pi^R_\psi)$ of $\mathcal A_R$ and the density matrix $\rho_R \in \mathcal B(\mathcal H_R)$ representing the restriction $\psi|_R$ to $\mathcal A_R$.  Finally, we have the unitary action $u^R$ on $\mathcal H_R$ introduced in Lemma \ref{lem.splitunitary} and   the spectral decomposition \eqref{eq.spectraldecompositionofrestriction} of the density matrix $\rho_R$.
    
    \begin{proposition} \label{thm.singlets}
        Let the $G$-state $(\psi,\AA,U)$ have a trivial index: 
        $\sigma_{(\psi,\AA,U)} = [1].$ Then
        \begin{enumerate} [label=(\roman*)]
            \item Let $\mathcal J:=\text{Ran} (P_1) \subset \mathcal H_R$ with $P_1$ as in \eqref{eq.spectraldecompositionofrestriction}.  Then $\mathrm{dim}(\mathcal J) \leq k <\infty$ with $k$ depending only on $f$.  
            Moreover, $(\mathcal J, u^{R})$ is a finite-dimensional representation of $G$. 

             \item If $\xi,\xi' \in \mathcal J$ with $||\xi||=||\xi'||=1$,  then
             \begin{equation}
                 |\langle \xi, \pi^R_\psi (A) \xi' \rangle - \langle \xi|\xi'\rangle \psi (A) | < \norm {A} \tilde f (r), \qquad A \in \AA_{[j+r,\infty)},
             \end{equation}
             for $\tilde f$ depending only on $f$. 
        \end{enumerate}
    \end{proposition}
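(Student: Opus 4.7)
For part (i), I would first note that Lemma 4.7 gives a lower bound $\lambda_1 \geq 1/k$ depending only on $f$, so by \eqref{eq.lowerboundfirstschmidt} the dimension of $\mathcal J = \mathrm{Ran}(P_1)$ is at most $k$. For the $G$-representation structure, the key input is that $u(g)\Omega = \Omega$ (Lemma 3.4) combined with the factorization $u(g) = u^L(g) \otimes u^R(g)$. Taking the partial trace over $\HH_L$ of $\ketbra{\Omega} = u(g)\ketbra{\Omega}u(g)^*$ yields $\rho_R = u^R(g)\rho_R u^R(g)^*$, hence $[u^R(g),P_1]=0$ for every $g \in G$, and $\mathcal J$ is $u^R$-invariant. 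By Remark 3.17 combined with the hypothesis $\sigma_{(\psi,\AA,U)}=[1]$, we may choose the lift $u^R$ of $\mathbf u^R$ to be a strongly continuous unitary representation of $G$, and its restriction to the finite-dimensional invariant subspace $\mathcal J$ gives the desired finite-dimensional continuous representation.

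For part (ii), the plan is to translate the claimed estimate into the clustering bound of Corollary 4.5. Writing $\xi = \sum_n c_n \Omega_n^R$, $\xi' = \sum_n d_n \Omega_n^R$ with the sums restricted to indices with $\lambda_n = \lambda_1$, a direct computation with the Schmidt decomposition \eqref{eq.schmidtdecomposition} gives, for every $A \in \AA_R$,
\begin{equation*}
\lambda_1 \langle \xi, \pi^R_\psi(A)\xi'\rangle = \langle \Omega, B\, \pi(A)\Omega\rangle, \qquad \lambda_1 \langle \xi, \xi'\rangle = \langle \Omega, B\, \Omega\rangle,
\end{equation*}
where $B := \mathds 1_{\HH_L}\otimes \ket{\xi'}\bra{\xi} \in \mathcal B(\HH_L \otimes \HH_R)$. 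Since $\lambda_1 \geq 1/k$, it suffices to produce a bound $|\langle\Omega,B\pi(A)\Omega\rangle - \langle\Omega,B\Omega\rangle\psi(A)| \leq \|A\|\,\tilde f(r)$ for $A \in \AA_{[j+r,\infty)}$.

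The natural strategy is to approximate $B$ by an operator supported away from $\mathrm{supp}(A)$ and apply Corollary 4.5. Concretely, I would fix $r' \in (0,r)$ and seek $B_{r'} \in \pi(\AA_{(-\infty, j+r')})''$ with $\|B - B_{r'}\| \leq f^*(r')$, for some $f^* \in \mathcal F$ depending only on $f$. Granted such an approximation, the triangle inequality combined with Corollary 4.5 applied to $B_{r'}$ (whose support is at distance $r-r'$ from $\mathrm{supp}(A)$) gives
\begin{equation*}
|\langle\Omega,B\pi(A)\Omega\rangle - \langle\Omega,B\Omega\rangle\psi(A)| \leq \|A\|\bigl(\|B_{r'}\|\,\tilde f(r-r') + 2 f^*(r')\bigr),
\end{equation*}
and choosing $r' = r/2$ produces a bound of the desired form.

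The hard part is establishing the approximation $\|B-B_{r'}\| \leq f^*(r')$, i.e.\ that $\ket{\xi'}\bra{\xi}$ with $\xi,\xi' \in \mathcal J$ is approximately local near the cut. The physical intuition is that $\mathcal J$ is the eigenspace of $\rho_R$ for its largest eigenvalue, and for an SRE state the entanglement across the cut is localized near it, so the top Schmidt subspace is determined by the near-cut structure. To make this rigorous I would apply the split property a second time at the cut $j+r'$, obtaining a tensor factorization $\HH = \HH_{<j+r'}\otimes \HH_{\geq j+r'}$, and then show that $\mathds 1_L \otimes \ket{\xi'}\bra{\xi}$ is approximately of the form $\tilde B \otimes \mathds 1_{\HH_{\geq j+r'}}$, with error controlled by a perturbative comparison between $\rho_R$ and its quasi-factorized approximation obtained from the bipartite clustering of $\psi|_R$ (Proposition 4.4 applied to the restricted state). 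The uniform lower bound $\lambda_1 \geq 1/k$ is used to ensure that the top spectral projector $P_1$ is stable under such perturbations, via a standard spectral-gap perturbation estimate. This spectral-projector stability is the main technical obstacle; once it is set up, the rest of the argument is bookkeeping through the clustering estimate.
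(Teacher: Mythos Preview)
Your argument for part (i) is correct and essentially matches the paper's: both establish $u^R(g)\rho_R u^R(g)^* = \rho_R$, deduce that the spectral projector $P_1$ is $u^R$-invariant, and invoke the trivial-index hypothesis (Remark~3.17) to choose $u^R$ as a genuine strongly continuous representation.

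For part (ii), you take a genuinely different and harder route than the paper. You place the rank-one operator on the \emph{right} factor, setting $B = \mathds 1_{\HH_L}\otimes \ket{\xi'}\bra{\xi}$, and then face the task of showing this $B$ is approximately in $\pi(\AA_{(-\infty,j+r')})''$. The paper avoids this entirely by placing the rank-one operator on the \emph{left}: to $\xi = \sum_{n\le k_0} c_n \Omega_n^R$ one associates its ``left partner'' $\xi_L := \sum_{n\le k_0} c_n \Omega_n^L \in \HH_L$, similarly $\xi'_L$, and sets $B := \ket{\xi'_L}\bra{\xi_L} \in \mathcal B(\HH_L)$. Because $\lambda_n = \lambda_1$ for all $n\le k_0$, the Schmidt decomposition gives exactly the same identity $\langle\Omega,(B\otimes\mathds 1)\pi(A)\Omega\rangle = \lambda_1\langle\xi',\pi_R(A)\xi\rangle$. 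But now $B$ lies in $\mathcal B(\HH_L) = \pi_L(\AA_L)''$ automatically, by irreducibility of $\pi_L$, so Corollary~4.5 applies directly with $Y = L^{(j)}$ and $X = [j+r,\infty)$; no approximation step is needed.

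Your approach also has a genuine gap as written. You invoke ``spectral-gap perturbation'' to argue that $P_1$ is stable, citing the lower bound $\lambda_1 \ge 1/k$. But that lower bound gives no control on the gap $\lambda_1 - \lambda_{n_2}$, which could be arbitrarily small; nothing in the setup prevents $\lambda_{n_2}$ from being $\lambda_1 - \epsilon$. Without such a gap, standard perturbation theory for $P_1$ does not apply, and your localization estimate for $\ket{\xi'}\bra{\xi}$ is unsupported. One could try to repair this by working with a threshold projector (onto eigenvalues $\ge \lambda_1/2$, say), but then additional work is needed to relate that projector back to $\mathcal J$. The paper's left-side trick sidesteps all of this.
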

    \begin{proof}
    (i) 
    By Remark \ref{remark.triviality}, $(\HH_R,u^R)$ is a strongly-continuous linear representation of $G$. 
	By a corollary (Corollary 9.5 of \cite{knapp2002lie}) of Peter-Weyl's theorem, the GNS Hilbert space $\HH_{R}$ splits into an orthogonal countable direct sum of irreducible unitary finite-dimensional representations
	\begin{equation}
		\HH_{R} = \bigoplus\limits_{\alpha=1}^\infty \HH_\alpha.
	\end{equation}
    Since $\rho_{R}$ is G-invariant, each $\HH_\alpha$ is contained in $\text{Range}(P_i)$ for some $i \in \mathbb N$. In particular, $\mathcal J=\text{Range}(P_1)$ carries a strongly continuous linear  representation of G. By \eqref{eq.lowerboundfirstschmidt}, we conclude $\text{dim} (\mathcal J) \le k < \infty$, for $k$ depending only on $f \in \mathcal F$. \\


    \noindent (ii) 
    Recall the Schmidt decomposition in section \ref{sec.schmidtdecomposition}:
    $$
     {\Omega_\psi} := \sum\limits_{n=1}^\infty \sqrt{\lambda_n} {\Omega^{L}_n} \otimes  {\Omega^{R}_n}.
    $$
    By the result above, $\lambda_1=\lambda_2=\ldots=\lambda_{k_0}$ with $k_0=\text{Ran}(P_1) <k$.
We can expand $\xi=\sum_{n=1}^{k_0} c_n \Omega^R_{n}$, and we associate to $\xi$ the vector $\xi_{L} \in \mathcal H_L$ given by $\xi_{L}=\sum_{n=1}^{k_0} {c_n} \Omega^L_{n}$. In the same way, we associate $\xi'_{L}$ to $\xi'$ (Alternatively, we could have adjusted the Schmidt decomposition).
We now consider  $B=|\xi'_{L}\rangle\langle {\xi_{L}}| \in \mathcal B(\HH_L)$ so  that 
\begin{equation}\label{eq: correlation applied}
  \langle  \Omega, B \otimes \pi_R(A) \Omega \rangle= \lambda_1 \langle \xi',    \pi_R(A) \xi \rangle.
\end{equation}
By the discussion in section \ref{sec.constructingcupreps}, we know $\mathcal B(\HH_L) = \pi_L(\AA_L)''$.   
We can then apply Corollary \ref{cor.decayofcorrelationsalgebra} to \eqref{eq: correlation applied} to obtain: 
    \begin{align*}
        \dfrac 1{\lambda_1} \tilde f(r) \norm{A} \norm{B} \ge 
       | \langle \xi',    \pi_R(A) \xi \rangle-   
       \langle \xi',  \xi \rangle
       \langle  \Omega, \pi(A) \Omega \rangle |.
    \end{align*}
    Due to the lower bound \eqref{eq.lowerboundfirstschmidt} for $\lambda_1$ and $||B||\leq 1$, the claim follows upon updating $\tilde f$ depending only on $f$. 
\end{proof}


Let us recall the family of bipartitions $\bbZ=L^{(j)}\cup R^{(j)}$.  Since the translates of an $f$-SRE state are still $f$-SRE states, and also the property of being a $G$-state is preserved, we can apply the above theorem to any of these bipartitions and obtain Hilbert spaces $\mathcal J^{(j)}$ with $\mathrm{dim}(\mathcal J^{(j)})\leq k$, uniformly in $j$. 
For the next result, we are in the same position as for Proposition \ref{thm.singlets}, i.e.\ we choose a $G$-state $(\psi,\caA,U)$  that is $f$-SRE, and whose index is trivial. Without loss of generality, via Lemma \ref{lem.specialgstate}, we also recall that there exists a special $G$-product state  $(\phi,\caA,U)$.
\begin{proposition} \label{cor.invariantstates}
		Let the assumptions of Proposition \ref{thm.singlets} be satisfied.  There is a $\tilde{f}$ depending only on $f$, 
  a $G$-product state $(\phi',\caA',U')$ and a family  of $G$-states
   $(\gamma_j, \AA \stack \AA',U\stack U')$, indexed by $j\in \ZZ$,  satisfying:
		\begin{enumerate}
  \item Each $\gamma_j$ is $\tilde{f}$-SRE.
  \item $\gamma_j$ is factorized between $L^{(j)}$ and $R^{(j)}$.
			\item $\gamma_j$ is equal to the $G$-product state $\phi \tilde \otimes \phi'$ on $L^{(j)}$:
			\begin{equation}
				\phi \tilde \otimes \phi'(A) = \gamma_j (A), \qquad A \in \AA\tilde \otimes \AA'|_{(-\infty, j-1]};
			\end{equation}
			\item $\gamma_j$ is $\tilde f$-close far from $j$ to $\psi \tilde \otimes \phi'$, to the right of $j$: 
			\begin{equation}
				|\psi \tilde \otimes \phi'(A) - \gamma_j (A)| < \tilde f(r) \norm{A}, \qquad A \in \AA\tilde \otimes \AA'|_{[j+r,\infty)},
			\end{equation}
		\end{enumerate}
	\end{proposition}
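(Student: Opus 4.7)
The plan is to construct $\gamma_j$ by splitting at site~$j$: on the left half place the special $G$-product state $\phi\stack\phi'$, and on the right place a vector state built from a $G$-invariant Bell-type vector in $\mathcal J^{(j)}\otimes\overline{\mathcal J^{(j)}}$ supplied by Proposition~\ref{thm.singlets}. The auxiliary $G$-product state $\phi'$ plays two roles: turning $\phi\stack\phi'$ into a special $G$-product state (needed for property~(3)), and providing, inside the GNS space $\HH'_R$ of $\phi'|_{R^{(j)}}$, an equivariant copy of $\overline{\mathcal J^{(j)}}$, so that the singlet construction remains available even when $\mathcal J^{(j)}$ itself carries no non-trivial $G$-invariant subspace.

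Concretely, I fix $\phi'$ as in Lemma~\ref{lem.specialgstate}, with on-site decomposition $\HH'_i=\CC e_i\oplus\overline{\HH_i}$ and product state $\bigotimes_i\langle e_i,(\cdot)e_i\rangle$. Since the index respects stacking and $[\phi']=[1]$, the stacked $G$-state $(\psi\stack\phi',\AA\stack\AA',U\stack U')$ still has trivial index, so Proposition~\ref{thm.singlets} supplies, at each cut $j$, a linear $G$-representation $\mathcal J^{(j)}\subset\HH_R$ of dimension $d_j\le k$ together with the scalar approximation
\[
\langle\xi,\pi^R(A)\xi'\rangle\approx\langle\xi|\xi'\rangle\,\psi(A),\qquad \xi,\xi'\in\mathcal J^{(j)},\ A\in\AA\stack\AA'|_{[j+r,\infty)}.
\]
Choosing an orthonormal basis $\{\xi_k\}_{k=1}^{d_j}$ of $\mathcal J^{(j)}$ and its dual basis $\{\bar\xi_k\}$ inside $\overline{\mathcal J^{(j)}}\hookrightarrow\overline{\HH_R}\hookrightarrow\HH'_R$, I set
\[
\Xi_j:=\tfrac{1}{\sqrt{d_j}}\sum_{k=1}^{d_j}\xi_k\otimes\bar\xi_k\in\HH_R\otimes\HH'_R,
\]
which is $G$-invariant by Schur's lemma (it represents the identity endomorphism in $\mathrm{End}(\mathcal J^{(j)})$, which spans the trivial component of $\mathcal J^{(j)}\otimes\overline{\mathcal J^{(j)}}$). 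I then define $\gamma_j|_{L^{(j)}}:=(\phi\stack\phi')|_{L^{(j)}}$, $\gamma_j|_{R^{(j)}}(A):=\langle\Xi_j,\pi^R_{\psi\stack\phi'}(A)\Xi_j\rangle$, and $\gamma_j:=\gamma_j|_{L^{(j)}}\stack\gamma_j|_{R^{(j)}}$.

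Properties~(2) and~(3) hold by construction. $G$-invariance of $\gamma_j$ is inherited from each tensor factor, and purity holds because each factor is a vector state in an irreducible GNS representation. For property~(4), expanding $\gamma_j|_{R^{(j)}}(A_1\otimes A_2)$ on a product observable with $A_1\in\AA_{[j+r,\infty)}$, $A_2\in\AA'_{[j+r,\infty)}$, Proposition~\ref{thm.singlets}(ii) applied in the $\AA$-slot produces $\langle\xi_k,\pi^R(A_1)\xi_\ell\rangle\approx\delta_{k\ell}\psi(A_1)$; combined with the dual estimate $\langle\bar\xi_k,\pi^{R'}(A_2)\bar\xi_\ell\rangle\approx\delta_{k\ell}\phi'(A_2)$ on the $\AA'$-slot this yields $\gamma_j|_{R^{(j)}}(A_1\otimes A_2)\approx\psi(A_1)\phi'(A_2)$ with error $\tilde f(r)\|A_1\|\|A_2\|$, and density then covers the general $A$. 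Property~(1) will follow by combining Lemma~\ref{lem.clonesshalfchains} with Proposition~\ref{prop.connectingclosestates} to exhibit a $G$-equivariant TDI anchored near $j$ carrying $\gamma_j$ to the pure $G$-product state $\phi\stack\phi'$.

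The main obstacle will be the dual scalar estimate on the $\AA'$-slot, since $\bar\xi_k$ is not a vector in the one-dimensional first-Schmidt subspace of $\phi'|_{R^{(j)}}$ and Proposition~\ref{thm.singlets}(ii) cannot be invoked there verbatim. I plan to derive it by transporting the scalar-action content of Proposition~\ref{thm.singlets}(ii) for $\mathcal J^{(j)}\subset\HH_R$ to $\overline{\mathcal J^{(j)}}\subset\HH'_R$ via the $G$-equivariant conjugation isomorphism $\HH_R\cong\overline{\HH_R}\hookrightarrow\HH'_R$, using that $\phi'$ is the trivial product state on the complementary $\CC e_i$-summands on which the observables $A_2$ effectively factor for $r$ sufficiently large.
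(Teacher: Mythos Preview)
Your construction has a genuine gap at the embedding $\overline{\mathcal J^{(j)}}\hookrightarrow\overline{\HH_R}\hookrightarrow\HH'_R$. The space $\mathcal J^{(j)}$ sits inside the infinite-dimensional GNS space $\HH_R$ of $\psi|_{R^{(j)}}$, and there is no map from $\overline{\HH_R}$ into the GNS space $\HH'_R$ of $\phi'|_{R^{(j)}}$; these are unrelated Hilbert spaces. For the argument to go through you need a $G$-equivariant embedding of the finite-dimensional representation $\overline{\mathcal J^{(j)}}$ into $\HH'_R$ that is \emph{strictly localized at site $j$}, so that for $A_2\in\AA'_{[j+r,\infty)}$ with $r\ge 1$ one gets $\langle\bar\xi_k,\pi^{R}_{\phi'}(A_2)\bar\xi_\ell\rangle=\delta_{k\ell}\,\phi'(A_2)$ exactly. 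With your choice $\HH'_j=\CC e_j\oplus\overline{\HH_j}$, such a site-$j$ embedding would force $\mathcal J^{(j)}$ to be a $G$-subrepresentation of the single on-site space $\HH_j$. There is no reason for this: $\mathcal J^{(j)}$ is the top Schmidt eigenspace of the half-chain density matrix of $\psi$ and may contain irreducibles of $G$ that do not occur in $\HH_j$ at all. The ``conjugation isomorphism'' you invoke does not help, because it does not land in $\HH'_R$.

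The paper's fix is to choose $\AA'$ adapted to the Schmidt data: set $\HH'_i:=\overline{\mathcal J^{(i)}}\oplus\CC v_i$ for every $i$, with $\phi'=\bigotimes_i\langle v_i,(\cdot)v_i\rangle$. Then $\overline{\mathcal J^{(j)}}$ is by construction a summand of the on-site space $\HH'_j$, and $w\mapsto\pi^R_{\phi'}(|w\rangle\langle v_j|)\Omega_{\phi'}$ gives a $G$-equivariant embedding into $\HH'_R$ supported exactly at site $j$. With this choice the ``dual estimate'' on the $\AA'$-slot becomes an identity for $r\ge 1$, and the obstacle you flagged disappears. Two minor points: item~(3) only requires $\phi\stack\phi'$ to be a $G$-product state, not a special one; and for item~(1) you cannot connect $\gamma_j$ directly to $\phi\stack\phi'$ by an anchored TDI since they are not close on $R^{(j)}$---compare $\gamma_j$ instead with an SRE state that already matches it on both half-lines far from $j$, as the paper does.
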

	
	\begin{proof}
 We first define $\AA'$ and $U'$. 
    The  on-site Hilbert space $\HH'_{i}$ and the on site-action $U_i'$ are such that 
    $$\HH'_{i} := \overline {\mathcal J^{(i)}} \oplus \CC v_i,$$
    where $(\overline {\mathcal J^{(i)}}, U_i')$ is the conjugate representation of ${\mathcal J^{(i)}}$ (the representation obtained in Proposition \ref{thm.singlets} w.r.t a bipartition $L^{(i)}$, $R^{(i)}$) and $(\CC v_i, U_i')$ is an arbitrary continuous one-dimensional representation  of $G$. 
 The chain algebra $\AA'$ is then constructed from the on-site Hilbert spaces $\HH'_i$.
 The product state $\phi'$ is defined by the formal product 
 $$
 \phi'=\otimes_{j\in \ZZ} \langle v_j,(\cdot) v_j\rangle,
 $$
as described in section \ref{sec: states}. \par

Fix $j \in \ZZ^+$ and denote $R^{(j)} = R,L^{(j)} = L $ to avoid clutter. Consider a GNS triple $(\HH'_R, \pi^R_{\phi'},\Omega_{\phi'})$ of $\phi_R'$, and the subspace
$$\mathcal G_j := \pi_{\phi'}^R (\AA_{\{j\}}) \Omega_{\phi'} \subset \HH_R'.$$
Since $\phi'$ is a $G$-product state, we can define a Hilbert space isomorphism 
$$
p: \HH_j' \to \mathcal G_j, \qquad p(w)=  \pi_{\phi'}^R (| w\rangle \langle v_j |) \Omega_{\phi'}, \qquad w \in \HH_j'. $$
This $p$ interwtines the unitary $G$-actions. 
We denote $p(\overline{\mathcal J^{(j)}}) \subset \HH_R'$ simply by $\overline{\mathcal J^{(j)}}$ since confusing is unlikely. 
We then consider the subspace $$\mathcal J^{(j)} \stack \overline{\mathcal J^{(j)}} \subset \HH_R \stack \HH_R'.$$
Let ${\eta^j} \in \HH_{R} \tilde \otimes \HH_{R}'$ be the singlet representation of $G$ arising in $ \mathcal J^{(j)} \tilde \otimes \overline {\mathcal J^{(j)}}$ (see  \cite{fuchs,creutz} for its existence). We are finally ready to define the factorized state $\gamma_j=\gamma_j|_{L}\otimes \gamma_j|_{R}$.
On the left, we simply set $\gamma_j|_{L}=(\phi \tilde \otimes \phi')_{L}$ (which is indeed pure since $\phi,\phi'$ are product states), and on the right we set 
\begin{equation}
    \gamma_j|_{R}=  \left< \eta^j, \pi_\psi^R \tilde\otimes \pi_{\phi'}^R (\,\cdot\,) \eta^j\right>.
\end{equation}
In this way, items 2. and 3. are satisfied. 
To check item 4., we decompose $\eta^j=\sum_{\ell=1}^{k_0} \xi_\ell \stack \alpha_\ell$ by the Schmidt decomposition for the tensor product $\mathcal J^{(j)} \stack \overline{\mathcal J^{(j)}}$.
Then 
$$
\gamma_j|_{R}=\sum_{\ell,\ell'=1}^{k_0} \theta_{\ell,\ell'},
$$
where 
$$
\theta_{\ell,\ell'}= \left< \xi_\ell\stack \alpha_\ell, \pi_\psi^R \tilde\otimes \pi_{\phi'}^R (\,\cdot\,) \xi_{\ell'} \stack \alpha_{\ell'}   \right>  =  
\left< \xi_\ell, \pi_\psi^R  (\,\cdot\,) \xi_{\ell'} \right> \stack 
\left<  \alpha_\ell,  \pi_{\phi'}^R (\,\cdot\,) \alpha_{\ell'}   \right>.
$$
By the fact that $\phi'$ is a product state and the orthogonality of the collection $(\alpha_{\ell})_{\ell=1,\ldots,k_0}$, we have, for $r>1$,
$$
\left<  \alpha_\ell,  \pi_{\phi'}^R (\,\cdot\,) \alpha_{\ell'}   \right>|_{[j+r,\infty)} =\delta_{\ell,\ell'}  ||\alpha_\ell||^2 \phi'|_{[j+r,\infty)}.
$$
By Proposition \ref{thm.singlets} and orthogonality of the collection $(\xi_{\ell})_{\ell=1,\ldots,k_0}$, we have
$$
||\left(\left<  \xi_\ell,  \pi_{\psi}^R (\,\cdot\,) \xi_{\ell'}   \right> - \delta_{\ell,\ell'} ||\xi_\ell||^2 \psi\right)|_{[j+r,\infty)}  || \leq  \tilde f(r).
$$
Combining these two bounds
and summing over  $\ell,\ell'$, we get 
$$
|| (\gamma_j - 
  \psi\otimes\phi')|_{[j+r,\infty)}|| 
  \leq k^2    \tilde f(r),
$$
 which yields item 4. upon updating $\tilde f$, since $k$ depends only on $f$.

    \par 
    It is left for us to prove item 1), i.e.\ that $\gamma_j$ is SRE. Let $H$ be a TDI generating $\psi \stack \phi'$ from a product state $\phi\stack \phi'$, and consider a decomposition as in Theorem \ref{thm.lgadecomposition} for $\alpha_H(1)$. The SRE state
    $$\psi \stack \phi' \circ (\alpha_{H_{L^{(j)}}})$$
    is $\tilde f$-close to $\gamma_j$ far from site $j$, for some $\tilde f \in \mathcal F$ that depends only on $f$. By Proposition \ref{prop.connectingclosestates},
    $$\gamma_j = \psi\stack \phi' \circ \alpha_{H_{L^{(j)}}} \circ \alpha_K,$$
    for some TDI $K$ satisfying
    $$\norm{K}_{j,\tilde f}<\infty,$$
    for $\tilde f$ depending only on $f$. Hence $\gamma_j$ is $\tilde f$-SRE, for some updated $\tilde f$ depending only on $f$.  
	\end{proof}

\spacing

Proposition \ref{cor.invariantstates} states that there exists a factorized state that is close to  $\psi \stack \phi'$  to the right and that equals a product state to the left. This already suggests that $\psi$ is in the trivial phase (cf.\ \cite{Gaiotto}) and this will be developed in the following sections. By similar reasoning, we can argue that $\psi \stack \phi'$  is $G$-equivalent to a factorized state that is close to $\psi \stack \phi'$ both to the left and to the right.  


\begin{corollary} \label{thm.trivialindex.equivalencetoaproduct}
    Let $(\psi,\AA,U)$ be a $G$-state with trivial $H^2$ index. Then it is $G$-stable equivalent to a $G$-state 
    $$(\gamma_0^L|_{L^{(0)}} \otimes \gamma_0^R|_{R^{(0)}}, \AA \stack \AA', U\stack U'),$$
    that is factorized between $L^{(0)}$ and $R^{(0)}$. 
\end{corollary}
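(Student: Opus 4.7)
The plan is to combine Proposition~\ref{cor.invariantstates} with its mirror-reflected analog to obtain a $G$-state factorized across the cut at $0$ and $\tilde f$-close to $\psi\stack\phi'$ on both half-chains; the $G$-stable equivalence with $\psi$ will then follow from Proposition~\ref{prop.connectingclosestates}. Applying Proposition~\ref{cor.invariantstates} at $j=0$ directly yields a $G$-product state $(\phi'_R,\AA'_R,U'_R)$ and an $\tilde f$-SRE $G$-state $\gamma_0^R$ on $\AA\stack\AA'_R$, factorized between $L^{(0)}$ and $R^{(0)}$, equal to $\phi\stack\phi'_R$ on $L^{(0)}$, and $\tilde f$-close to $\psi\stack\phi'_R$ far from $0$ on $R^{(0)}$. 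The argument of Proposition~\ref{cor.invariantstates} is symmetric under reflection $j\mapsto -j$: running the same construction with $\mathcal J^{(j)}$ defined from the largest-eigenvalue spectral projector on the GNS restriction to $L^{(j)}$ rather than $R^{(j)}$ produces $(\phi'_L,\AA'_L,U'_L)$ and an $\tilde f$-SRE $G$-state $\gamma_0^L$ on $\AA\stack\AA'_L$ with the roles of $L^{(0)}$ and $R^{(0)}$ interchanged.

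Setting $\AA':=\AA'_L\stack\AA'_R$, $U':=U'_L\stack U'_R$, and $\phi':=\phi'_L\stack\phi'_R$, I will lift $\gamma_0^L$ and $\gamma_0^R$ to the stacked algebra $\AA\stack\AA'$ by stacking with the missing product state on the respective auxiliary factor, and define the candidate
\begin{equation*}
\widehat\gamma \;:=\; \gamma_0^L|_{L^{(0)}} \otimes \gamma_0^R|_{R^{(0)}}.
\end{equation*}
Each factor is a pure $G$-invariant state on its half-chain algebra, obtained from a half-chain product state by a TDI supported in the corresponding half-line via the decomposition of Theorem~\ref{thm.lgadecomposition}, so $\widehat\gamma$ is an $\tilde f$-SRE $G$-state that is factorized between $L^{(0)}$ and $R^{(0)}$ by construction. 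The two closeness statements inherited from $\gamma_0^L$ and $\gamma_0^R$ imply that $\widehat\gamma$ is $\tilde f$-close to $\psi\stack\phi'$ on each half-chain far from $0$, and Lemma~\ref{lem.clonesshalfchains} then upgrades this to global $\tilde f$-closeness far from $\{0\}$.

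To close the argument, I will invoke Proposition~\ref{prop.connectingclosestates}, which in the form stated there provides $G$-stable equivalence between two $\tilde f$-close SRE $G$-states regardless of the value of their relative zero-dimensional charge (handled by Lemma~\ref{lem.moddingout0charge} inside the proposition). This exhibits $\widehat\gamma$ as $G$-stably equivalent to $\psi\stack\phi'$, and hence to $\psi$. The substantive analytic content has already been carried out in Proposition~\ref{cor.invariantstates} and in Section~\ref{sec.technical}; the main obstacle in the present argument is bookkeeping—formulating the mirror-reflected variant of Proposition~\ref{cor.invariantstates} cleanly and tracking the two independent auxiliary algebras $\AA'_L$ and $\AA'_R$ while ensuring that all intertwining unitaries required for $G$-equivariance can be chosen trivially on these auxiliary factors.
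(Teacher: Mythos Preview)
Your proposal is correct and follows essentially the same approach as the paper: obtain $\gamma_0^R$ and its mirror-reflected counterpart $\gamma_0^L$ from Proposition~\ref{cor.invariantstates}, combine them into a factorized state, use Lemma~\ref{lem.clonesshalfchains} to upgrade half-chain closeness to global $\tilde f$-closeness far from $\{0\}$, and then invoke Proposition~\ref{prop.connectingclosestates} (with its built-in handling of the relative $0$-charge) to conclude $G$-stable equivalence. The paper's proof is terser and additionally cites Proposition~\ref{remark.fcloseness.impliesnormality} for normality, but the logical structure is the same; your more explicit bookkeeping with separate auxiliary algebras $\AA'_L,\AA'_R$ is a harmless refinement.
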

\begin{proof}
    Let $\gamma_0^R$ and $\gamma_0^L$ be two G-invariant half-states obtained as in Proposition \ref{cor.invariantstates}, for the right and left chain, respectively. By Lemma \ref{lem.clonesshalfchains}, the pure state
    $$\gamma_0^L \otimes \gamma_0^R$$
    is $\tilde f$-close to $\psi \stack \phi'$. Therefore, by Proposition \ref{remark.fcloseness.impliesnormality}, it is normal with respect to $\psi \tilde\otimes \phi'$. As a result of Proposition \ref{prop.connectingclosestates}, they are $G$-stably equivalent. 
\end{proof}

\subsection{Connecting different approximants}
The G-states $\gamma_j$ constructed above manifestly have trivial index, as they are factorized between $L^{(j)}$ and $R^{(j)}$. It is fairly easy to see that $\gamma_j$ and $\gamma_{j'}$ are $G$-stably equivalent, but we need a somewhat finer property. To state it, we will choose $j'=j+\ell$ for some large $\ell$ that will be chosen later. 

 We use the same notation as of Proposition \ref{thm.singlets} and Proposition \ref{cor.invariantstates}. 

    \begin{lemma}
    
        There is a G-product state $(\phi'',\AA'',U'')$ and a family of  G-states 
        $$(\nu_{j,\ell},\AA \stack \AA'\stack \AA'',U\stack U'\stack U''),$$
        parametrized by $\ell \in \NN, j \in \ell \ZZ^+$, such that 
            \begin{enumerate} [label=(\roman*)]
            \item $\nu_{j,\ell}$ is factorized between $(-\infty,j)$, $[j,j+\ell)$, and $[j+\ell,\infty)$. 
                \item $\nu_{j,\ell}$ is $\tilde f$-close to $\gamma_{j} \stack \phi''$ far from $j+\ell$, for $\tilde {f}$ depending only on $f$.
            \end{enumerate}
    \end{lemma}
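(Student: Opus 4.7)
The strategy is to iterate the singlet construction of Proposition~\ref{cor.invariantstates}, performing it a second time at the cut $j+\ell$ on a fresh auxiliary chain $\AA''$. The factorization at $j$ is carried along for free because it is baked into $\gamma_j$.

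\textbf{Setup and construction.} Introduce $(\AA'', U'')$ as an isomorphic copy of $(\AA', U')$, with on-site Hilbert spaces $\HH''_i := \overline{\mathcal J^{(i)}} \oplus \CC v''_i$ and associated special $G$-product state $\phi'' := \bigotimes_{i} \langle v''_i, (\cdot)\, v''_i \rangle$. The stacked state $\gamma_j \stack \phi''$ on $\AA \stack \AA' \stack \AA''$ is itself a $\tilde f$-SRE $G$-state, and its $H^2$-index is trivial: since $\gamma_j$ is factorized at $j$, its right-half $G$-action admits a strongly continuous linear lift, so by Remark~\ref{remark.triviality} the associated CUP-rep is a linear representation. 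I then apply the construction of Proposition~\ref{cor.invariantstates} to the state $\gamma_j \stack \phi''$ at the cut $j+\ell$, using the new chain $\AA''$ as the slot hosting the auxiliary subspace $\overline{\mathcal J^{(j+\ell)}}$ at site $j+\ell$, and obtain $\nu_{j,\ell}$.

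\textbf{Verification of the two items.} By the verbatim analog of the conclusions of Proposition~\ref{cor.invariantstates}, $\nu_{j,\ell}$ is $\tilde f$-SRE, factorized between $L^{(j+\ell)}$ and $R^{(j+\ell)}$, equal to a pure $G$-product state on $L^{(j+\ell)}$, and $\tilde f$-close to $\gamma_j \stack \phi''$ far from $j+\ell$. The restriction $\nu_{j,\ell}|_{L^{(j+\ell)}}$ being a $G$-product state is automatically factorized between $L^{(j)}$ and $[j,j+\ell)$, so combined with the factorization at $j+\ell$ we obtain the three-way decomposition required by (i). Item (ii) is exactly the right-side closeness statement produced by the construction.

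\textbf{Main obstacle.} The delicate point is checking that the second invocation of Proposition~\ref{cor.invariantstates} outputs a decay function that depends only on the original $f$ and not on $j$ or $\ell$. This requires two translation-covariant inputs: the uniform Schmidt-dimension bound $\dim \mathcal J^{(j+\ell)}_{\gamma_j \stack \phi''} \le k$ from Proposition~\ref{thm.singlets}(i), with $k$ determined by the $\tilde f$-SRE decay of $\gamma_j \stack \phi''$ (itself depending only on $f$), and the decay-of-correlations bound of Proposition~\ref{lem.decayofcorrelations}. Both hold with constants propagating only through the composition of two uses of Proposition~\ref{cor.invariantstates}'s update $f \mapsto \tilde f$, so the final $\tilde f$ still depends only on the original $f$, uniformly in $j \in \ell\ZZ^+$ and $\ell \in \NN$.
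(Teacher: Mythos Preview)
Your overall strategy---iterate the singlet construction of Proposition~\ref{cor.invariantstates} at the new cut $j+\ell$---is exactly the paper's approach. But there is a concrete gap in your choice of the auxiliary chain $\AA''$.

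You set $\HH''_i := \overline{\mathcal J^{(i)}} \oplus \CC v''_i$, where $\mathcal J^{(i)}$ is the top Schmidt eigenspace of the \emph{original} state $\psi$ at cut $i$. However, to run the singlet construction on $\gamma_j\stack\phi''$ at cut $j+\ell$, the auxiliary on-site space at $j+\ell$ must contain the conjugate of the top Schmidt eigenspace of $\gamma_j$ at that cut, which the paper denotes $\mathcal J(\gamma_j)$. There is no reason for $\mathcal J^{(j+\ell)}$ and $\mathcal J(\gamma_j)$ to be isomorphic as $G$-representations: $\gamma_j$ is only $\tilde f$-close to $\psi\stack\phi'$ on the right, and closeness of states does not imply equality (or even isomorphism) of Schmidt eigenspaces. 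Your own ``Main obstacle'' paragraph correctly identifies the relevant bound as $\dim \mathcal J^{(j+\ell)}_{\gamma_j\stack\phi''}\le k$, so you already implicitly acknowledge that the Schmidt space in play is that of $\gamma_j\stack\phi''$, yet your $\AA''$ is built from the wrong spaces.

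The paper resolves this by defining (for a fixed $\ell$) the on-site spaces of $\AA''$ via $\HH''_{k+\ell} := \CC e_{k+\ell} \oplus \overline{\mathcal J(\gamma_k)}$ for $k\in\ell\ZZ^+$, so that the single chain $\AA''$ simultaneously hosts the correct conjugate representation for every $j\in\ell\ZZ^+$. The paper also uses that $\gamma_j$ is factorized at $j$ to observe that the reduced density matrix on $L^{(j+\ell)}$ has finite rank; this is what makes the middle piece $\zeta_{[j,j+\ell]}$ a genuine pure state on the \emph{finite} local algebra $(\AA\stack\AA'\stack\AA'')_{[j,j+\ell]}$, giving the three-way factorization in item~(i). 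Once $\AA''$ is corrected in this way, the rest of your argument (uniformity in $j,\ell$ via the $f\mapsto\tilde f$ bookkeeping) goes through.
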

    \begin{proof}
        The states $\gamma_j,\ j \in \ell \ZZ^+$ constructed in Proposition \ref{cor.invariantstates} are, by definition, $G$-invariant $\tilde f$-SRE. Since each $\gamma_j$ is factorized between regions $L^{(j)}$ and $R^{(j)}$, they also carry trivial index. Consequently each of them satisfies the assumptions of Proposition \ref{thm.singlets}, with respect to a bipartition $L^{(j+\ell)},R^{(j+\ell)}$. Moreover, its reduced density matrix 
    $$\rho_{L^{(j+\ell)}} = \sum\limits_{m=1}^\infty \lambda_{n_m}(\gamma_j) P_m(\gamma_j)$$
    has finite rank, since it is factorized between $L^{(j)}$ and $R^{(j)}$, and it is $G$-invariant. Analogously as in Proposition \ref{cor.invariantstates}, by means of part (ii) of Proposition \ref{thm.singlets}, we argue there is an auxiliary algebra $\AA''$, a $G$-product state $\phi''$ on $\AA''$, and a $G$-invariant pure state $\zeta_{[j,j+\ell]}$ on $(\mathcal{A} \tilde \otimes \mathcal {A}' \tilde\otimes \mathcal {A}'')|_{[j,j+\ell]}$ with the property that   
    $$
    \nu_{j,\ell} := (\phi \stack \phi' \stack \phi'')|_{L^{(j)}} \otimes \zeta_{[j,j+\ell]} \otimes (\gamma_{j+\ell} \stack \phi'')|_{R^{(j+\ell)}}
    $$ 
    is $\tilde f$-close to $\gamma_j\stack \phi''|_{[j,j+\ell]}$ far from site $j+\ell$: for that, let $\mathcal J(\gamma_j) := \text{Ran }(P_1(\gamma_j))$. Then, we define
    $$
        \HH''_{k+\ell} := \CC e_{k+\ell} \oplus \overline {\mathcal J (\gamma_k)}, \qquad k \in \ell\ZZ^+, 
    $$
    as the on-site spaces defining $\AA''$, where $\CC e_{k+\ell}$ transforms one-dimensionally with $G$. \par 
    Each state $(\zeta_{[j,j+\ell]}, (\AA\stack \AA'\stack \AA'')_{[j,j+\ell]})$ is then defined analogously as in Proposition \ref{cor.invariantstates}, as a vector state corresponding to the singlet representation of $G$ arising in $\mathcal J(\gamma_j) \stack \overline{\mathcal J(\gamma_j)}$. 

    The state $\zeta_{[j,j+\ell]}$ is a pure state on a finite-dimensional matrix algebra $(\AA\stack \AA'\stack \AA'')_{[j,j+\ell]}$, hence it corresponds to a one-dimensional representation contained in 
    $$ \bigotimes\limits_{k \in [j,j+\ell]} \HH_k \stack \HH'_k \stack \HH''_k.$$
    This finishes the construction in part (i). The proof of Part (ii) follows from part (ii) of Proposition \ref{thm.singlets} and lemma \ref{lem.clonesshalfchains}.
    \end{proof}

\begin{figure} [h]
    \centering
    \includegraphics[scale=.26]{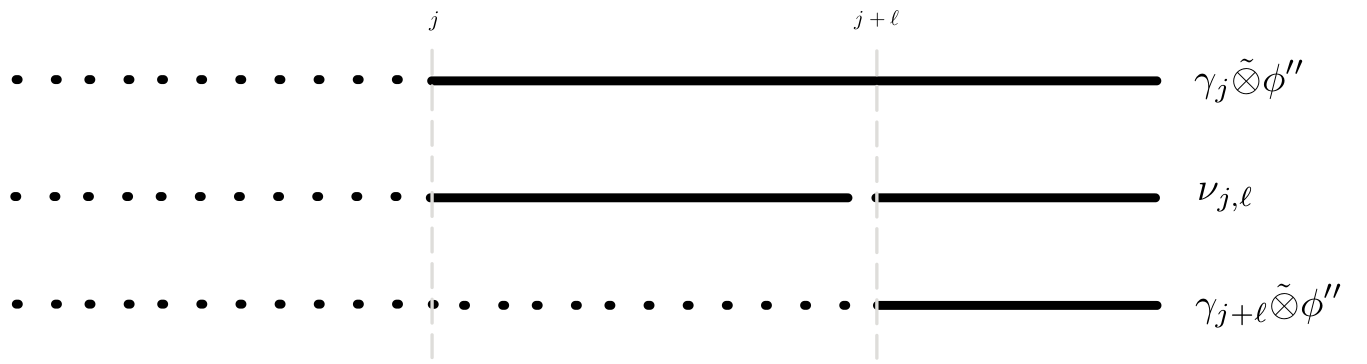}
    \caption{Locality properties of half-states. States $\gamma_j \stack \phi''$ and $\nu_{j,\ell}$ coincide on $L^{(j)}$, as a G-product state, depicted as a sequence of isolated dots. State $\nu_{j,\ell}$ is factorized between $R^{(j+\ell)}$ and $[j,j+\ell)$.}
    \label{fig.localityofstates1}
\end{figure}

\spacing
From the previous lemma, there is a sequence of equivalences
\begin{align*}
    \dots \xrightarrow{\hspace{1cm}} \gamma_{j} \stack \phi'' \xRightarrow{\hspace{1cm}} \nu_{j,\ell} \xrightarrow{\hspace{1cm}} \gamma_{j+\ell} \stack \phi'' \xRightarrow{\hspace{1cm}} \nu_{j+\ell,\ell} \xrightarrow{\hspace{1.5cm}} \dots  
\end{align*}
where a simple arrow means conjugation with a unitary on a finite box of size $\ell$, and a double arrow means the states are $\tilde f$ close far from a site depending on the states (e.g. a state $\gamma_j \stack \phi''$ is $\tilde f$-close to $\nu_{j,\ell}$ far from site $j+\ell$). Locality of such states is schematically illustrated in Figure \ref{fig.localityofstates1}, whereas their $\tilde f$-closeness is illustrated in Figure \ref{fig.closenessofstates1}. In what follows, we need to lift the equivalences to $G$-equivalences, by stacking with product states such that the zero-dimensional charge becomes constant along the above sequence. \\

\begin{figure}
    \centering
    \includegraphics[scale=.35]{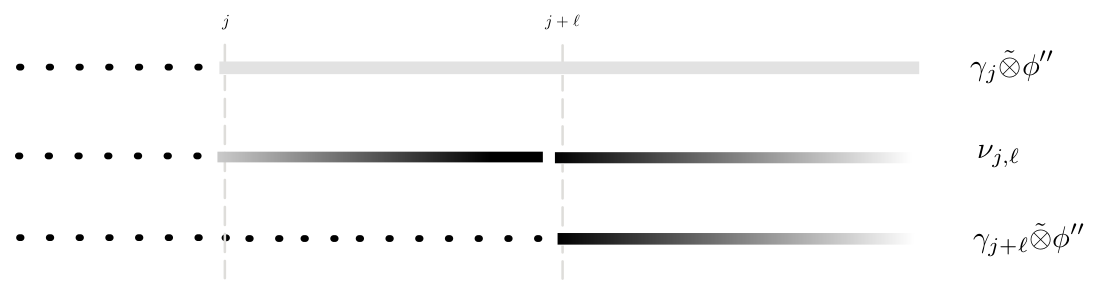}
    \caption{$\tilde f$-closeness of states. State $\nu_{j,\ell}$ is $\tilde f$-close to $\gamma_j \stack \phi''$ far from site $j+\ell$, and coincide with $\gamma_{j+\ell} \stack \phi''$ on $R^{(j+\ell)}.$}
    \label{fig.closenessofstates1}
\end{figure}

\begin{lemma} \label{lem.chargesequence}
    There are families of $G$-product states
    $$ (\phi\chargep_{j,\ell}, \AA\charge, U\charge),\ (\phi\charge_{j,\ell}, \AA\charge, U\charge),$$
    parametrized by $\ell \in \NN, {j \in \ell \ZZ^+}$, such that, defining 
 \begin{equation}
    \omega_{j,\ell} := \gamma_{j} \stack \phi'' \stack \phi_{j,\ell} \charge, \qquad    \varrho_{j,\ell} := \nu_{j,\ell} \stack \phi\chargep_{j,\ell},
\end{equation}
all states in the sequence 
\begin{align} \label{eq: sequence of omega and rho}
    \dots \xrightarrow{\hspace{1cm}} \omega_{j,\ell} \xRightarrow{\hspace{1cm}} \varrho_{j,\ell} \xrightarrow{\hspace{1cm}} \omega_{(j+\ell),\ell} \xRightarrow{\hspace{1cm}} \varrho_{(j+\ell),\ell} \xrightarrow{\hspace{1cm}} \dots 
\end{align}
are mutually normal and have zero-relative charge, i.e.\
\begin{align*}
    &q\left[ \omega_{j+\ell,\ell} / \varrho_{j,\ell} \right] = 1, \\
    &q\left[ \varrho_{j,\ell} / \omega_{j,\ell} \right] = 1.
\end{align*}
Moreover (see also  Figure \ref{fig.charges}), 
    \begin{enumerate}
        \item $\phi\chargep_{j,\ell}$ differs from $\phi\charge_{j,\ell}$ only at site $j+\ell-1$, 
        \item and $\phi\charge_{j+\ell,\ell}$ differs from $\phi\chargep_{j,\ell}$ only at site $j+\ell-1$, 
    \end{enumerate}
\end{lemma}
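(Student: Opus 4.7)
The plan is to choose $\phi_{j,\ell}\charge$ and $\phi_{j,\ell}\chargep$ by induction on $j\in\ell\ZZ^+$, so that at each step the relative $0$-dimensional charge appearing in the sequence \eqref{eq: sequence of omega and rho} is cancelled by a single-site stack at site $j+\ell-1$. The two ingredients I would use are Lemma \ref{lem.moddingout0charge}, which produces $G$-product states with any prescribed relative character concentrated at a chosen site, and the multiplicativity of $q[\cdot/\cdot]$ under stacking given by part (ii) of Proposition \ref{prop.0dimcharge}.

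First, I would verify that the relevant consecutive states are mutually normal, so that the $0$-dimensional relative charges are well defined. For the pair $(\omega_{j,\ell},\varrho_{j,\ell})$ this reduces, after removing the stacked product states that are mutually normal by inspection, to the pair $(\gamma_j\stack\phi'',\nu_{j,\ell})$, which is mutually normal by the previous lemma's $\tilde f$-closeness far from $j+\ell$ combined with Proposition \ref{remark.fcloseness.impliesnormality}. For the pair $(\varrho_{j,\ell},\omega_{j+\ell,\ell})$ I would exploit the factorization of $\nu_{j,\ell}$: on $R^{(j+\ell)}$ it coincides with $\gamma_{j+\ell}\stack\phi''$ by construction, whereas on $L^{(j+\ell)}$ both are $\tilde f$-SRE. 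Combining Lemma \ref{lem.clonesshalfchains} with Proposition \ref{remark.fcloseness.impliesnormality} then yields mutual normality. These properties are preserved under any further stacking with $G$-product states.

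Next I would introduce the continuous characters
$$\chi_{j,\ell} := q[\nu_{j,\ell}/(\gamma_j\stack\phi'')],\qquad \chi'_{j,\ell}:= q[(\gamma_{j+\ell}\stack\phi'')/\nu_{j,\ell}],$$
whose continuity follows from Lemma \ref{lem.strongcontinuity.unitaryimplementation}. Proposition \ref{prop.0dimcharge}(ii) yields
$$q[\varrho_{j,\ell}/\omega_{j,\ell}] = \chi_{j,\ell}\cdot q[\phi_{j,\ell}\chargep/\phi_{j,\ell}\charge],\qquad q[\omega_{j+\ell,\ell}/\varrho_{j,\ell}] = \chi'_{j,\ell}\cdot q[\phi_{j+\ell,\ell}\charge/\phi_{j,\ell}\chargep].$$
Starting the induction at the minimal $j_0\in\ell\ZZ^+$ with $\phi_{j_0,\ell}\charge$ chosen as a trivial special $G$-product state on an auxiliary chain algebra $\AA\charge$ (whose on-site representations are permitted to be non-trivial only at the sites of the form $j+\ell-1$), I would then apply Lemma \ref{lem.moddingout0charge} at site $j+\ell-1$ with the character $\overline{\chi_{j,\ell}}$ to define $\phi_{j,\ell}\chargep$, and apply it once more at the same site $j+\ell-1$ with the character $\overline{\chi'_{j,\ell}}$ to define $\phi_{j+\ell,\ell}\charge$. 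Conditions (1)--(2) of the lemma hold by construction, and the two relative charges vanish by the displayed identities.

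The hard part will be the second half of the mutual normality verification: the states $\nu_{j,\ell}$ and $\gamma_{j+\ell}\stack\phi''$ agree exactly on the right half-chain but differ substantially on the left (one is a product state, the other is essentially $\gamma_{j+\ell}$), so a direct invocation of the previous lemma's closeness estimate does not apply. I expect the argument to require combining the explicit three-fold product factorization of $\nu_{j,\ell}$ from the previous lemma with the $\tilde f$-closeness of $\gamma_{j+\ell}$ to $\psi\stack\phi'$ established in Proposition \ref{cor.invariantstates}, and then extending the resulting half-chain bound to a bound far from a single site near $j+\ell-1$ via Lemma \ref{lem.clonesshalfchains}.
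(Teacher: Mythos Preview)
Your approach is correct and matches the paper's: cancel the characters $\chi_{j,\ell},\chi'_{j,\ell}$ by stacking with $G$-product states concentrated at site $j+\ell-1$, then apply Proposition \ref{prop.0dimcharge}(ii). The paper packages this as a single explicit construction of $\AA\charge$ (three-dimensional on-site spaces $\CC e_k\oplus\CC w_k\oplus\CC v_k$ at each $k\in\ell\ZZ^+-1$, carrying the trivial character together with the two characters needed at that site) rather than an induction, but the content is the same. Note only that Lemma \ref{lem.moddingout0charge} as stated builds a fresh auxiliary algebra at each invocation, so to remain on a single $\AA\charge$ your inductive scheme must be recast as a construction on an algebra chosen in advance with all the required on-site representations---which is exactly what the paper's explicit definition does.

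One simplification: the step you flag as the hard part is in fact immediate. The states $\nu_{j,\ell}$ and $\gamma_{j+\ell}\stack\phi''$ coincide on $R^{(j+\ell)}$ by the definition of $\nu_{j,\ell}$, and on $L^{(j)}$ both equal the product state $(\phi\stack\phi'\stack\phi'')|_{L^{(j)}}$, since $\gamma_{j+\ell}|_{L^{(j+\ell)}}=(\phi\stack\phi')|_{L^{(j+\ell)}}$ by item 3 of Proposition \ref{cor.invariantstates} and $L^{(j)}\subset L^{(j+\ell)}$. Hence the two states differ only on the finite interval $[j,j+\ell)$ and are trivially mutually normal; neither Lemma \ref{lem.clonesshalfchains} nor the closeness of $\gamma_{j+\ell}$ to $\psi\stack\phi'$ is needed for this pair.
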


\begin{figure} [h]
    \centering
    \includegraphics[scale=.35]{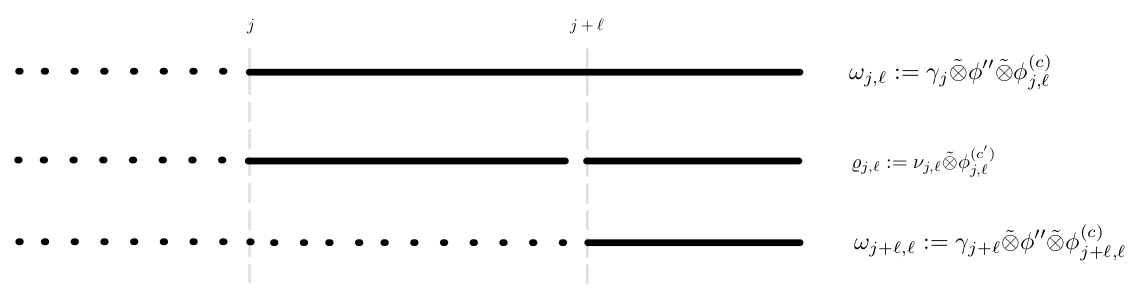}
    \caption{Upgraded version of Figure \ref{fig.localityofstates1}. States $\gamma_j \stack \phi''$ and $\nu_{j,\ell}$ are now stacked with $G$-product states, so that 0-dim charges are matched. Their locality properties remain the same.} 
    \label{fig.localityofstates2}
\end{figure}




\begin{proof} [Proof of Lemma \ref{lem.chargesequence}]
      
    We construct the auxiliary chain algebra  $\AA^{(c)}$ (with $(c)$ standing for ``charge''), as follows: The on-site spaces $  \HH_{j}^{(c)}$ and on-site unitary actions $U_{j}^{(c)}$ are 
    
    \begin{enumerate} [label=(\roman*)]
    \item  For $j \in (\ell \ZZ^+ - 1)$,
    $$
    \HH_{j}^{(c)} = \mathbb C e_{j} \oplus \mathbb C w_{j} \oplus \mathbb C v_j,
    $$
    according to an on-site unitary representation
    $$U_{j}^{(c)}(g) := \begin{pmatrix}
        1 & 0 & 0 \\ 0 & {q_j} & 0 \\ 0 & 0 & {q'_j}
    \end{pmatrix},$$

    with 
    \begin{align*}
        &q_j = {q\left[ \gamma_{j+\ell,\ell} \stack \phi''/ \nu_{j,\ell} \right]} \cdot q\left[ \nu_{j,\ell}/\gamma_{j,\ell} \stack \phi'' \right], \\
        &q'_j = {q\left[ \gamma_{j+\ell,\ell} \stack \phi''/\nu_{j,\ell} \right]}.
    \end{align*}

    \item For $j \not\in (\ell \ZZ^+-1)$,
    $$\HH_j\charge := \CC e_j$$
    where $e_j$ transforms trivially under $G$. 
    \end{enumerate}
    For each $j \in \ell \ZZ^+$, we define $G$-product states (see also Figure \ref{fig.charges})
    \begin{align} \label{eq.chargedstate}
        \phi_{j,\ell}^{(c)} (A_k) := \begin{cases}
            \langle w_k, A_k  w_k \rangle,&\qquad k \in (\ell \ZZ^+-1) \cap [j+\ell-1,\infty) \\
            \langle e_k, A_k  e_k\rangle,&\qquad \text{otherwise.} \\
        \end{cases} 
    \end{align}
    for $A_k \in \mathcal A_{\{k\}}$, and
    \begin{align}
        \phi_{j,\ell}^{(c')} (A_k) := \begin{cases}
            \langle w_k, A_k w_k \rangle,&\qquad k \in (\ell \ZZ^+-1) \cap [j+2\ell-1,\infty) \\
            \langle v_k, A_k  v_k \rangle, &\qquad k = j+\ell - 1 \\
            \langle e_k, A_k  e_k\rangle,&\qquad \text{otherwise.} \\
        \end{cases} 
    \end{align}
    
    
    \noindent The proof of the lemma now follows from item (ii) of Proposition \ref{prop.0dimcharge}, since by design we have
    \begin{equation*}
        q\left[ \phi\chargep_{j,\ell} / \phi\charge_{j,\ell}  \right] = \overline{q\left[ \nu_{j,\ell} / \gamma_{j}\stack \phi'' \right]},
    \end{equation*}
    and 
    \begin{equation*}
        q\left[ \phi\charge_{j+\ell,\ell} /\phi\chargep_{j,\ell} \right] = \overline{q\left[ \gamma_{j+\ell} \stack \phi'' / \nu_{j,\ell} \right]}.
    \end{equation*}

\end{proof}

\begin{figure} [h]
    \centering
    \includegraphics[scale=.3]{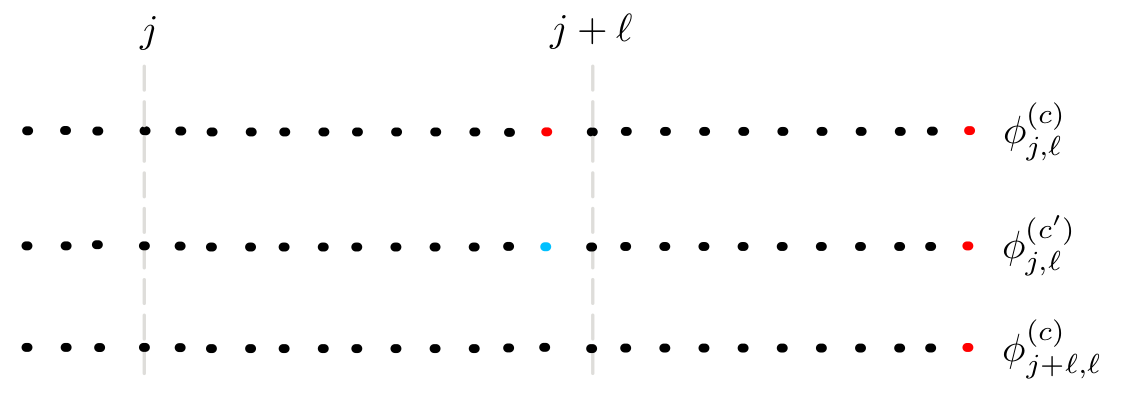}
    \caption{Definition of charged $G$-product states. Black dots represent trivial charges (on-site trivial representations), whereas colored dots represent non-trivial charges. States $\phi\charge_{j,\ell}$, $\phi\chargep_{j,\ell}$, and $\phi\charge_{j+\ell,\ell}$ differ only at site $j+\ell-1$.}
    \label{fig.charges}
\end{figure}

\begin{figure} [h]
    \centering
    \includegraphics[scale=.35]{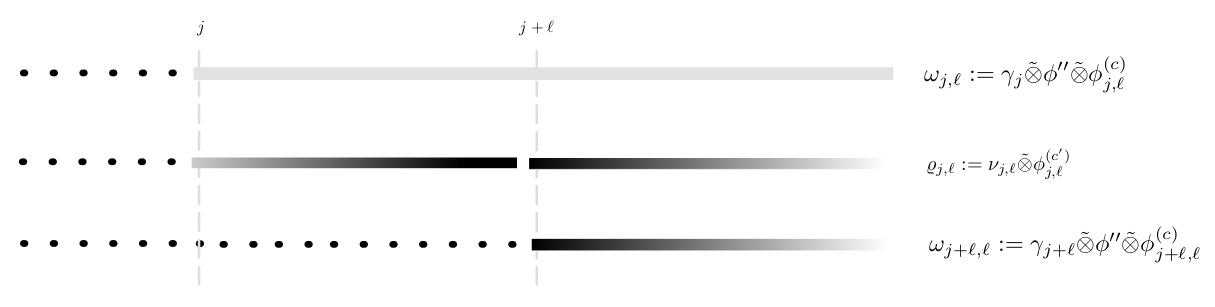}
    \caption{Upgraded version of Figure \ref{fig.closenessofstates1}. Closeness of states is preserved after stacking with charge $G$-product states. Now, they have zero relative 0-dim charge.}
    \label{fig.closenessofstates2}
\end{figure}


We will now take advantage of the sequence of states \eqref{eq: sequence of omega and rho} with equal $0$-dim relative charges, to connect these states via quasi-local operations. In particular, we show

    \begin{proposition} \label{thm.connectinghalfstates}
Let the family $\omega_{j,\ell}$ be as above, then
        \begin{align}
            \omega_{j,\ell} \circ \alpha_{G_{j}} \circ \Ad{U_{j,j+\ell}} = \omega_{(j+\ell),\ell}.
        \end{align}
        where
        \begin{enumerate} [a.]
        \item $U_{j,j+\ell}$ is a $G$-equivariant unitary  strictly localized at $[j,j+\ell]$,
          \item $G_{j}$  is a $G$-equivariant TDI satisfying
        \begin{equation*}
            \norm{G_{j}}_{\{j+\ell\}, \tilde f} < \infty,
        \end{equation*}
        for $\tilde f$ depending only on $f$, and $G_{j}(S) = 0$ whenever $S \cap L^{(j)} \neq \emptyset$.
        \end{enumerate}
	\end{proposition}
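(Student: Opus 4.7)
The plan is to insert the intermediate state $\varrho_{j,\ell}$ from Lemma \ref{lem.chargesequence} between $\omega_{j,\ell}$ and $\omega_{(j+\ell),\ell}$, and then carry out two qualitatively different interpolations: first, a quasi-local $G$-equivariant TDI supported on $R^{(j)}$ and anchored at $j+\ell$, and second, a strictly local $G$-equivariant unitary on $[j,j+\ell)$.

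For the first step, observe that both $\omega_{j,\ell}$ and $\varrho_{j,\ell}$ factorize exactly across the cut $L^{(j)}\mid R^{(j)}$: for $\omega_{j,\ell}$ this follows from the factorization of $\gamma_j$ (item 2 of Proposition \ref{cor.invariantstates}) and the product structure of $\phi''$ and $\phi\charge_{j,\ell}$, while $\varrho_{j,\ell}$ inherits the finer trisection $L^{(j)}\mid[j,j+\ell)\mid R^{(j+\ell)}$ from the preceding lemma. Moreover, on $L^{(j)}$ the two states coincide with the same $G$-product state: $\gamma_j = \phi\stack\phi'$ on $L^{(j)}$ by item 3 of Proposition \ref{cor.invariantstates}, $\nu_{j,\ell}$ restricts to the same $G$-product state on $L^{(j)}$ by construction, and the charge states $\phi\charge_{j,\ell}$ and $\phi\chargep_{j,\ell}$ differ only at site $j+\ell-1\in R^{(j)}$. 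Passing to the half-chain algebra $\AA|_{R^{(j)}}$ (which is itself a chain algebra after re-indexing, so the statements of the previous subsections transfer verbatim), the restrictions $\omega_{j,\ell}|_{R^{(j)}}$ and $\varrho_{j,\ell}|_{R^{(j)}}$ are pure $\tilde f$-SRE $G$-states, $\tilde f$-close far from $j+\ell$ by item (ii) of the preceding lemma (the charge factors already agreeing off $j+\ell-1$), and their relative zero-dimensional charge vanishes since the exact agreement on $L^{(j)}$, combined with item (ii) of Proposition \ref{prop.0dimcharge} and the identity $q[\varrho_{j,\ell}/\omega_{j,\ell}]=1$ of Lemma \ref{lem.chargesequence}, reduces the full-chain relative charge to that on $R^{(j)}$. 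Proposition \ref{prop.connectingclosestates}, applied on the half-chain, then yields a $G$-equivariant TDI with the stated $\tilde f$-bound and anchoring at $j+\ell$, whose LGA takes $\omega_{j,\ell}|_{R^{(j)}}$ to $\varrho_{j,\ell}|_{R^{(j)}}$. Extending this TDI by zero produces $G_j$ with $G_j(S)=0$ whenever $S\cap L^{(j)}\ne\emptyset$, and $\omega_{j,\ell}\circ \alpha_{G_j}=\varrho_{j,\ell}$ on the full chain follows since the left factors agree identically.

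For the second step, note that $\omega_{(j+\ell),\ell}$ also factorizes across $L^{(j)}\mid[j,j+\ell)\mid R^{(j+\ell)}$, because $\gamma_{j+\ell}$ equals $\phi\stack\phi'$ on $L^{(j+\ell)}$ and is therefore product across the extra cut at $j$. The restrictions of $\varrho_{j,\ell}$ and $\omega_{(j+\ell),\ell}$ to $L^{(j)}$ and to $R^{(j+\ell)}$ coincide as the same $G$-product states: the charge states $\phi\chargep_{j,\ell}$ and $\phi\charge_{(j+\ell),\ell}$ differ only at site $j+\ell-1$, which lies in neither $L^{(j)}$ nor $R^{(j+\ell)}$, while the other factors agree on those regions. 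Hence the two states differ only on the finite set $[j,j+\ell)$, where each restriction is a pure $G$-invariant vector state in a finite-dimensional matrix algebra, transforming under some one-dimensional character $q_\varrho,\ q_\omega\in\hom(G,U(1))$. Item (ii) of Proposition \ref{prop.0dimcharge} applied to the trisection together with $q[\omega_{(j+\ell),\ell}/\varrho_{j,\ell}]=1$ from Lemma \ref{lem.chargesequence} forces $q_\varrho=q_\omega$, so by decomposing the local Hilbert space $\otimes_{k\in[j,j+\ell)}\HH_k\stack\HH'_k\stack\HH''_k\stack\HH\charge_k$ into $G$-isotypic components and matching the two vectors inside the common $q_\varrho$-isotypic piece, one obtains a $G$-equivariant unitary $U_{j,j+\ell}\in\AA_{[j,j+\ell)}\subset\AA_{[j,j+\ell]}$ with $\varrho_{j,\ell}\circ\Ad{U_{j,j+\ell}}=\omega_{(j+\ell),\ell}$.

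Composing the two steps gives $\omega_{j,\ell}\circ\alpha_{G_j}\circ\Ad{U_{j,j+\ell}}=\varrho_{j,\ell}\circ\Ad{U_{j,j+\ell}}=\omega_{(j+\ell),\ell}$. The main subtlety I anticipate is obtaining a TDI truly supported on $R^{(j)}$, rather than merely anchored at $j+\ell$: Proposition \ref{prop.connectingclosestates} in its full-chain form only supplies the latter. This is overcome by working on the half-chain algebra from the outset, which is legitimate precisely because both states reduce to a common $G$-product state on $L^{(j)}$. The remaining checks, namely inheriting the SRE property to the half-chain, verifying $\tilde f$-closeness of the restrictions, and matching the zero-dimensional charge on the relevant pieces, are routine bookkeeping using the structure established in Lemma \ref{lem.chargesequence} and the preceding lemma.
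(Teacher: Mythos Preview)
Your proof is correct and follows essentially the same two-step strategy as the paper: interpolate through the intermediate state $\varrho_{j,\ell}$, obtaining the anchored TDI $G_j$ from Proposition~\ref{prop.connectingclosestates} (using the $\tilde f$-closeness and vanishing relative charge from Lemma~\ref{lem.chargesequence}) and the finite-volume unitary $U_{j,j+\ell}$ from the fact that $\varrho_{j,\ell}$ and $\omega_{j+\ell,\ell}$ differ only on the interval $[j,j+\ell)$ with matching $G$-charge. Your treatment of the support condition $G_j(S)=0$ for $S\cap L^{(j)}\ne\emptyset$ --- working directly on the half-chain algebra rather than asserting the TDI ``can be chosen'' with this support --- is in fact slightly more explicit than the paper's.
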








	\begin{proof} [Proof of Theorem \ref{thm.connectinghalfstates}]
    The states $\varrho_{j,\ell}$ and $\omega_{j+\ell,\ell}$ are both factorized between the regions $L^{(j)}$, $L^{(j+\ell)} \cap R^{(j)}$, and $R^{(j+\ell)}$, and they differ only on the middle of these regions, which is an interval of length $\ell$.  Since these states have zero relative charge, there is a $G$-equivariant unitary $U_{j,j+\ell} \in (\AA \stack \AA' \stack \AA'' \stack \AA\charge)_{[j,j+\ell]}$ such that 
        $$\varrho_{j,\ell} \circ \Ad{U_{j,j+\ell}} = \omega_{j+\ell,\ell}.$$
Secondly, the states $\omega_{j,\ell}$ and $\varrho_{j,\ell}$ are $\tilde f$-close far from $j+\ell$ for some $\tilde f \in \mathcal F$ and carry zero relative zero dimensional charge.  Hence 
       , they can be connected via conjugation with an $\tilde f$-anchored, $G$-equivariant LGA $\alpha_{G_{j}}(1)$ at $j+\ell$, by proposition \ref{prop.connectingclosestates}.  Moreover, since both $\varrho_{j,\ell}$ and $\omega_{j,\ell}$ factorize between $L^{(j)}$ and $R^{(j)}$, and coincide in $L^{(j)}$, the TDI can be chosen to satisfy $G_j(S)(s) = 0$ whenever $S \in \mathcal P_{\text{fin}}(L^{(j)})$.

     \end{proof}

\subsubsection{Proof of Main Theorem \ref{thm.main}}
\noindent We therefore have the ingredients to prove \ref{thm.main}. \\ 

        By Corollary \ref{thm.trivialindex.equivalencetoaproduct}, $\psi$ is $G$-stably equivalent to $\gamma_0^L \otimes \gamma_0^R$. Moreover, we use Theorem \ref{thm.connectinghalfstates} to argue that each $\gamma_0^{L(R)}$ is in the trivial phase: first of all, recall there is a sequence of $G$-equivalences
     \begin{align*}
            \omega_{0,\ell} \overset{\alpha_{G_0} \circ \Ad{U_{0,\ell}}}{\xrightarrow{\hspace{1.5cm}}} \omega_{\ell,\ell} \overset{\alpha_{G_\ell} \circ \Ad{U_{\ell,2\ell}}}{\xrightarrow{\hspace{1.5cm}}} \omega_{2\ell,\ell}  \overset{\alpha_{G_{2\ell}} \circ \Ad{U_{2\ell,3\ell}}}{\xrightarrow{\hspace{1.5cm}}} \dots
        \end{align*}
We also write 
$\omega_{\infty}=\phi \stack \phi' \stack \phi'' \stack \phi\charge_\infty$, with $\phi\charge_\infty$ being a special $G$-product state on $\AA\charge$. Then, by the definition of the states $\gamma_{j}$, it immediately follows that 
$$
\lim_{M} \omega_{M\ell,\ell}(A)=\omega_{\infty}(A), \qquad A \in \caA\stack \AA' \stack \AA'' \stack \AA\charge.
$$
We will now argue that there is a TDI $H$ such that
\begin{equation}\label{eq: application of tough lemma}
 \omega_{0,\ell} \circ \alpha_H (A)= \lim_{M} \omega_{M\ell,\ell}(A)= \omega_\infty (A)    
\end{equation}
which will conclude the proof of Theorem \ref{thm.main}.

Taking into account the supports of $U_{j,j+\ell}$, $G_j$, $j \in \ell \ZZ^+$ (cf.\ parts a. and b. of Theorem \ref{thm.connectinghalfstates}), we rewrite the automorphism relating $\omega_{N\ell,\ell}$ to $\omega_{0,\ell}$ as
\begin{align} \label{eq.infprodlocalalgebra}
            \prod\limits_{k=0}^{N} \alpha_{G_{k\ell}} \circ \Ad{U_{k\ell,(k+1)\ell}} = \left( \prod\limits_{k=0}^{N}  \alpha_{G_{k\ell}} \right) \left(  \prod\limits_{k=0}^{N}  \Ad{U_{k\ell,(k+1)\ell}} \right), 
        \end{align}
        for any $N \in \NN$. These are ordered products where the lower indices come first (e.g. as in $\prod\limits_{k=0}^{N}  \alpha_{G_{k\ell}} = \alpha_{G_0} \circ \alpha_{G_\ell} \circ \dots$).
The second factor in \eqref{eq.infprodlocalalgebra} converges strongly (i.e.\ applied to any $A\in \caA$) to an LGA $\alpha_{H}$ (cf.\ subsection \ref{sec.trivialTDI}). 
Since both factors are automorphisms (in particular, they have unit norm) and the composition of 2 LGA's is again an LGA, it suffices to prove that the first factor converges strongly to an LGA in order to get strong convergence of the product to an LGA. Hence, we need to prove that there is an TDI $G$ such that 
$$
\lim_N  \prod\limits_{k=0}^{N}  \alpha_{G_{k\ell}} (A)= \alpha_G(A)
$$
This is true provided $\ell$ is chosen large enough, as argued in Lemma \ref{lem.infiniteproduct}.

\appendix

\section{Locally Generated Automorphisms} \label{appendix.lga}

\subsection{\texorpdfstring{$\caF$}{}-functions}

Locality properties of LGAs can be understood in terms of the Lieb-Robinson bound \cite{liebrobinson}, more recently formulated by \cite{brunoamandaI, nachtergaele2010liebrobinson}. This propagation bound was usually formulated in terms of $F$-functions. Given a 1d lattice $\Gamma \subseteq \ZZ$, those are functions satisfying:
\begin{enumerate}
    \item $F: [0,\infty) \to [0,\infty)$ is non-increasing;
    \item $F$ is uniformly integrable over $\Gamma$:
    \begin{equation*}
        \sup\limits_{x \in \Gamma} \sum\limits_{y \in \Gamma} F(\text{dist}(x,y)) \le C_F' < \infty; 
    \end{equation*}
    
    \item $F$ satisfies a convolution property:
    \begin{equation*}
        \sum\limits_{z \in \Gamma} F(\text{dist}(x,z))F(\text{dist}(z,y)) \le C_F F(\text{dist}(x,y)), \qquad \text{for all } x,y \in \Gamma.
    \end{equation*}
\end{enumerate}

Denote by $\mathcal G$ the set of $F$-functions satisfying the above properties. Given an $F$-function, there is a corresponding norm $|||\cdot|||_F$ on TDI's $H$, via:
\begin{equation}
    |||{H}|||_F := \sup\limits_{t \in [0,1]} \sup\limits_{x,y\in \ZZ} \dfrac {1}{F(\text{dist}(x,y))} \sum\limits_{\substack{ S \in \mathcal P_{\text{fin}}(\ZZ)\\ S \ni \{x,y\} }} \norm{H(S)(t)}. 
\end{equation}
Recall that $\mathcal F$ is the class of non-increasing, strictly positive functions $f:\bbN^+\to\bbR^+$ that decay faster than any polynomial, as defined in section \ref{sec: TDIs}.

\begin{lemma} \label{lem.equivalenceofnorms}
    For each $f \in \mathcal F$, there is an $F$-function $F \in \mathcal F \cap \mathcal G$ such that
    \begin{equation}
        f(r) \le F(r+1), \qquad \text{and} \qquad \norm{H}_f \le |||H|||_F.
    \end{equation}
    Conversely, for each $F$-function $F \in \mathcal F \cap \mathcal G$, there is $f \in \mathcal F$ such that
    \begin{equation}
        F(r+1) \le f(r), \qquad \text{and} \qquad |||H|||_F \le \norm {H}_f. 
    \end{equation}
\end{lemma}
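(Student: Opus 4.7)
The lemma establishes an equivalence between the diameter-based norm $\norm{\cdot}_f$ and the pair-distance $F$-function norm $|||{\cdot}|||_F$ for suitably matched pairs. The plan is to handle the two directions separately, since they are mostly symmetric but the forward direction requires some genuine work.

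For the direction $|||{H}|||_F \le \norm{H}_f$ (the backward statement), the construction of $f$ from $F$ is by a simple shift: set $f(r) := F(r-1)$, so that $f$ inherits strict positivity and super-polynomial decay from $F$ and thus lies in $\mathcal F$, while the claimed pointwise relation $F(r+1) \le f(r) = F(r-1)$ is just monotonicity of $F$. The norm bound then follows from the elementary observation that $\{x, y\} \subseteq S$ forces $\diam(S) \ge \text{dist}(x, y)$, so by monotonicity of $f$,
\[
\frac{1}{F(\text{dist}(x, y))} \sum_{S \ni \{x, y\}} \norm{H(S)} \le \sum_{S \ni x} \frac{\norm{H(S)}}{f(1 + \diam(S))} \le \norm{H}_f,
\]
and taking the supremum in $x, y$ concludes.

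For the forward direction $\norm{H}_f \le |||{H}|||_F$, one has to bound a sum over $S \ni j$ (indexed by diameter) in terms of a supremum over pairs. The key idea is to associate to each $S \ni j$ with $|S| \ge 2$ a companion site $y^*(S) \in S$ realising $\max_{y \in S \setminus \{j\}} \text{dist}(j, y)$; applying the triangle inequality to the diameter-realising pair in $S$ gives $\text{dist}(j, y^*(S)) \ge \diam(S)/2$, whence $f(1 + \diam(S))^{-1} \le f(1 + 2\,\text{dist}(j, y^*(S)))^{-1}$. Re-indexing the sum over $S$ as a (possibly over-counted) sum over pairs $\{j, y\}$ yields
\[
\sum_{S \ni j} \frac{\norm{H(S)}}{f(1 + \diam(S))} \le \sum_y \frac{1}{f(1 + 2\,\text{dist}(j, y))} \sum_{S \supseteq \{j, y\}} \norm{H(S)} \le |||{H}|||_F \sum_y \frac{F(\text{dist}(j, y))}{f(1 + 2\,\text{dist}(j, y))},
\]
and $F$ is chosen so that the $y$-sum is uniformly bounded in $j$. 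Concretely, one takes $F$ to be a polynomial-prefactor inflation of a shifted $f$, for instance $F(r) := C_f (1 + r)^{-2} f(r/2)$, which decays super-polynomially (so lies in $\mathcal F$) and whose convolution property in $\mathcal G$ follows from a standard $1$D splitting argument based on $\text{dist}(x, z) + \text{dist}(z, y) \ge \text{dist}(x, y)$.

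The main obstacle lies in this forward direction: three competing constraints must be simultaneously reconciled on $F$, namely pointwise domination $f(r) \le F(r+1)$ (pushing $F$ up), uniform summability of the ratio $F(r)/f(1 + 2r)$ over $y$ (pushing $F$ down), and membership in $\mathcal F \cap \mathcal G$ (requiring the convolution identity). The resolution exploits the super-polynomial decay characterising $\mathcal F$: it provides room to absorb polynomial pre-factors on either side of the comparison while still enforcing the pointwise lower bound, up to an overall constant absorbed into the definition of $F$.
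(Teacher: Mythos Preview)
Your backward direction is correct and essentially forced.

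The forward direction has a genuine gap. First, the concrete choice $F(r) = C_f(1+r)^{-2} f(r/2)$ need not satisfy $f(r) \le F(r+1)$: that bound demands $f((r+1)/2)/f(r) \gtrsim r^2$, but super-polynomial decay alone does not force this ratio to grow quadratically. For $f(r) = \exp(-(\log r)^2)$ one computes $f(r/2)/f(r) \sim r^{2\log 2} \approx r^{1.39}$, so no finite $C_f$ works. Second, and more structurally, your re-indexing gives
\[
\norm{H}_f \;\le\; |||H|||_F \cdot \sup_j \sum_{y} \frac{F(\text{dist}(j,y))}{f\left(1+2\,\text{dist}(j,y)\right)},
\]
so the sum on the right must be at most $1$. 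But the pointwise constraint $F(r)\ge f(r-1)$ already forces the summand to dominate $f(r-1)/f(2r+1)$, which for $f(r)=e^{-r}$ equals $e^{r+2}$ and diverges. The dilation $\diam(S)\le 2\,\text{dist}(j,y^*(S))$ is the source of this mismatch, and it cannot be removed within your scheme; the two constraints you identify as ``in tension'' are in fact incompatible for fast-decaying $f$, not merely delicate.

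The paper proceeds differently. Instead of an explicit closed-form inflation of $f$, it defines $F$ recursively,
\[
\tilde F(r) := \max\left[f(r),\ \frac{1}{f(1)+1}\sum_{m=1}^{r-1} F(m)\,F(r-m)\right],\qquad F(r) := \min_{r' \le r} \tilde F(r'),
\]
so that the convolution axiom of $\mathcal G$ is enforced by construction while $F\ge f$ is automatic; the actual norm comparison is then deferred to Lemma~7 of the cited reference rather than argued via a companion-site decomposition.
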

\begin{proof}
    The proof follows closely the proof of Lemma 7 in \cite{Bachmann2022stability}, except that the authors consider functions with streched exponential decay instead of superpolynomial decay. For that reason, given $f \in \mathcal F$, one can construct an $F$-function
    \begin{equation}
        F(r) := \min\limits_{r' \le r} \tilde F(r')
    \end{equation}  
    where
    \begin{equation}
        \tilde F(r) := \max\left[ f(r), \dfrac 1{f(1)+1}  \sum\limits_{m=1,\dots,r-1} F(r-m)F(m) \right],
    \end{equation}
    which satisfies all the necessary axioms and inherits the fast decay from $f$. The proof then follows from the reference.
\end{proof}

\spacing

\subsection{Conditional expectations}

Recall that, given a region $X \subseteq \ZZ$, the set $X_r$ is the $r$-fattening of $X$, defined by $X_r := \{j \in \ZZ\ |\ \text{dist}(j,X) \le r\}$. The next proposition and theorem are consequences of the Lieb-Robinson bound \cite{liebrobinson,nachtergaele2010liebrobinson}, which can be now applied in our setting, due to the previous Lemma \ref{lem.equivalenceofnorms}. 


\begin{proposition} \label{prop.conditionalexpectation}
    Assume $\norm{H}_f < \infty$ for some $f \in \mathcal F$. For each $X \in \mathcal P_{\text{fin}}(\ZZ)$ and $n \in \NN$, there exists a linear map 
    $$\Pi_{X_n}: \AA \to \AA_{X_n}$$
    satisfying
    \begin{equation}
        \norm{\alpha_H (A) - \Pi_{X_n} (\alpha_H(A))} < \norm{A} \tilde f(n),
    \end{equation}
    for $\tilde f$ depending only on $f$.
\end{proposition}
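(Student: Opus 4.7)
The plan is to define $\Pi_{X_n}$ as a conditional expectation (partial trace) onto $\AA_{X_n}$, and then to exhibit an explicit approximant of $\alpha_H(A)$ inside $\AA_{X_n}$ whose error is controlled by the Lieb--Robinson bound; the triangle inequality then closes the argument.

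First, I would build $\Pi_{X_n}: \AA \to \AA_{X_n}$ via slice maps. On any finite local algebra $\AA_S$ with $S \in \caP_{\text{fin}}(\ZZ)$, factor $\AA_S = \AA_{S\cap X_n} \otimes \AA_{S \setminus X_n}$ and set $\Pi_{X_n}(B) := (\mathrm{id} \otimes \tau_{S\setminus X_n})(B)$, where $\tau$ is the normalized tracial state on the finite-dimensional algebra $\AA_{S\setminus X_n}$. This prescription is consistent under inclusion of $S$ and extends by density to a completely positive, norm-one projection; being a projection onto $\AA_{X_n}$, it satisfies the elementary estimate
\begin{equation*}
\norm{B - \Pi_{X_n}(B)} \le 2 \inf_{B' \in \AA_{X_n}} \norm{B - B'}, \qquad B \in \AA.
\end{equation*}

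Second, to produce a local approximant of $\alpha_H(A)$, I would truncate the TDI: let $H^{(X_n)}(s,S) := H(s,S)$ if $S \subseteq X_n$ and $0$ otherwise. Because $X_n$ is finite, $H^{(X_n)}$ is anchored in the bounded set $X_n$, and hence $\alpha_{H^{(X_n)}, 1}$ is inner on $\AA_{X_n}$ (cf.\ the discussion after Lemma \ref{lem.lgaproperties}); in particular $A_n := \alpha_{H^{(X_n)}}(A) \in \AA_{X_n}$. It then suffices to bound $\norm{\alpha_H(A) - A_n}$.

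Third and main step, I would appeal to the Lieb--Robinson bound. By Lemma \ref{lem.equivalenceofnorms}, the hypothesis $\norm{H}_f < \infty$ produces an $F$-function $F \in \mathcal F \cap \mathcal G$ with $|||H|||_F < \infty$, so the standard one-dimensional Lieb--Robinson estimate applies. A Duhamel interpolation expresses $\alpha_H(A) - \alpha_{H^{(X_n)}}(A)$ as an integral of commutators involving the difference $H(s) - H^{(X_n)}(s) = \sum_{S \not\subseteq X_n} H(s,S)$ and an appropriately dressed version of $A$. Every $S$ occurring in this sum reaches distance at least $n+1$ from $X$ (since $X \subseteq X_n$), so the Lieb--Robinson bound applied term-by-term and summed against $\norm{H}_f$ yields $\norm{\alpha_H(A) - A_n} \le \norm{A} \tilde f(n)$ for some $\tilde f \in \mathcal F$ depending only on $f$. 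Combining with the projection estimate and absorbing the factor $2$ into $\tilde f$ gives the claim. The main obstacle is precisely the bookkeeping in this last step: one must track how the decay of $F$ survives the Lieb--Robinson dressing and verify that the double sum over $S \not\subseteq X_n$ and sites $x \in X$ retains superpolynomial decay in $n$. In one dimension this reduces to controlling tails of the form $\sum_{k \ge n} k^{m} F(k)$, which remain in $\mathcal F$ by the fast-decay hypothesis.
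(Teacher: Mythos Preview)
Your approach is correct but takes a genuinely different route from the paper's two-line proof. The paper first records the Lieb--Robinson commutator estimate $\norm{[\alpha_H(A),B]} \le \norm{A}\,\norm{B}\,\tilde f(\mathrm{dist}(X,Y))$ for $A\in\AA_X$, $B\in\AA_Y$, and then invokes Corollary~4.4 of \cite{brunoamandaI}, which directly converts such a commutator bound into the conditional-expectation approximation $\norm{C-\Pi_{X_n}(C)}\le \tilde f(n)$. You instead construct an explicit local approximant $A_n=\alpha_{H^{(X_n)}}(A)\in\AA_{X_n}$ and control $\norm{\alpha_H(A)-A_n}$ by Duhamel plus Lieb--Robinson, then use the contraction property of $\Pi_{X_n}$. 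Your route is more self-contained (no external corollary) and yields a concrete element of $\AA_{X_n}$ close to $\alpha_H(A)$; the paper's route is cleaner bookkeeping-wise, since the commutator estimate packages everything into a single distance and the cited corollary does the rest.

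One imprecision to fix: the sentence ``every $S$ occurring in this sum reaches distance at least $n+1$ from $X$'' is not what the Lieb--Robinson bound needs. The condition $S\not\subseteq X_n$ only guarantees that $S$ contains \emph{some} point at distance $>n$ from $X$; the set distance $d(S,X)$ can still be zero (take $S$ an interval straddling $X$ and $X_n^c$). In the Duhamel sum you must therefore split: either $d(S,X)$ is large and the Lieb--Robinson commutator bound applies, or $d(S,X)$ is small, in which case $\mathrm{diam}(S)$ is forced to exceed roughly $n$ and the smallness comes from $\norm{H(s,S)}\le f(1+\mathrm{diam}(S))$ via the $\norm{\cdot}_f$-norm. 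You already flag the bookkeeping as the main obstacle, so this is not a gap in strategy, but the sentence as written would mislead a reader into thinking the LR bound alone closes the estimate.
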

\begin{proof}
    The Lieb-Robinson bound implies $\alpha_H$ is a \textit{quasi-local map}, satisfying:
    \begin{equation}
        \norm{[\alpha_H(A),B]} < \norm{A}\norm{B} \tilde f(\text{dist(X,Y)}), \qquad A \in \mathcal A_X,\ B \in \AA_Y.
    \end{equation}
    The result then follows by Corollary 4.4. of \cite{brunoamandaI}.
\end{proof}

\subsection{Anchored TDI's}

For a TDI $H$ satisfying $\norm{H}_{X,f}<\infty$ for a finite set $X \in \mathcal P_{\text{fin}} (\ZZ)$, the sum
$$Z(s) := \sum\limits_{S \in \mathcal P_{\text{fin}}(\ZZ)} H(S)(s)$$
converges to an Hermitian element of $\AA$. An even stronger result holds:

\begin{proposition} \label{prop.localityofanchoredTDI}
    Let $H$ be such that $\norm{H}_{X,f} < \infty$, for a finite $X \in \mathcal P_{\text{fin}} (\ZZ)$. Then it holds: 
    \begin{equation}
        || Z(s) -\Pi_{X_r} Z(s) || \leq \tilde f(r) \norm{Z(s)},
    \end{equation}
    for $\tilde f$ depending only on $f$, and $\Pi$ is the map from Proposition \ref{prop.conditionalexpectation}. 
\end{proposition}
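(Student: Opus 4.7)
The plan is to decompose $Z(s)$ using the anchoring and then exploit the fast decay of $f$. Since $H$ is $X$-anchored, only terms $H(s,S)$ with $S \cap X \neq \emptyset$ survive, so $Z(s) = \sum_{S:\, S \cap X \neq \emptyset} H(s,S)$. The conditional expectation $\Pi_{X_r}:\AA \to \AA_{X_r}$ of Proposition \ref{prop.conditionalexpectation} is a norm-one contraction that restricts to the identity on $\AA_{X_r}$, so each term $H(s,S)$ with $S \subset X_r$ is left fixed and cancels out of the difference $Z(s) - \Pi_{X_r}(Z(s))$. The surviving sets $S$ must intersect both $X$ and $X_r^c$, and therefore satisfy $\diam(S) > r$.

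Second, using $\norm{\mathds 1 - \Pi_{X_r}} \le 2$, the monotonicity of $f$, and the definition of $\norm{\cdot}_{X,f}$, I would estimate
\begin{align*}
    \norm{Z(s) - \Pi_{X_r} Z(s)} &\leq 2 \sum_{j \in X} \sum_{\substack{S \ni j \\ \diam(S) > r}} \norm{H(s,S)} \\
    &\leq 2\, f(1+r) \sum_{j \in X} \sum_{S \ni j} \frac{\norm{H(s,S)}}{f(1+\diam(S))} \\
    &\leq 2|X|\, f(1+r)\, \norm{H}_{X,f}.
\end{align*}
Setting $\tilde f(r) := 2|X|\, f(1+r)$, which lies in $\caF$ and depends only on $f$ and on the finite anchoring set $X$, yields the natural form of the bound, $\norm{Z(s) - \Pi_{X_r} Z(s)} \le \tilde f(r)\, \norm{H}_{X,f}$.

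The main obstacle is matching this estimate with the precise right-hand side $\tilde f(r) \norm{Z(s)}$ appearing in the statement. In general $\norm{H}_{X,f}$ cannot be bounded by $\norm{Z(s)}$ because of possible cancellations among interaction terms $H(s,S)$ with overlapping supports. Bridging this gap either requires a further assumption — for instance that $H$ is a canonical generator for which $\norm{H}_{X,f} \le C(f)\,\norm{Z(s)}$, as is the case for the natural TDI attached to a strictly anchored LGA via the unitary $W(s)$ discussed in Section \ref{ALEs} — or one should read the proposition with $\norm{H}_{X,f}$ in place of $\norm{Z(s)}$, which is exactly the estimate the argument above delivers.
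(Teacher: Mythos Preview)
Your direct decomposition is correct and is in fact more elementary than the paper's route. The paper does not split the sum and exploit that $\Pi_{X_r}$ fixes $\caA_{X_r}$; instead it first derives a commutator bound of the form $\norm{[Z(s),A]} \leq \norm{A}\norm{Z(s)}\tilde f(\text{dist}(X,Y))$ for $A \in \caA_Y$ with $X\cap Y=\emptyset$ (deferred to Lemma~5.5 of \cite{wojciech.thoulesspump}), and then invokes Corollary~4.4 of \cite{brunoamandaI}, the general principle that an element almost commuting with $\caA_{X_r^c}$ is well approximated by its conditional expectation onto $\caA_{X_r}$. Your argument bypasses this two-step detour entirely; the price is only the mild overcounting by $|X|$, which the commutator route incurs as well.

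Your observation about the factor $\norm{Z(s)}$ versus $\norm{H}_{X,f}$ is well taken. The natural output of both your computation and the commutator estimate is a bound proportional to $\norm{H}_{X,f}$, and there is no way to replace this by $\norm{Z(s)}$ in general (cancellations can make $Z(s)$ small while the individual terms are not). The paper simply cites the reference for the commutator bound with $\norm{Z(s)}$ and does not address this. In the only application, equation~\eqref{eq.localityofunitaryatcut}, the prefactor is immediately absorbed into $\tilde f$, so the distinction is immaterial there; your reading of the proposition with $\norm{H}_{X,f}$ on the right is the honest one, and your proof delivers exactly that.
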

\begin{proof}
    The proof follows from Corollary 4.4 of \cite{brunoamandaI}, once we establish the commutator bounds:
        $\norm{ \left[ Z(s), A  \right] } < \norm{A} \norm{Z(s)} \tilde f(\text{dist} (X, Y)),\ A \in \AA_Y,$ for $X \cap Y = \emptyset$. This follows from direct computation (cf.\ in Lemma 5.5 of \cite{wojciech.thoulesspump}).
\end{proof}

\spacing 

\noindent As a consequence of Proposition \ref{prop.localityofanchoredTDI}, the solution $W(s)$ of
$$
W(s) = i\int_0^s du   W(u)Z(u), \qquad W(0)=\mathds 1,
$$
satisfies
\begin{equation}\label{eq.localityofunitaryatcut}
|| W(s) -\Pi_{X_r} W(s) || \leq \tilde f(r),
\end{equation}
uniformly in $|s| \le 1$, for $\tilde f$ depending only on $f$. 

\spacing


\begin{theorem} \label{thm.lgadecomposition}
    Let $\alpha_H$ be an LGA generated by TDI $H$ satisfying 
    $\norm{H}_f < \infty,\ f \in \mathcal F.$
    Consider a bipartition $L=L^{(j)},R=R^{(j)}$ of the lattice. There is a decomposition
    \begin{equation}
        \alpha_H = \alpha_{H_{L}} \otimes \alpha_{H_R} \circ \Ad{W}
    \end{equation}
    where $\alpha_{H_{L}}$ (respectively, $\alpha_{H_R}$) is an automorphism on $\AA_{L}$ ($\AA_{R}$), and $W \in \AA$. Furthermore, for any $X \in \mathcal P_{\text{fin}}(\ZZ)$, it holds
    \begin{equation}
        \norm{WAW^* - A} \le \norm {A} \tilde f(\text{dist }X,j), \qquad A \in \mathcal \AA_{X},
    \end{equation}
    for $\tilde f$ depending only on $f$. 
\end{theorem}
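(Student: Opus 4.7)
The plan is to split the TDI $H$ into three pieces:
\begin{align*}
    H_L(s,S) &:= H(s,S) \chi(S \subset L), \\
    H_R(s,S) &:= H(s,S) \chi(S \subset R), \\
    K(s,S) &:= H(s,S) - H_L(s,S) - H_R(s,S),
\end{align*}
so that every term of $K$ straddles the cut $\{j-1,j\}$, meaning $S \cap L \ne \emptyset$ and $S \cap R \ne \emptyset$. The pieces $H_L$ and $H_R$ are TDIs with $f$-norms bounded by $\|H\|_f$, and they generate LGAs $\alpha_{H_L}$ and $\alpha_{H_R}$ on the subalgebras $\AA_L$ and $\AA_R$, respectively.

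Next, I would study the family of automorphisms
$$\gamma_s := \left(\alpha_{H_L,s} \otimes \alpha_{H_R,s}\right)^{-1} \circ \alpha_{H,s}.$$
A direct computation (using \eqref{eq: heisenberg} for both factors) shows $\gamma_s$ is itself a family of LGAs generated by the TDI
$$\tilde K(s,S') := \left(\alpha_{H_L,s} \otimes \alpha_{H_R,s}\right)^{-1}\!\left[\,\text{the $S'$-local part of conjugated } K(s,\cdot)\,\right],$$
obtained by passing each term $K(s,S)$ through the quasi-local map $(\alpha_{H_L,s} \otimes \alpha_{H_R,s})^{-1}$ and reorganizing into local pieces. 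By Lemma \ref{lem.lgaproperties} and the Lieb-Robinson locality of $(\alpha_{H_L,s} \otimes \alpha_{H_R,s})^{-1}$, this rearrangement produces a TDI with finite $\tilde f$-norm for some $\tilde f$ depending only on $f$.

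The key observation is that $\tilde K$ is in fact $\{j\}$-anchored in the sense $\|\tilde K\|_{\{j\},\tilde f} < \infty$ (possibly after updating $\tilde f$). Indeed, each straddling $S$ in $\mathrm{supp}(K)$ contains the edge $\{j-1,j\}$ and has a norm controlled by $\|H\|_f f(1+\mathrm{diam}(S))$; Lieb-Robinson spreading of $S$ under $(\alpha_{H_L,s} \otimes \alpha_{H_R,s})^{-1}$ yields quasi-local terms centered at $S$, whose contributions to an anchor region $S' \ni j$ decay in $\mathrm{dist}(S',j)$ with a superpolynomial rate inherited from $f$. Summing geometrically over $S$ gives the required $\{j\}$-anchored bound. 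Therefore, by the discussion at the end of Section \ref{subsection.lga},
$$\gamma_s = \Ad{W(s)}, \qquad W(s) \in \AA, $$
for a unitary satisfying the locality estimate \eqref{eq.localityofunitaryatcut}, namely $\|W(s) - \Pi_{\{j\}_r} W(s)\| \le \tilde f(r)$. Taking $s=1$ and rearranging gives $\alpha_H = \alpha_{H_L} \otimes \alpha_{H_R} \circ \Ad{W}$, which is the asserted decomposition.

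For the final locality bound on $\Ad{W}$, given $A \in \AA_X$, write $W = \Pi_{\{j\}_r} W + (W - \Pi_{\{j\}_r} W)$ with $r := \mathrm{dist}(X,j)/2$ (say). The first summand commutes with $A$ when $\mathrm{dist}(X,j) > r$, and the second has norm $\le \tilde f(r)$, so $\|WAW^* - A\| \le 2\|A\|\tilde f(r)$, which (after updating $\tilde f$) yields the claim. The main obstacle I anticipate is the anchored-norm bookkeeping in the step identifying $\tilde K$ as a $\{j\}$-anchored TDI: one must carefully route the quasi-local tails produced by the Lieb-Robinson spreading into anchor supports $S' \ni j$, using simultaneously the $f$-decay of $H$ in $\mathrm{diam}(S)$ and the fact that $S$ already intersects both sides of the cut. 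This is standard but technical, and is where the distinction between finite-$f$-norm and finite-$\{j\}$-anchored-$\tilde f$-norm is actually earned.
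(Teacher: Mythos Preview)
Your proposal is correct and follows essentially the same approach as the paper's own proof: define $H_L,H_R$ by restriction, observe that $(\alpha_{H_L,s}\otimes\alpha_{H_R,s})^{-1}\circ\alpha_{H,s}$ is generated by a TDI anchored at the cut, and then invoke the locality estimate \eqref{eq.localityofunitaryatcut} for anchored TDIs. The paper's proof is in fact terser than yours---it simply asserts the anchoring and cites \eqref{eq.localityofunitaryatcut}---so your additional detail on the Lieb-Robinson bookkeeping for $\tilde K$ and the derivation of the commutator bound from $\|W-\Pi_{\{j\}_r}W\|\le\tilde f(r)$ only makes the argument more explicit.
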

\begin{proof}
We can define $\alpha_{H_{R}}$ as the LGA generated by TDI
        \begin{align*}
            H_{R} (S) := \begin{cases}
                H (S),&\ S \in \mathcal P_{\text{fin}}(R), \\
                0,&\ \text{otherwise}.
            \end{cases}
        \end{align*}
Similarly for $\alpha_{H_{{L}}}$. Then the unitary $W=W(1)$ is generated by a TDI that is $\tilde f$-anchored at the boundary of the bipartition $L,R$, and the result follows from \eqref{eq.localityofunitaryatcut}. 
\end{proof}


\section{Borel cohomology} \label{appendix.borelcohomology}

	Let $G$ be a topological group equipped with Haar measure. For any left topological G-module $M$, we denote by $C^n(G,M)$ the set of measurable functions from $G \times \dots \times G \to M$, where the product runs over $n$ copies of $G$, with respect to the Borel sigma algebras in both $G\times G \times \dots \times G$ and $M$. Elements of $C^n(G,M)$ are called $n-$(measurable) cochains. The pair $(C^n(G,M),\cdot)$ is a group for every $n$, so that co-boundary group homomorphisms can be defined as
	\begin{align*}
		\nonumber    d^{n+1}:\ C^n(G,M)&\to C^{n+1}(M,G) \\
		\mu &\mapsto (d^{n+1} \mu),
	\end{align*}
	given explicitly by
	\begin{align} \label{eq.cocyclecondition}
		\nonumber (d^{n+1}\mu) (g_1,\dots,g_{n+1}) := &\mu (g_2,\dots,g_{n+1}) \\ &+ \sum\limits_{i=1}^n (-1)^i \mu(g_1,\dots,g_ig_{i+1},\dots,g_{n+1}) + (-1)^{n+1} \mu(g_1,\dots,g_n),
	\end{align}
    where we are using additive notation for the abelian module operation. The map
		\begin{equation}
			d^{n+1} \circ d^n:\ C^{n}(G,M) \to C^{n+2}(G,M)
		\end{equation}
		is identically zero, from which we conclude there is a cochain complex
	$\dots \leftarrow C^{n} \leftarrow C^{n-1} \leftarrow \dots,$
	with $n$-th (Borel) cohomology group (cf.\ \cite{moore64,stashef}) defined as 
	\begin{equation}
		H^n_{\text{Borel}}(G,M) := \dfrac{Z^n(G,M)}{B^n(G,M)},
	\end{equation}
	with 
	\begin{align*}
		Z^n(G,M) := \ker(d^{n+1}), \\
		B^n:= \begin{cases} 0,\ &\text{if } n=0, \\
			\text{im}(d^n),\ &\text{if } n\ge 1.
		\end{cases}
	\end{align*}

    Hence the n-th Borel cohomology group (Mackey-Moore cohomology group, cf.\ \cite{Cattaneo.mackeymoorecohomology}) is a group of equivalence classes of n-cocycles (elements of $Z^n(G,M)$) with respect to equivalence module n-coboundaries (elements of $B^n(G,M)$). \par
    
    \begin{remark}
        If $G$ is a finite group, Borel cohomology resumes to standard finite group cohomology, when $G$ is given the discrete topology. If G is a compact Lie group, the 2-nd Borel cohomology group $H^2_{\text{Borel}} (G,U(1))$ is a finite group \cite{moore64}.
    \end{remark}

    \begin{remark}
        SPT-phases in 1d are classified by equivalence classes of projective representations of $G$, which in turn are classified by the second Borel cohomology group. We then explicitly write the cocycle condition for $n=2$: let $c:\ G\times G \to U(1)$ be an arbitrary cochain. Then, by definition \eqref{eq.cocyclecondition}:
        \begin{equation*}
            (d^3\mu) (g,h,k) = \mu(h,k) - \mu(gh,k) + \mu(g,hk) - \mu(g,h).
        \end{equation*}
        Since 2-cocycles are elements of $\ker {d^3}$, they consequently satisfy:
        \begin{equation} \label{eq.cocyclecondition2}
            \mu(h,k) + \mu(g,hk) = \mu(gh,k) + \mu(g,h).
        \end{equation}
    \end{remark}
    

    \section{Classification of CUP-reps} \label{appendix.cups}

\begin{lemma}
    Let $(\mathbf V,\HH)$ be a continuous unitary projective representation of $G$, and assume $\mu$ and $\nu$ are Borel multipliers associated to two different lifts. Then they are related via a Borel coboundary.
\end{lemma}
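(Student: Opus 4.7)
The plan is to extract from the two lifts a measurable map $\lambda: G \to U(1)$ whose coboundary equals $\mu \nu^{-1}$. The argument is essentially a routine cocycle computation, but the real point is to verify measurability carefully.

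First I would observe that since $u_\mu$ and $u_\nu$ are two Borel lifts of the same CUP-rep $\mathbf V$, we have $\mathscr P (u_\mu(g)) = \mathbf V(g) = \mathscr P(u_\nu (g))$ for every $g \in G$. By definition of the projection $\mathscr P$, this forces $u_\mu(g)$ and $u_\nu(g)$ to agree up to a phase: there exists $\lambda(g) \in U(1)$ with
\begin{equation*}
    u_\mu(g) = \lambda(g) u_\nu(g), \qquad g \in G,
\end{equation*}
and $\lambda(e) = 1$ follows from the normalization $u_\mu(e) = u_\nu(e) = \mathds{1}$.

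Next I would check that $\lambda$ is Borel measurable. Fix any unit vector $\xi \in \HH$; then
\begin{equation*}
    \lambda(g) = \langle \xi, u_\mu(g) u_\nu(g)^* \xi \rangle,
\end{equation*}
because $u_\mu(g) u_\nu(g)^* = \lambda(g) \mathds{1}$. Since $u_\mu$ and $u_\nu$ are Borel measurable maps into $\UU(\HH)$ with the strong operator topology, composition, adjoint, and evaluation against fixed vectors all preserve measurability, so $\lambda: G \to U(1)$ is measurable.

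Finally I would compute the relation between $\mu$ and $\nu$ using the multiplier identity. On the one hand,
\begin{equation*}
    u_\mu(g) u_\mu(h) = \lambda(g)\lambda(h)\, u_\nu(g) u_\nu(h) = \lambda(g)\lambda(h) \nu(g,h)\, u_\nu(gh),
\end{equation*}
and on the other hand,
\begin{equation*}
    u_\mu(g) u_\mu(h) = \mu(g,h)\, u_\mu(gh) = \mu(g,h)\lambda(gh)\, u_\nu(gh).
\end{equation*}
Comparing these two expressions (and using that $u_\nu(gh)$ is unitary, hence nonzero), we get
\begin{equation*}
    \mu(g,h) \nu(g,h)^{-1} = \frac{\lambda(g)\lambda(h)}{\lambda(gh)},
\end{equation*}
which is precisely the statement that $\mu \nu^{-1}$ is the Borel 2-coboundary of $\lambda$, so $[\mu] = [\nu]$ in $H^2_{\text{Borel}}(G,U(1))$. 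The only subtle point is the measurability of $\lambda$, but as noted above this follows immediately from the measurability of the two lifts.
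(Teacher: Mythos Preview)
Your proof is correct and follows essentially the same approach as the paper: extract the phase $\lambda(g)$ relating the two lifts, verify its measurability, and compute the coboundary relation. Your treatment is in fact slightly more explicit, both in the measurability step (writing $\lambda(g) = \langle \xi, u_\mu(g) u_\nu(g)^* \xi \rangle$) and in the cocycle computation (expanding $u_\mu(g)u_\mu(h)$ two ways rather than manipulating a single long product).
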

\begin{proof}

    Let $v_\mu$ and $v_\nu$ be two Borel lifts. For each $g\in G$, it holds that 
    $$\mathscr P (v_\mu(g) v_\nu(g)^{-1}) = 1,$$
    since $\mathscr P$ is a group homomorphism. This implies $v_\mu(g) v_\nu(g)^{-1} = \lambda (g)$ for some $\lambda(g) \in U(1)$. Immediately,
    \begin{align*}
    \mathds 1 &= \mu(g,h) \nu(g,h)^{-1} v_\mu(g) v_\mu(h) v_\mu(gh)^{-1} v_\nu(gh) v_\nu(h)^{-1} v_\nu(g)^{-1} \\ &= \mu(g,h) \nu(g,h)^{-1} \lambda(gh)^{-1} \lambda(g)\nu(h) \cdot \mathds 1.
    \end{align*}
    Furthermore, the map $\lambda \cdot \mathds 1$ is measurable, since it is a composition of measurable maps. 
\end{proof}

Thus the map $\kappa$ from $\text{CUP}_G$ to $H^2_{\text{Borel}} (G,U(1))$ which associates a CUP-rep to its Borel cohomology class is well-defined. We proceed in proving it is an isomorphism:

    \begin{proof} [Proof of Proposition \ref{thm.isomorphism.CUP.and.H2}]
        Let $\kappa([\mathbf U]) = [\mu]$ and $\kappa([\mathbf V]) = [\nu]$. Then it holds that, for representatives $U$ and $V$, the Borel representation $U \otimes V$ has multiplier $\mu + \nu$. Hence $\kappa([\mathbf U \otimes \mathbf V]) = [\mu + \nu] = [\mu] + [\nu]$, and $f$ is a group homomorphism. \par Furthermore, each cohomology class can be realized as a finite dimensional continuous projective representation of $G$ (proof of Lemma \ref{lem.surjective}), hence $\kappa$ is surjective. It is also injective since, if $\kappa([\mathbf U]) = [1]$, then one can choose phase $\beta: G \to U(1)$ and unitaries $W_g := \beta(g)\mathds 1$ such that $W_g \otimes U$ is a lift of a linear representation of $G$. \par
    \end{proof}


    \par



\section*{Data Availability}
Data sharing is not applicable to this article as no new data were created or analyzed in this study.
 
	\bibliographystyle{myplainurl.bst}
	\bibliography{main}
	
\end{document}